\documentclass[aps,epsf,amsmath,superscriptaddress,citeautoscript,showpacs,floatfix,nofootinbib,prl,twocolumn]{revtex4-2}
\usepackage[T1]{fontenc}
\usepackage{txfonts}

\usepackage[dvipsnames]{xcolor}
\definecolor{red}{rgb}{0.9, 0,0}
\definecolor{cerulean}{rgb}{0., 0.42,0.9}
\definecolor{navy}{rgb}{0.05, 0.05,0.8}

\usepackage{setspace}
\usepackage[colorlinks]{hyperref}
\hypersetup{
    colorlinks = true,
    citecolor  = red,
	linkcolor  = navy
}

\usepackage{bm}

\usepackage{multirow}
\usepackage{slashed}
\usepackage{graphics}
\usepackage{amsmath}
\usepackage{amssymb}
\usepackage{latexsym}
\usepackage{mathtools}
\usepackage{dsfont}
\usepackage{cleveref}
\usepackage{hyperref}
\usepackage{amsfonts}
\usepackage{enumitem}
\usepackage{xstring} 
\usepackage{xspace} 
\usepackage[normalem]{ulem}
\usepackage{fontawesome}
\usepackage{braket}
\usepackage{mathrsfs}

\usepackage{amsthm}

\newtheorem{lemma}{Lemma}
\newtheorem{theorem}{Theorem}
\newtheorem{remark}{Remark}

\newcommand{\be}{\begin{equation}}
\newcommand{\ee}{\end{equation}}
\newcommand{\bs}{\begin{split}}
\newcommand{\es}{\end{split}}

\newcommand{\simgt}{\lower.5ex\hbox{$\; \buildrel > \over \sim \;$}}
\newcommand{\simlt}{\lower.5ex\hbox{$\; \buildrel < \over \sim \;$}}

\newcommand{\comment}[1]{}

\usepackage{subfiles} 

\begin{document}

\title{Universal Bound for Entanglement Generation}
\author{Alfred Li}
\email{ali2@caltech.edu}
\affiliation{Burke Institute for Theoretical Physics, California Institute of Technology, Pasadena, California 91125, USA}

\author{Daisuke Miki}
\email{dmiki@caltech.edu}
\affiliation{Burke Institute for Theoretical Physics, California Institute of Technology, Pasadena, California 91125, USA}

\author{Yanbei Chen}
\email{yanbei@caltech.edu}
\affiliation{Burke Institute for Theoretical Physics, California Institute of Technology, Pasadena, California 91125, USA}

\date{\today}

\begin{abstract}
    We derive a universal condition for entanglement generation under general bilinear interactions in the presence of white thermal noise. While various protocols have been proposed to enhance the amount of generated entanglement, it remains unclear whether they can also relax the threshold for entanglement generation itself. Using a Gorini-Kossakowski-Sudarshan-Lindblad description, we analyze general multimode systems and derive a separability-preserving condition for bilinear interactions under white thermal noise. As an application to gravity-induced entanglement, we show that the gravitational interaction must dominate over thermal noise for entanglement to arise. In particular, this bound cannot be relaxed by changing the initial state or by introducing mediator systems, although such ingredients may enhance the amount of entanglement once it is generated. These results establish a general limitation on entanglement-generation protocols in thermal environments.
\end{abstract}

\maketitle

\textit{Introduction---}
Recently, gravity-induced entanglement has emerged as a particularly promising route toward experimentally probing the quantum nature of gravity \cite{bose2017,marletto2017, bose2025rmp, marletto2025}:
if two initially separable systems become entangled solely through their mutual gravitational interaction, then gravity cannot be described as a channel implementable by local operations and classical communication (LOCC). In this sense, entanglement generation provides an operational witness of nonclassicality in gravity and has motivated extensive studies from several perspectives, including gravitons and quantum field theory, the consistency of complementarity and causality \cite{mari2016,belenchia2018,danielson2022,sugiyama2023,sugiyama2024}, and the dynamical quantization of gravity under unitarity and Lorentz invariance \cite{carney2022}. 
{Related operational tests have also been proposed that do not rely directly on entanglement generation, including tests of LOCC-incompatibility of gravitational dynamics \cite{lami2024} and of whether gravity can act as a nonclassical channel for quantum information \cite{mari2025}.}

A major challenge, however, is that gravity is extremely weak. In Ref.~\cite{miao2020}, 
the generation of gravity-induced entanglement in two-mode Gaussian systems was shown to require a condition that the gravitational interaction overcome thermal decoherence, 
$\hbar Gm/d^3>\gamma k_BT$, where $G$ is the gravitational constant, $m$ is the mass, $d$ is the separation between two masses, $\gamma$ is the mechanical dissipation rate, and $T$ is the environmental temperature. This condition is extremely demanding,
requiring a stringent combination of large
mass, small separation, low temperature, low damping, and strong
decoherence suppression \cite{krisnanda2020observable,qvarfort2020mesoscopic,datta2021signatures,miki2024feasible,plato2023enhanced,tilly2021qudits,schut2022resilience,rijavec2021decoherence}. Even taking $m/d^3$ to be of order a
dense solid density, $\rho\sim 10^4\,{\rm kg/m^3}$, gives
$\gamma T \lesssim 10^{-17}\,{\rm K/s}$. 
%
Recent optimization of oscillator shapes have lead to an enhancement, but only within one order of magnitude~\cite{tang2025formfactor}. 
Existing experimental platforms demonstrate this difficulty: milligram-scale pendula and torsional sensors can reach large quality
factors, about $Q\sim10^6$ for milligram pendula and up to $Q\sim10^5$ for torsional pendula, but Hz-scale devices with such $Q$ would still require temperatures as low as $T\sim10^{-11}\,{\rm K}$ \cite{schmole2016micromechanical,matsumoto2019mg,catano2020highq,komori2020atto,agafonova2026milligram}. Very-low-frequency torsional designs partially relax this constraint: TOBA's Phase III design will have $\omega\sim 2\pi \times 10^{-3}\,{\rm Hz}$ and $Q\sim 10^8$; corresponding to $\gamma\sim 10^{-10}\,{\rm s}^{-1}$, and $T\sim10^{-7}\,$K~\cite{Oshima:2023Du}. Magnetically levitated systems
offer cryogenic operation and quality factors above $10^7$, and have enabled precision tests of weak gravity with mesoscopic test masses, but 
must address the
suppression of residual magnetic interactions below the gravitational signal \cite{vinante2020ultralow,hofer2023highq,timberlake2021modified,fuchs2024measuring}. Finally, optically levitated nanoparticles have achieved motional
ground-state cooling and coherent quantum control in the $10^5$ Hz range, but the currently used  small
masses ($\sim10^{-18}\rm{kg}$) and total motional heating rates ($\Gamma_{\text{heat}}/(2\pi)\sim 10^{3}-10^{4} \ \rm{Hz}$) 
currently makes gravitational interaction too
weak compared with environmental decoherence
\cite{delic2020cooling,tebbenjohanns2021quantum,piotrowski2023simultaneous}. 

%

In response to this, several approaches have been proposed to enhance gravity-induced entanglement, including the use of a massive mediator \cite{pedernales2022}, higher-order interactions in optomechanical systems \cite{kaku2023}, inverted harmonic potentials \cite{fujita2025,shiomatsu2025}, input optical squeezing \cite{hatakeyama2026}, conditional momentum squeezing of massive objects \cite{fukuzumi2026}, and pulsed optomechanical protocols \cite{miki2026}. 
In the massive mediator proposal \cite{pedernales2022}, the mediator can enhance the effective gravitational interaction, but thermal noise has not been fully incorporated into the analysis. In the other approaches listed above, the amount of entanglement can be enhanced once it is generated, but the onset of entanglement remains constrained by thermal noise and is still governed by the same condition.  At the same time, related analyses of mediated systems have shown that the introduction of an additional oscillator can also suppress long-time-averaged bipartite entanglement in certain regimes \cite{christopher2025}. These results raise a basic and, to our knowledge, still unresolved question: can any of these methods relax the fundamental threshold for entanglement generation itself, or do they merely amplify entanglement after the separability-breaking threshold has already been crossed?

In this work, we address this question in a general setting. We derive a universal separability-preserving condition for bilinearly coupled systems subject to white thermal noise, without assuming any particular initial state. The result establishes a state-independent threshold for entanglement generation, set by the competition between coherent interaction and thermal decoherence. Applied to gravity-induced entanglement, this yields a specific quantitative requirement: the gravitational coupling must exceed the corresponding thermal-noise bound before entanglement can arise. Different initial states or mediator systems may enhance the entanglement once generated, but they cannot lower this threshold. In this sense, our result extends the criterion of Ref.~\cite{miao2020} from its original setting to a broad, state-independent separability-preservation bound.
%
%
Closely related bounds have been discussed in Refs.~\cite{kafri2013,kafri2014,diosi2017,fabiano2026} for semiclassical gravity models: a Newtonian interaction mediated by an LOCC or otherwise non-entangling channel must be accompanied by sufficient force noise. While the noise arises from a different mechanism, the underlying balance is the same—coherent gravitational interaction versus decoherence. 

In the rest of this paper, we first derive the bound in the covariance-matrix formalism for multimode Gaussian systems, generalizing the two-mode condition of Refs.~\cite{kafri2013,kafri2014}. We then extend it to general systems using a GKSL description, following Ref.~\cite{diosi2017}, and construct an explicit LOCC representation that yields the separability condition. Finally, we show that mediator degrees of freedom can be absorbed into either side of a fixed bipartition; mediators may reshape the dynamics and/or enhance the amount of entanglement, but cannot evade the universal threshold.

\textit{General Gaussian Entanglement---}We first consider entanglement between two multimode Gaussian systems subject to a bilinear interaction and white thermal noise, extending the two-mode treatment of Refs.~\cite{kafri2013,kafri2014}. We define the canonical column vector $\hat{\bm{\xi}} = (\hat{\bm{\xi}}_{A}^T,\hat{\bm{\xi}}_{B}^T)^T$, where $\hat{\bm{\xi}}_{i} = (\hat{x}^{i}_1,\hat{p}^{i}_1,\ldots,\hat{x}^{i}_{n},\hat{p}^{i}_{n})^T$ for $i\in\{A,B\}$, with canonical commutation relation
\begin{equation}
    \left[\hat{\bm{\xi}},\hat{\bm{\xi}}^T\right] = i\bm{\Omega} \equiv i(\bm{\eta}_{A}\oplus\bm{\eta}_{B}), \hspace{0.3cm} \bm{\eta}_{i} \equiv \bigoplus_{j=1}^{n}\begin{pmatrix}
        0 & 1\\
        -1 & 0
    \end{pmatrix}.
\end{equation}
We work in units with $\hbar=1$ and assume $n_A=n_B=n$ for simplicity. For Gaussian states, separability can be tested at the level of the covariance matrix $\bm V(t) \equiv \tfrac{1}{2}\langle\{\hat{\bm \xi}(t)-\langle\hat{\bm \xi}(t)\rangle,(\hat{\bm \xi}(t)-\langle\hat{\bm \xi}(t)\rangle)^T\}\rangle$ via the PPT criterion~\cite{peres1996,horodecki1996,duan2000,simon2000}.

We consider a general bilinear system Hamiltonian coupled to a Gaussian thermal bath,
\begin{equation}
    \hat{H} =\frac{1}{2}\hat{\bm{\xi}}^{T}\textbf{G}\hat{\bm{\xi}}+ \frac{1}{2}(\hat{\bm{\xi}}^{\text{bath}})^{T}\textbf{H}\hat{\bm{\xi}}^{\text{bath}}
        +\hat{\bm{\xi}}^{T}\hat{\bm F}^{\text{th,bath}},
    \label{general hamiltonian}
\end{equation}
specialized for now to a rank-1 coupling with matched thermal noise on each subsystem,
\begin{align}
\label{rank 1 interaction hamiltonian}
    \bm G=\left(\begin{array}{cc}
    \bm \eta^{-1}\bm M_A & K_g \bm u_A \bm u_B^{T} \\[4pt]
    K_g \bm u_B\bm u_A^{T} & \bm \eta^{-1}\bm M_B
    \end{array}\right),\quad
    \hat{\bm F}^{\rm th,bath}=
    \left(\begin{array}{c}
    \hat{F}_A^{\rm th} \bm u_A \\
    \hat{F}_B^{\rm th} \bm u_B
    \end{array}\right),
\end{align}
where $\bm M_i$ encodes the free dynamics of subsystem $i$ and $\bm u_{A,B}$ are real vectors selecting which quadratures couple. The choice $\bm u_A=\bm u_B=(1,0,\ldots,0)^T$ recovers the familiar gravitational interaction $K_g\hat{x}_A\hat{x}_B$ (see Fig.~\ref{fig:rank 1}). $\hat{\bm{F}}^{\text{th},\text{bath}}$ represents the thermal force vector, which has white-noise spectrum $\frac{1}{2}\langle\{\hat{F}_{i,I}^{\rm th}(t),\hat{F}_{j,I}^{\rm th}(t')\}\rangle=\delta_{ij}\delta(t-t')S^{\rm th}_{i}$ (a subscript $I$ denotes the interaction picture, a subscript $H$, the Heisenberg picture, and if omitted, the Schrodinger picture.)

As a concrete realization, consider the Newtonian gravitational interaction between two equal masses $m$ at separation $d$: $K_g=2Gm^2/d^3$. The thermal noise on each subsystem is set by the system-bath coupling and bath temperature; in the high-temperature limit with equal mechanical dissipation rate $\gamma$, the spectra read $S^{\rm th}_A = S^{\rm th}_B = 2\gamma m k_B T$. We will use these scales to interpret the entanglement-generation bound below.

\begin{figure}
    \centering
    \includegraphics[width=0.63\linewidth]{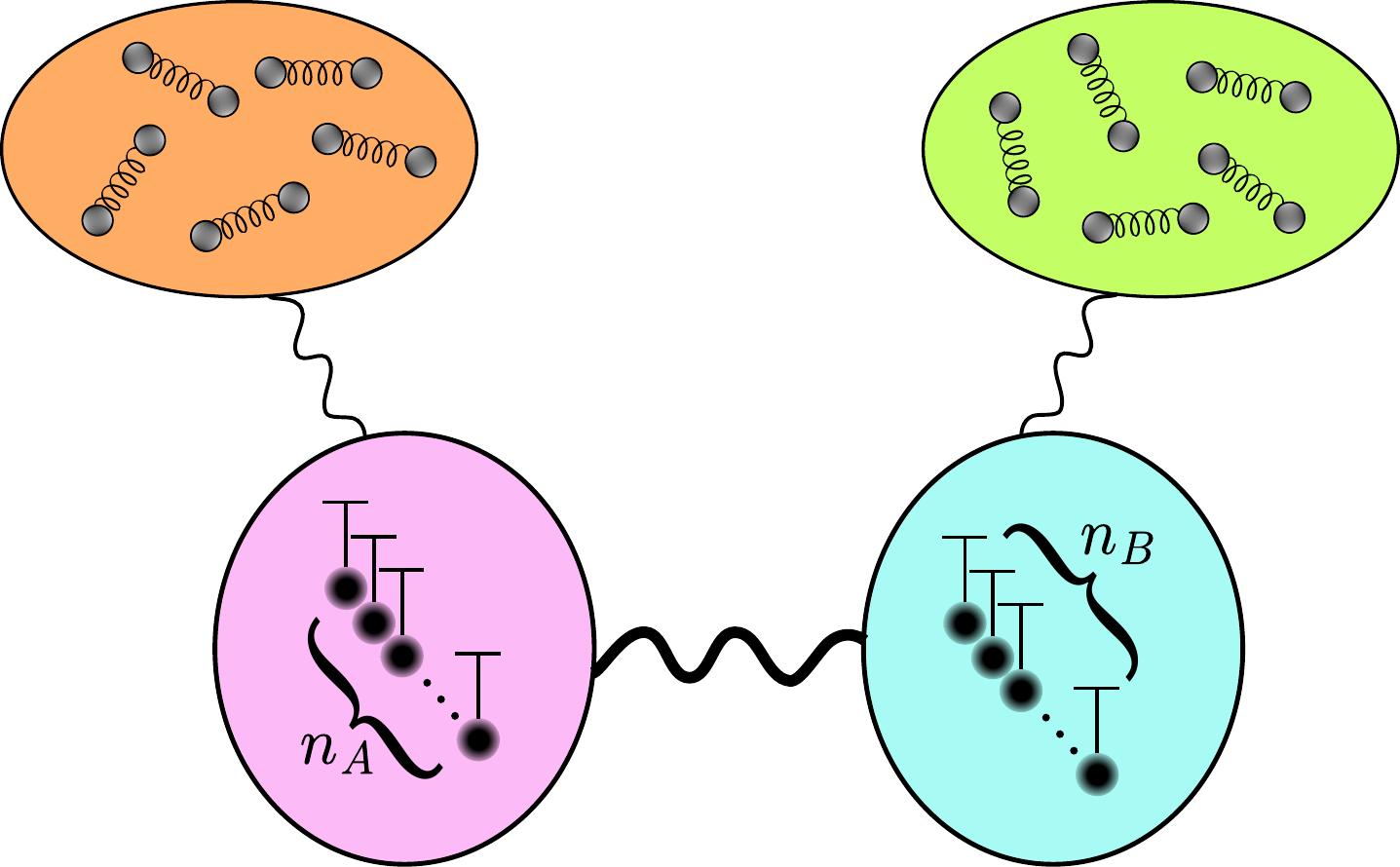}
    \caption{Configuration of rank-1 bilinearly coupled systems: $n$ oscillators in subsystem $A$ (pink) couple to $n$ oscillators in subsystem $B$ (blue) via a rank-1 bilinear interaction $K_{g}(\hat{\bm{\xi}}^{T}_{A}\textbf{u}_{A})(\hat{\bm{\xi}}^{T}_{B}\textbf{u}_{B})$ (thick line). Each subsystem couples linearly to its own thermal bath (orange and green, thin lines).}
    \label{fig:rank 1}
\end{figure}

The Heisenberg equations of motion give a simple Langevin equation $\frac{d}{dt}\hat{\bm{\xi}}_H(t) = (\bm M+K_g\bm C)\hat{\bm{\xi}}_H(t)+\hat{\bm n}_H(t)$, where $\bm M=\bm M_A\oplus\bm M_B$, $\bm C$ has off-diagonal blocks $\bm C_{AB}=\bm \eta\bm u_A\bm u_B^T$, $\bm C_{BA}=\bm \eta\bm u_B\bm u_A^T$, and $\hat{\bm n}_{i,H}=\bm \eta\bm u_i\hat F_{i,H}^{\rm th}$. Neglecting second-order corrections, valid on timescales short compared to dissipation (see Section `Discussion on $\hat{F}_{H}(t)$ vs $\hat{F}_{I}(t)$' of Supplemental Material~\cite{SM}), the covariance matrix evolves as
\begin{equation}
        \bm{V}(t) = \bm \Phi(t)\bm{V}(0)\bm \Phi^T(t) + \int_{0}^{t}du \ \bm \Phi(u)\bm{F}\bm \Phi^T(u),
\end{equation}
with symplectic propagator $\bm{\Phi}(t)\equiv e^{(\bm M+K_g\bm C)t}$ and $\bm F=(\bm{\eta}\bm u_A\bm u_A^{T}\bm{\eta}^{T}S^{\rm th}_{A})\oplus(\bm{\eta}\bm u_B\bm u_B^{T}\bm{\eta}^{T}S^{\rm th}_{B})$.

\begin{theorem}
\label{pert-theorem}
For a separable pure Gaussian initial state, in the perturbative regime $K_{g}t,\ S^{\text{th}}_{A}t,\ S^{\text{th}}_{B}t\ll 1$, the dynamics of \eqref{rank 1 interaction hamiltonian} preserve separability provided
\begin{equation}
    \label{sufficient cond}
    S^{\text{th}}_{A}S^{\text{th}}_{B} \geq K_{g}^{2}.
\end{equation}
\end{theorem}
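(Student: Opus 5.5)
We work to first order in $t$, where $K_g t$, $S^{\rm th}_A t$ and $S^{\rm th}_B t$ are all small. The plan is to show that $\bm V(t)$ can be written as the covariance matrix of a separable state plus a positive semidefinite correction. For Gaussian states this suffices, because a separable covariance matrix plus a PSD matrix is again separable: adding classical Gaussian noise is a mixture of local displacements. The key idea, following Refs.~\cite{kafri2013,kafri2014}, is to split $\bm\Phi(t)\approx \mathbb 1+(\bm M+K_g\bm C)t$ into a local part and an interaction part, and to realize the interaction part as an LOCC-type measure-and-feedback channel whose noise is supplied by the thermal term.

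\emph{Step 1: linearize.} To first order,
\begin{equation*}
\bm V(t)\simeq \bm V_{\rm loc}(t)+K_g t\,(\bm C\bm V_0+\bm V_0\bm C^T)+t\bm F,
\end{equation*}
where $\bm V_{\rm loc}(t)=e^{\bm Mt}\bm V_0e^{\bm M^Tt}$ is still a product (pure, separable) covariance matrix, since $\bm M=\bm M_A\oplus\bm M_B$ is local. Write $\bm a=\bm\eta\bm u_A$ and $\bm b=\bm\eta\bm u_B$, and let $X_A=\bm u_A^T\hat{\bm\xi}_A$ and $X_B=\bm u_B^T\hat{\bm\xi}_B$. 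The interaction term then consists of the cross-covariance pieces $K_g t\,[\bm a\,\bm u_B^T\bm V_0+\ldots]$, which at this order are generated by the kicks $\delta\hat{\bm\xi}_A=K_g t\,\bm a X_B$ and $\delta\hat{\bm\xi}_B=K_g t\,\bm b X_A$.

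\emph{Step 2: the LOCC model.} Consider the following protocol. Alice weakly measures $X_A$ with imprecision $\sigma_A^2$, which incurs a back-action diffusion $\propto \bm a\bm a^T/\sigma_A^2$ on her own subsystem. Bob applies the kick $K_gt\,\bm b\,(X_A+\text{noise})$, and the symmetric procedure is run for $X_B$. Averaged over the measurement records, this channel reproduces the same mean cross terms $K_gt(\bm C\bm V_0+\bm V_0\bm C^T)$. It adds local noise of the form
\begin{equation*}
t\left[\frac{1}{4\sigma_A^2 t^2}\,\bm a\bm a^T t^2+K_g^2\sigma_B^2 t\,\bm a\bm a^T\right]\oplus(A\leftrightarrow B),
\end{equation*}
with the coefficients to be fixed precisely, including the scaling of the measurement strength with $t$. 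Optimizing over $\sigma$ gives a minimal required noise of $K_g\kappa\,t\,\bm a\bm a^T\oplus K_g\kappa^{-1}t\,\bm b\bm b^T$ for any $\kappa>0$; this is the standard Heisenberg-limited balance between imprecision and back-action. Since an LOCC channel maps separable states to separable states, its output covariance, namely $\bm V_{\rm loc}$ plus the cross terms plus this minimal noise, is separable.

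\emph{Step 3: compare with $\bm F$.} We need $\bm F=S_A\,\bm a\bm a^T\oplus S_B\,\bm b\bm b^T$ to dominate the minimal noise, i.e.\ $S_A\ge K_g\kappa$ and $S_B\ge K_g/\kappa$ for some $\kappa>0$. Such a $\kappa$ exists if and only if $S_AS_B\ge K_g^2$: take $\kappa=\sqrt{S_A/S_B}$. Any excess $\bm F-(\cdot)$ is then PSD classical noise, which preserves separability. This proves \eqref{sufficient cond}.

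An alternative, purely algebraic route is to use the Werner--Wolf criterion directly: find local covariance matrices $\bm\gamma_A\oplus\bm\gamma_B\le\bm V(t)$ with each $\bm\gamma_i+i\bm\eta_i\ge0$. One would take $\bm\gamma_i=\bm V_{0,i}$ plus corrections and reduce the condition to positivity of a $2\times2$ block in the span of $\{\bm a,\bm b,\bm V_0\bm u_A,\bm V_0\bm u_B\}$. That positivity needs $S_AS_B\ge K_g^2$ after using the uncertainty relation of the pure state $\bm V_0$, specifically $(\bm u^T\bm V_0\bm u)(\bm a^T\bm V_0^{-1}\bm a)\ge\ldots$.

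The main obstacle is Step 2, or equivalently the positivity reduction in this algebraic route. The difficulty is getting the exact constants, so that the optimal measurement trade-off yields precisely $K_g$ rather than $K_g/2$ or $2K_g$ with the paper's normalization of $S^{\rm th}$. It also requires ensuring that the $O(t)$ cross terms of a pure state are matched without leaving an $O(\sqrt t)$ mismatch. I would first try the algebraic route on the minimal two-mode case $\bm u_A=\bm u_B=(1,0)^T$ to calibrate the constant against Ref.~\cite{kafri2014}, then lift to general $\bm u_i$ by a local symplectic change of basis that maps $\bm u_i$ to the first quadrature.
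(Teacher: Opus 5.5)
Your main route works and differs from the paper's proof of this theorem. The paper stays at the covariance-matrix level throughout:
\begin{itemize}
\item It maps the pure initial state to $\frac12\mathbf 1$ by a local symplectic $\bm S_A\oplus\bm S_B$.
\item It removes the free evolution exactly by conjugating with $e^{-\bm M't}$, and expands only in $K_gt$ and $S^{\rm th}_it$.
\item It then writes $\bm V=(\bm\sigma_A\oplus\bm\sigma_B)+\bm N$ explicitly. The key device is a first-order local squeezing $\delta\bm V$. This keeps $\frac12\mathbf 1-\delta\bm V$ pure to first order while moving half of the thermal noise from the $\bm\eta\bm v_i$ direction into the $\bm v_i$ direction.
\item Via its $\bm U\bm S\bm U^\dagger$ lemma, the integrand of $\bm N$ reduces to two copies of a $2\times2$ block with diagonal $S^{\rm th}_A/2,\,S^{\rm th}_B/2$ and off-diagonal $K_g/2$.
\end{itemize}
Your measure-and-feedback simulation is instead the mechanism of the paper's GKSL theorem, evaluated on covariances. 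The two are the same decomposition seen differently. In the paper's vacuum, interaction-picture frame, $\frac12\mathbf 1-\delta\bm V$ is exactly the conditional covariance after a weak measurement of $X_i$ at rate $\gamma_i=S^{\rm th}_i/2$. $\bm N$ is the classical spread of the conditional means, plus the feedback kicks, plus any leftover thermal noise. Your version explains the bound operationally, and run in continuous time it needs neither Gaussianity nor perturbation theory. The paper's version exhibits $\bm\sigma_{A,B}$ and $\bm N$ explicitly. That is what lets it be sharpened to a necessary-and-sufficient condition in the restricted parallel case; an LOCC simulation is sufficient by nature and cannot give that.

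One step, as written, does not cover the theorem's regime. In Step 1 you expand $\bm\Phi(t)\approx\mathbf 1+(\bm M+K_g\bm C)t$. The terms $K_gt(\bm C\bm V_0+\bm V_0\bm C^T)$ and $t\bm F$ are then the correct first-order terms only if also $\|\bm M\|t\ll 1$. The theorem assumes only $K_gt,\,S^{\rm th}_it\ll1$. When $\|\bm M\|t$ is not small (e.g.\ oscillators followed over many periods), the first-order terms are instead $K_g\int_0^t e^{\bm M(t-s)}\bm C e^{\bm Ms}ds$ and $\int_0^t e^{\bm Mu}\bm F e^{\bm M^Tu}du$. A single measurement of fixed $X_A,X_B$ over the whole interval does not reproduce these.

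The repair is as follows:
\begin{itemize}
\item Pass to the interaction picture of the local $\bm M$. This is a local symplectic frame change, so it does not affect separability.
\item Run your protocol on each slice $[s,s+ds]$ with the rotated quadratures $X_i(s)=(e^{\bm M_i^Ts}\bm u_i)^T\hat{\bm\xi}_i$.
\item Since $e^{-\bm M_is}\bm\eta=\bm\eta e^{\bm M_i^Ts}$, the interaction-picture thermal noise lies exactly along the back-action direction of $X_i(s)$. So the same $\kappa$ works on every slice, and a composition of LOCC maps is LOCC.
\end{itemize}
Doing this in continuous time is precisely the paper's GKSL argument, and it also makes the conclusion exact rather than first order.

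Your constants are right once the $t$-scaling is fixed; the displayed noise expression has the wrong powers of $t$.
\begin{itemize}
\item Take imprecision variance $\sigma_i^2=1/(4\gamma_it)$. The added local noise is then $t\bigl(\gamma_A+K_g^2/(4\gamma_B)\bigr)\bm a\bm a^T\oplus t\bigl(\gamma_B+K_g^2/(4\gamma_A)\bigr)\bm b\bm b^T$.
\item The product of the two coefficients is $\ge K_g^2t^2$, with equality on a one-parameter family. That family is exactly your $K_g\kappa t$, $K_gt/\kappa$, so there is no stray factor of $2$ with $\bm F=S^{\rm th}_A\bm a\bm a^T\oplus S^{\rm th}_B\bm b\bm b^T$.
\item The mean cross terms match because $\hat K_B(y)$ is a function of $X_B$ alone. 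Hence $\int dy\,y\,\hat K_B(y)\hat O\hat K_B(y)=\frac12\{X_B,\hat O\}$ for $\hat O$ linear in $\hat{\bm\xi}_B$, giving $\mathrm{Cov}(y_B,\hat{\bm\xi}_B)=\bm V_{0,B}\bm u_B$.
\item Every contribution is $O(t)$, so no $O(\sqrt t)$ mismatch arises.
\end{itemize}
For your algebraic fallback, the ``corrections'' to $\bm\gamma_i$ must be exactly the squeezing $-\delta\bm V$. The reason is that $\bm F$ has no weight along $\bm v_i$, while the cross term pairs $\bm v_A$ with $\bm\eta\bm v_B$. Without that correction, the residual is never PSD for $K_g\neq0$.
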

\begin{remark}
\label{pert-correlated}
    If in addition, the noise is correlated so that $\frac{1}{2}\langle\{\hat{F}_{A,I}^{\rm th}(t),\hat{F}_{B,I}^{\rm th}(t')\}\rangle=\delta(t-t')S^{\rm th}_{AB}$, then separability is preserved provided that:
\begin{equation}
     S^{\text{th}}_{A}S^{\text{th}}_{B} \geq K_{g}^{2}+(S^{\text{th}}_{AB})^2.
\end{equation}
\end{remark}

The full proof of Theorem \ref{pert-theorem} and Remark \ref{pert-correlated} are given in Section `Full Proof of Gaussian Calculation' and subsection `Including Correlated Noise' therein of the Supplemental Material~\cite{SM}. The strategy uses the multimode separability criterion~\cite{serafini2017}: a Gaussian state is separable if and only if $\bm V\succeq \bm \sigma_A\oplus\bm \sigma_B$ for some valid local covariance matrices $\bm \sigma_{A,B}$ (i.e., satisfying $\bm \sigma_j+\tfrac{i}{2}\bm \eta_j\succeq 0$). It therefore suffices to exhibit a decomposition
\begin{equation}
    \bm V(t) = (\bm \sigma_A\oplus\bm \sigma_B) + \bm N, \qquad \bm N\succeq 0.
\end{equation}
We construct such $\bm \sigma_A,\bm \sigma_B,\bm N$ explicitly to first order in $K_g t$ and $S^{\rm th}_{A,B}t$, and the positivity $\bm N\succeq 0$ reduces to \eqref{sufficient cond}. Under the stringent condition where $\textbf{u}_{A}$ and $\textbf{u}_{B}$ are parallel, and are eigenvectors under the symplectic propagator $\bm{\Phi}(t)$, it is possible to boost this sufficiency condition to a necessary one, more details in subsection `Stringent Necessary and Sufficient' of the Supplemental Material~\cite{SM}.

Inserting the gravitational and thermal scales introduced above, and restoring $\hbar$, the separability bound \eqref{sufficient cond} becomes $\gamma k_B T \ge \frac{\hbar Gm}{d^3}$, recovering the entanglement-generation threshold of Ref.~\cite{miao2020}. The multimode generalization thus shows that adding modes with identical couplings cannot relax this constraint; more general interactions are treated below.

\textit{General GKSL Argument---}While the argument of the previous section is illustrative in explicitly detailing the time-dependence of all contributions to the covariance matrix, it is limited to only Gaussian input states and Gaussian dynamics, and is only valid in the perturbative regime. Since we are interested in deriving a separability condition, another approach is to consider under what conditions the dynamics of \eqref{general hamiltonian} can be modeled by an LOCC protocol because these are by definition, separability preserving operations. The basics of this idea were  laid out in \cite{diosi2017}, and in this section we utilize it to derive a sufficient separability condition that applies to any separable input state and also extend the result to a general interaction between subsystems.

We first consider systems with `rank 1' interactions, i.e. those of the same form as \eqref{rank 1 interaction hamiltonian}. To verify if dynamics are re-creatable through LOCC schemes, we must work with the density matrix itself, thus requiring the Born-Markov equation \cite{schlosshauer2007decoherence}. Assuming the same physical system as \eqref{rank 1 interaction hamiltonian}, the master equation of the system density matrix $\hat{\bm{\rho}}$ is (derivation found in Section `Derivation of Rank-1 Interaction Born Markov Equation' of Supplemental Material \cite{SM}):
\begin{equation}
\label{rank 1 BM}
    \begin{split}
        \frac{d}{dt}\hat{\bm{\rho}}(t)  
        =-i[\hat{H}_{HO},\hat{\bm{\rho}}(t)]&-iK_{g}[\hat{\bm{\xi}}^{T}_{A}\textbf{u}_{A}\textbf{u}_{B}^{T}\hat{\bm{\xi}}_{B},\hat{\bm{\rho}}(t)]\\
        &+\sum_{\alpha=A,B}S^{\text{th}}_{\alpha}\mathcal{D}[\hat{\bm{\xi}}^{T}_{\alpha}\textbf{u}_{\alpha}]\hat{\bm{\rho}}(t)
    \end{split}
\end{equation}
where $\hat{H}_{HO} \equiv \frac{1}{2}\hat{\bm{\xi}}^{T}\bm{\Omega}^{-1}\textbf{M}\hat{\bm{\xi}}$ and we have defined the dissipator:
\begin{equation}
    \mathcal{D}[\hat{\bm{\xi}}^{T}_{\alpha}\textbf{u}_{\alpha}]\hat{\bm{\rho}}(t) = \hat{\bm{\xi}}^{T}_{\alpha}\textbf{u}_{\alpha}\hat{\bm{\rho}}(t)\hat{\bm{\xi}}^{T}_{\alpha}\textbf{u}_{\alpha}-\frac{1}{2}\{(\hat{\bm{\xi}}^{T}_{\alpha}\textbf{u}_{\alpha})^2,\hat{\bm{\rho}}(t)\}
\end{equation}
Two key assumptions used in deriving this equation beyond the usual Born-Markov approximations are a white noise spectrum for $\hat{F}^{\text{th}}_{A,I}(t),\hat{F}^{\text{th}}_{B,I}(t)$ and an omission of the damping term that usually takes the form of a mixed commutator. Practically, this was necessary for our LOCC argument to work and physically, this means we are working on timescales much smaller than the damping/relaxation timescale. In certain special cases, the separability condition can be extended to include damping. However, doing so requires additional model-dependent constraints relating the damping and dissipator terms, thus greatly reducing the generality of our results. Further discussion of the damped case is given in Section `Discussion on Including Damping/Relaxation Term in GKSL Arguments' of Supplemental Material~\cite{SM}.

\begin{theorem} 
\label{rank 1 theorem}
For multi-mode systems, in the regime where the Born-Markov approximation holds, and damping/relaxation may be ignored, and assuming a separable state input, a sufficient condition that  a subsystem coupling interaction of the form $K_{g}(\hat{\bm{\xi}}^{T}_{A}\textbf{u}_{A})(\hat{\bm{\xi}}^{T}_{B}\textbf{u}_{B})$ preserve separability in the presence of thermal noise is:
\begin{equation}
    S^{\text{th}}_{A}S^{\text{th}}_{B} \geq K_{g}^{2}
\end{equation}
where $K_{g}$ is the strength of the bilinear coupling and $S^{\text{th}}_{A},S^{\text{th}}_{B}$ are the white noise spectra of the uncorrelated external baths.
\end{theorem}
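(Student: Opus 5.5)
The plan is to show that, when $S^{\rm th}_A S^{\rm th}_B\ge K_g^2$, the generator in \eqref{rank 1 BM} can be rewritten as a sum of local generators plus a term that is realized by an LOCC protocol. This is the Di\'osi measurement-and-feedback construction~\cite{diosi2017}. Write $\hat X_A\equiv\hat{\bm\xi}_A^T\mathbf u_A$ and $\hat X_B\equiv\hat{\bm\xi}_B^T\mathbf u_B$. The free term $-i[\hat H_{HO},\cdot]$ splits into local parts, since $\mathbf M=\mathbf M_A\oplus\mathbf M_B$. It therefore suffices to show that
\begin{equation}
\mathcal L\hat\rho=-iK_g[\hat X_A\hat X_B,\hat\rho]+S^{\rm th}_A\mathcal D[\hat X_A]\hat\rho+S^{\rm th}_B\mathcal D[\hat X_B]\hat\rho
\end{equation}
generates a separability-preserving semigroup. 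The full evolution then also preserves separability, by a Trotter product of local unitaries with this map. Each Trotter factor maps separable states to separable states, and the set of separable states is closed, so the limit does too.

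The key step is the LOCC model. Alice performs a weak continuous (Gaussian) measurement of $\hat X_A$ at rate $\lambda_A$, obtaining the record $dr_A=\langle \hat X_A\rangle dt+dW_A/\sqrt{4\lambda_A}$. She sends it classically to Bob, who applies the local feedback Hamiltonian $K_g\,(dr_A/dt)\,\hat X_B$. Bob does the same symmetrically with rate $\lambda_B$, and Alice applies $K_g\,(dr_B/dt)\,\hat X_A$. Averaging over the records, the standard It\^o calculation gives an unconditional master equation with three contributions. First, the measurement backaction $\lambda_\alpha\mathcal D[\hat X_\alpha]$. Second, the feedback produces the coherent term; since each feedback term supplies half of the mean-field coupling, I split the feedback strength as $K_g/2$ on each side to reconstruct $-iK_g[\hat X_A\hat X_B,\cdot]$ exactly. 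Third, feedback noise appears as $\frac{(K_g/2)^2}{4\lambda_A}\mathcal D[\hat X_B]$ and $\frac{(K_g/2)^2}{4\lambda_B}\mathcal D[\hat X_A]$; the constant factors will be fixed by the normalization of $\mathcal D$ and of the record. In the Wiseman--Milburn feedback formula, the cross terms between measurement and feedback yield precisely the Hamiltonian $\frac{K_g}{2}\{\hat X_A,\hat X_B\}/2$-type symmetrized coupling. Because $\hat X_A$ and $\hat X_B$ commute, this equals $K_g\hat X_A\hat X_B$, and no residual non-local dissipator remains. The LOCC-generated Lindbladian is therefore
\begin{equation}
-iK_g[\hat X_A\hat X_B,\cdot]+\Big(\lambda_A+\frac{c K_g^2}{\lambda_B}\Big)\mathcal D[\hat X_A]+\Big(\lambda_B+\frac{c K_g^2}{\lambda_A}\Big)\mathcal D[\hat X_B],
\end{equation}
with a numerical constant $c$. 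I expect $c=1/4$, so that the optimum reproduces exactly the threshold $K_g^2$.

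Finally, I compare this with $\mathcal L$. If $S^{\rm th}_A\ge\lambda_A+cK_g^2/\lambda_B$ and $S^{\rm th}_B\ge\lambda_B+cK_g^2/\lambda_A$, the excess noise $(S^{\rm th}_\alpha-\cdots)\mathcal D[\hat X_\alpha]$ is a local channel and can be added freely. Minimizing the required noise amounts to finding $\lambda_A,\lambda_B>0$ satisfying both inequalities. I take $\lambda_A=S^{\rm th}_A/2$ and $\lambda_B=S^{\rm th}_B/2$. Both inequalities then reduce to $S^{\rm th}_AS^{\rm th}_B\ge 4cK_g^2$, which is the claimed condition when $c=1/4$. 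This product form is also what Theorem~\ref{pert-theorem} suggests.

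The main obstacle is getting the Itô bookkeeping and the constants right, so that the threshold comes out as exactly $K_g^2$ rather than $K_g^2$ times a factor. If the symmetric split gives a worse constant, I would instead use one-way feedback (Alice measures, Bob feeds back with full strength $K_g$). That generates $-iK_g[\hat X_A\hat X_B,\cdot]$ plus the correction $\frac{iK_g}{2}[\hat X_A\hat X_B,\cdot]$-free form. In that case I would verify directly that the unconditional generator is $\lambda\mathcal D[\hat X_A]+\frac{K_g^2}{4\lambda}\mathcal D[\hat X_B]$, with $\lambda=S^{\rm th}_A$ giving $S^{\rm th}_AS^{\rm th}_B\ge K_g^2/4\cdot$(normalization). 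I would adjust $\mathcal D$ to match the paper's spectrum conventions, $\frac12\langle\{F,F\}\rangle=S\delta$.
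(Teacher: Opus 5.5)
Your architecture is the paper's: the symmetric Di\'osi--Tilloy scheme (local weak measurements of $\hat X_A$, $\hat X_B$ plus cross-feedback), followed by rate matching. The gap is that the constant $c$, on which the theorem hinges, is expected rather than derived. The gain you actually propose also gives the wrong answer.

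Consider a single link in which Alice measures at rate $\lambda_A$ and Bob applies $\mu\,(dr_A/dt)\,\hat X_B$. The averaged generator is $-\tfrac{i\mu}{2}[\hat X_B,\{\hat X_A,\hat\rho\}]+\tfrac{\mu^2}{4\lambda_A}\mathcal D[\hat X_B]\hat\rho$. Equivalently, this is $-\tfrac{i\mu}{2}[\hat X_A\hat X_B,\hat\rho]$ plus the non-local term $\tfrac{i\mu}{2}(\hat X_A\hat\rho\hat X_B-\hat X_B\hat\rho\hat X_A)$. Your symmetrized $\tfrac{K_g}{2}\{\hat X_A,\hat X_B\}/2=\tfrac{K_g}{2}\hat X_A\hat X_B$ is the contribution of \emph{one} link with gain $K_g$. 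The factor $\tfrac12$ is already built in, so you must not halve the gain again.

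The mirror link with the \emph{same} gain doubles the Hamiltonian part and cancels the non-local term. This cancellation comes from equal gains, not from $[\hat X_A,\hat X_B]=0$. Reproducing $-iK_g[\hat X_A\hat X_B,\cdot]$ therefore requires gain $\mu=K_g$ on each side. This is the paper's feedback-gain choice $\lambda_A=\lambda_B=K_g$ (there $\lambda$ is the gain and $\gamma$ the measurement rate). Only then is the feedback noise $\tfrac{K_g^2}{4\lambda_B}\mathcal D[\hat X_A]+\tfrac{K_g^2}{4\lambda_A}\mathcal D[\hat X_B]$, i.e.\ $c=1/4$.

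With your split of $K_g/2$ per side, you simulate only coupling $K_g/2$, with $c=1/16$. You would then arrive at the spurious threshold $S^{\rm th}_AS^{\rm th}_B\ge K_g^2/4$.

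Your fallback fails too. One-way feedback with gain $K_g$ yields only half the coupling, plus the term $\tfrac{iK_g}{2}(\hat X_A\hat\rho\hat X_B-\hat X_B\hat\rho\hat X_A)$. That term is absent from \eqref{rank 1 BM}, and no local term can cancel it. Only the reverse link can, which brings you back to the symmetric scheme.

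Once the gain is fixed, the rest of your argument is sound and slightly streamlines the paper's. The paper solves the equalities $\gamma_A+K_g^2/(4\gamma_B)=S^{\rm th}_A$, $\gamma_B+K_g^2/(4\gamma_A)=S^{\rm th}_B$ for the quadratic roots. You instead take measurement rates $S^{\rm th}_\alpha/2$ and supply the nonnegative slack as local dissipators (pure weak measurement), which yields the same condition. Your Trotter/closedness remark also makes explicit a step the paper only asserts.
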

\begin{remark}
    \label{rank 1 correlated}
    If in addition, the noise between baths is also correlated and described by $S^{\text{th}}_{AB}$, then separability is preserved provided that:
    \begin{equation}
        S^{\text{th}}_{A}S^{\text{th}}_{B} \geq K_{g}^2+(S^{\text{th}}_{AB})^2
    \end{equation}
\end{remark}

The full details of the proofs can be found in Sections `Full GKSL Proof for Rank 1 Interaction' and Subsection `Including Correlated Noise' therein of Supplemental Material~\cite{SM}.

The goal is to devise an LOCC protocol whose Lindbladian exactly reproduces \eqref{rank 1 BM}. Since the term $H_{HO}$ involves only local terms and can be implemented using unitaries, we can ignore it and only focus on the dissipator and entanglement generating terms. 

Now consider a symmetric measurement-and-feedback protocol where Alice and Bob individually implements local Gaussian weak measurements of their respective local operators $\hat{X}_{A}$ and $\hat{X}_{B}$ with measurement strengths $\gamma_{A},\gamma_{B} > 0$ and obtain classical measurement outcomes $y_{A},y_{B}\in\mathbb{R}$. Given the classical outcomes, Alice and Bob can apply local unitaries $\hat{U}_{A}(y_{B}), \hat{U}_{B}(y_{A})$ conditioned on the other party's outcome with feedback strengths $\lambda_{A},\lambda_{B}\in\mathbb{R}$. This step requires only local communication of classical results. 
Expanding the action of the unconditional map corresponding to averaging over the classical outcomes of the measurement-feedback scheme to linear order in $dt$, one derives a Lindbladian. By forcing this Lindbladian to match \eqref{rank 1 BM}, the bound on thermal noise emerges as a sufficient condition.
%
%
We emphasize that,  while the 
interaction terms are quadratic, 
Gaussianity was never required nor used anywhere in the proof. Since the result of Theorem \ref{rank 1 theorem} matches \eqref{sufficient cond} 
we have 
now shown that any non-Gaussianity will not relax this sufficiency condition.

\textit{General Interaction---}We extend the measure-and-feedback scheme to general bilinear couplings. In the notation of \eqref{general hamiltonian}, the Hamiltonian now reads
\begin{align}
    \bm G=\left(\begin{array}{cc}
    \bm M_A & \bm Q_{G} \\[4pt]
   \bm Q_{G}^{T} & \bm M_B
    \end{array}\right),\qquad
    \hat{\bm F}^{\rm th,bath}=
    \left(\begin{array}{c}
    \hat{\bm F}_A^{\rm th} \\
    \hat{\bm{F}}_B^{\rm th}
    \end{array}\right),
\end{align}
where $\bm{M}_{i}^{T}=\bm{M}_{i}$ describes the internal dynamics of subsystem $i$, $\bm{Q}_{G}$ is the bipartite coupling matrix, and each component of the vector operator $\hat{\bm F}_i^{\rm th}$ couples independently to a different bath quadrature, with block-diagonal white-noise spectral matrix $\frac{1}{2}\langle\{(\hat{F}_{i,I}^{\rm th})_a(t),(\hat{F}_{j,I}^{\rm th})_b(t')\}\rangle = \delta_{ij}\delta_{ab}\delta(t-t')\bm Q^{\rm th}_{i,ab}$. The setup is illustrated in Fig.~\ref{fig: general int}.

\begin{figure}
    \centering
    \includegraphics[width=0.63\linewidth]{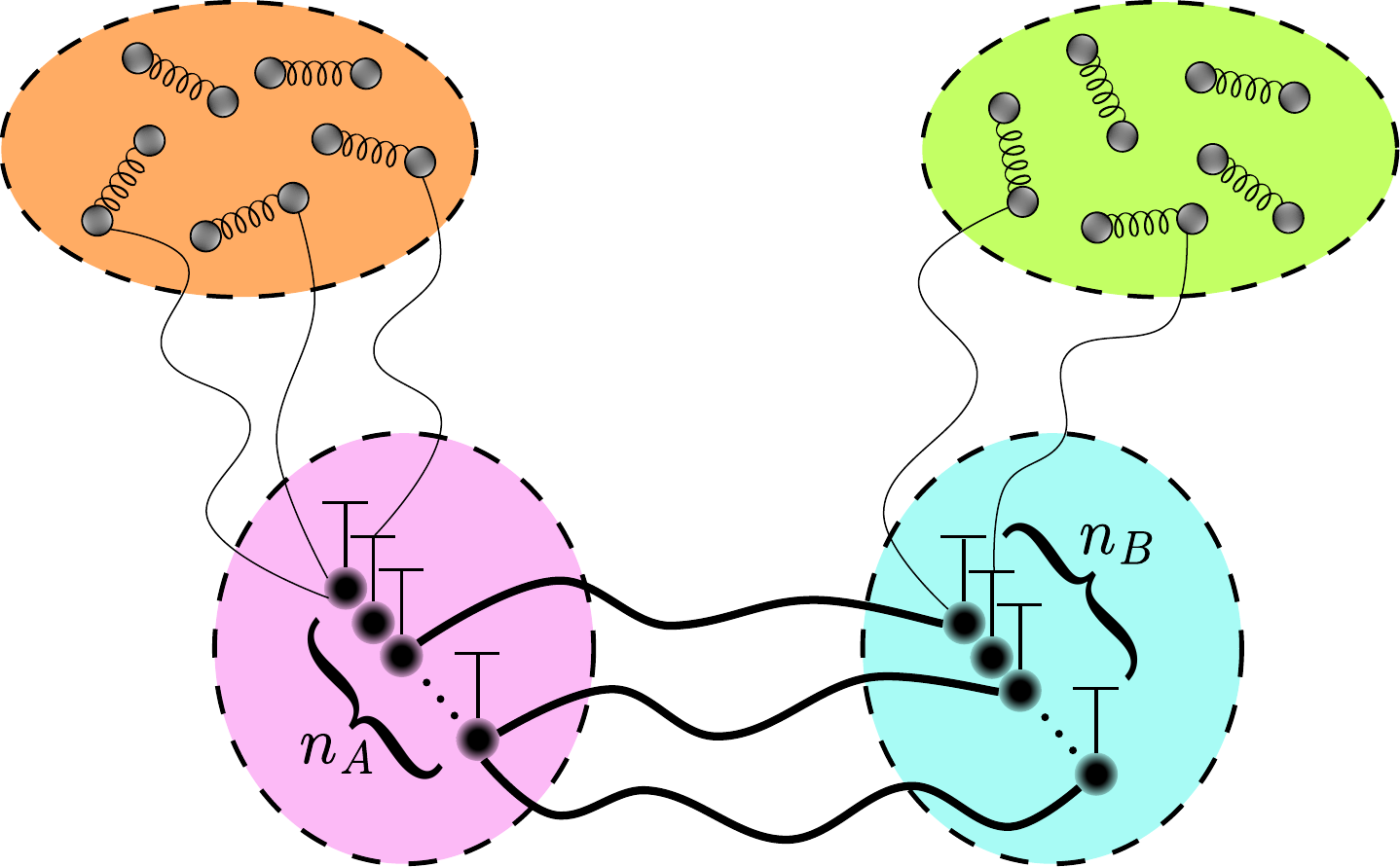}
    \caption{
    Systems with general bilinear coupling, with thicker lines representing gravity and thinner lines, system-bath couplings.
    }
    \label{fig: general int}
\end{figure}

\begin{theorem}
\label{general theorem}
 For multi-mode systems, in the regime where the Born-Markov approximation holds, and damping/relaxation may be ignored, and assuming a separable state input, a sufficient condition that a subsystem coupling interaction of the form $\hat{\bm{\xi}}^{T}\bf{G}\hat{\bm{\xi}}$ preserve separability in the presence of thermal noise is:
\begin{equation}
    \begin{pmatrix}
        \textbf{Q}^{\text{th}}_{A} & \textbf{Q}_{G}\\
        \textbf{Q}_{G}^{T} & \textbf{Q}^{\text{th}}_{B}
    \end{pmatrix} \succeq 0
\end{equation}
where $\textbf{Q}_{G}$, $\textbf{Q}_{G}^{T}$ are the strength of the bilinear coupling and $\textbf{Q}^{\text{th}}_{A},\textbf{Q}^{\text{th}}_{B}$ are the white noise spectra of the uncorrelated external baths.
\end{theorem}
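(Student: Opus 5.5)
We follow the same strategy as in the proof of Theorem~\ref{rank 1 theorem}: write down the Born--Markov master equation for the general Hamiltonian, and show that whenever the block matrix is positive semidefinite, its Lindbladian can be reproduced exactly by an LOCC measurement-and-feedback protocol.

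\emph{Master equation.} Repeating the derivation of \eqref{rank 1 BM}, with white noise and the damping term dropped, gives
\begin{equation}
\frac{d}{dt}\hat{\bm\rho}=-i[\hat H_{\rm loc},\hat{\bm\rho}]-i[\hat{\bm\xi}_A^T\mathbf{Q}_G\hat{\bm\xi}_B,\hat{\bm\rho}]+\sum_{\alpha=A,B}\sum_{a,b}(\mathbf{Q}^{\rm th}_{\alpha})_{ab}\Big(\hat\xi_{\alpha,a}\hat{\bm\rho}\hat\xi_{\alpha,b}-\tfrac12\{\hat\xi_{\alpha,b}\hat\xi_{\alpha,a},\hat{\bm\rho}\}\Big).
\end{equation}
The local Hamiltonian $\hat H_{\rm loc}$ can be implemented by local unitaries, so we drop it.

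\emph{Factorization.} Assume the block matrix is PSD. Then $\mathbf{Q}^{\rm th}_A\succeq0$, $\mathbf{Q}^{\rm th}_B\succeq0$, and there is a contraction $\mathbf{R}$ with
\begin{equation}
\mathbf{Q}_G=(\mathbf{Q}^{\rm th}_A)^{1/2}\,\mathbf{R}\,(\mathbf{Q}^{\rm th}_B)^{1/2},\qquad \|\mathbf{R}\|\le1 .
\end{equation}
Take the singular value decomposition $\mathbf{R}=\sum_k r_k\mathbf{a}_k\mathbf{b}_k^T$, with orthonormal $\mathbf{a}_k,\mathbf{b}_k$ and $0\le r_k\le 1$. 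Define the local operators
\begin{equation}
\hat X_{A,k}=\mathbf{a}_k^T(\mathbf{Q}^{\rm th}_A)^{1/2}\hat{\bm\xi}_A,\qquad \hat X_{B,k}=\mathbf{b}_k^T(\mathbf{Q}^{\rm th}_B)^{1/2}\hat{\bm\xi}_B .
\end{equation}
In terms of these, the coupling is $\sum_k r_k\hat X_{A,k}\hat X_{B,k}$. Completing $\{\mathbf{a}_k\}$ and $\{\mathbf{b}_k\}$ to orthonormal bases, the dissipators become $\sum_k\mathcal D[\hat X_{A,k}]+\sum_k\mathcal D[\hat X_{B,k}]$. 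The problem is thereby diagonalized into independent rank-1 channels: for each $k$, unit noise on both sides and coupling $r_k$.

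\emph{Reduction to the rank-1 case.} Each channel $k$ now satisfies the rank-1 condition $1\cdot1\ge r_k^2$. We apply the measurement-feedback construction of Theorem~\ref{rank 1 theorem} to each channel separately: weak measurements of $\hat X_{A,k}$ and $\hat X_{B,k}$, and feedback unitaries $e^{-i\lambda r_k y_B\hat X_{A,k}dt}$ and its mirror for Bob. Any leftover dissipator, $(1-r_k)\mathcal D[\cdot]$ or the unused basis directions, is a positive local dissipator. It can be generated by local random unitaries or local unread measurements, which are LOCC. To first order in $dt$ the generators add, so the full Lindbladian is a sum of LOCC-implementable generators. A Trotter/product-formula argument then gives an LOCC channel for finite $t$, so a separable input stays separable.

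\emph{Main obstacle.} The hard part is ensuring that the channels do not interfere at order $dt$. The measurements of different $\hat X_{A,k}$ may not commute, and the measurement back-action of one channel could produce cross terms $\hat X_{A,k}\hat{\bm\rho}\hat X_{A,l}$. The first thing to try is to use independent measurement records for each $k$. Their noises are uncorrelated, so cross terms vanish in the average at order $dt$, and the generators combine additively. If $\mathbf{Q}^{\rm th}_\alpha$ is singular, the factorization should go through pseudo-inverses. Because $\mathbf{Q}_G$ must lie in the range of the $(\mathbf{Q}^{\rm th}_\alpha)^{1/2}$ (which PSD-ness guarantees), $\mathbf{R}$ remains a well-defined contraction. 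In the correlated-noise case the same argument should reproduce Remark~\ref{rank 1 correlated}.
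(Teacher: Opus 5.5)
Your proposal is correct and follows the paper's proof step for step. You factor $\mathbf{Q}_G=(\mathbf{Q}^{\text{th}}_A)^{1/2}\mathbf{R}\,(\mathbf{Q}^{\text{th}}_B)^{1/2}$ with $\|\mathbf{R}\|\le 1$, use the SVD to split the Lindbladian into independent rank-1 channels with unit noise and coupling $r_k\le 1$, run the symmetric measurement--feedback scheme on each, supply the leftover local dissipators by unread local measurements, and assemble everything with a Lie--Trotter product. The paper does the same, except that it first treats $\mathbf{Q}^{\text{th}}_\alpha\succ 0$ through a Schur complement and only then invokes the contraction factorization you start from.

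The one inaccurate remark is your closing aside on correlated noise. The plain symmetric scheme cannot produce the $-S^{\text{th}}_{AB}[\hat X_A,[\hat X_B,\cdot\,]]$ term, so Remark~\ref{rank 1 correlated} needs an extra ingredient (the paper adds self-feedback), and the paper notes that the multimode correlated case does not decouple this way; none of this is needed for the theorem as stated.
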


The full proof is given in Section `Full GKSL Proof for General Interaction' of the Supplemental Material~\cite{SM}. The key step is to rotate the coupling matrix, $\widetilde{\bm Q}_G \equiv (\bm Q^{\rm th}_A)^{-1/2}\bm Q_G(\bm Q^{\rm th}_B)^{-1/2}$, and apply an SVD. In the rotated basis the corresponding Born-Markov master equation decouples into a sum of rank-1 equations of the same form as \eqref{rank 1 BM}, each implementable by the symmetric-feedback LOCC scheme of the previous section. A Lie--Trotter decomposition then assembles the full evolution as a product of LOCC channels, which preserves separability.

\textit{Application---}One instance where our result can be applied is in \cite{miki2026} where an identical sufficient separability condition for the case of two-mode systems was derived. This implies that neither increasing the number of modes nor allowing for non-Gaussianity could relax this condition.

\begin{figure}[t]
    \centering
    \includegraphics[width=0.63\linewidth]{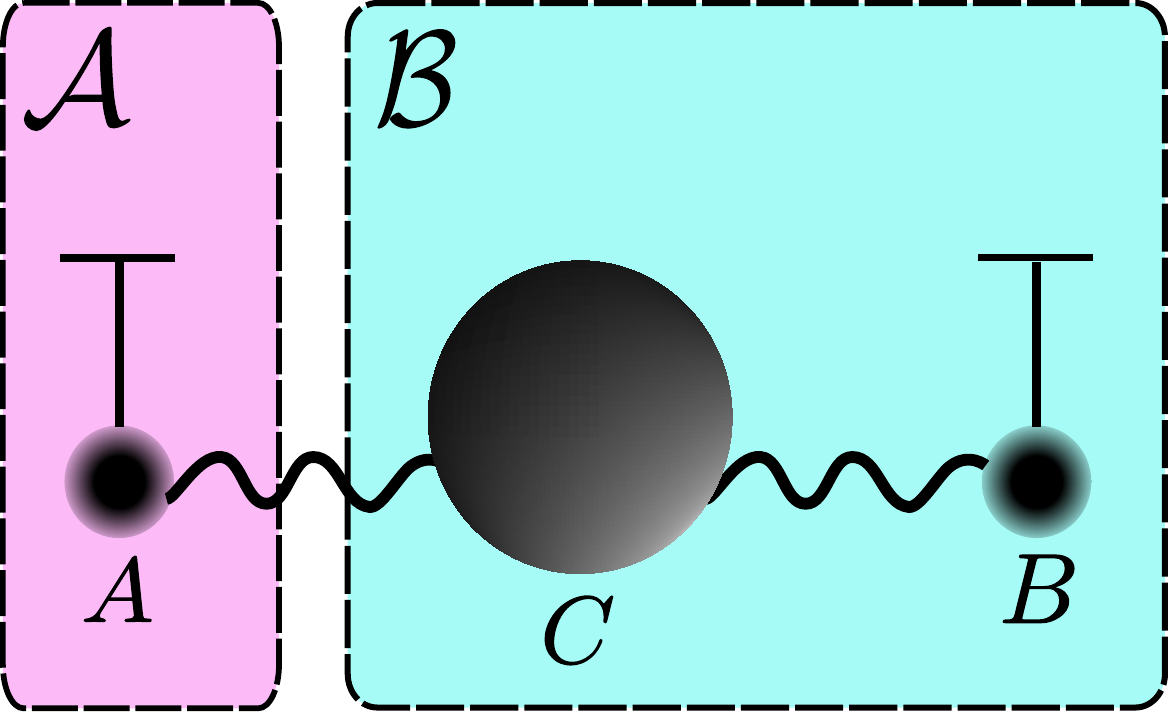}
    \caption{A tripartite system involving entanglement generation between subsystems $A$ and $B$ mediated by a massive mediator $C$. By combining subsystems $B$ and $C$ into a composite subsystem, we have reduced the problem to one of bipartite entanglement: as long as 
    thermal noise levels of $A$ and $C$ are strong enough
    there can never be any entanglement generated between subsystems $A$ and $B$.}
    \label{fig: composite subsystem}
\end{figure}

Another example is to \cite{christopher2025}, which  investigated entanglement generation between two identical harmonic oscillators (labeled systems $A$ and $B$) mediated via a third tunable oscillator (labeled system $C$) located between $A$ and $B$. This system is visualized in Fig.~\ref{fig: composite subsystem}. Suppose we start with an initial separable state $\rho(0) = \rho_{A}(0)\otimes \rho_{CB}(0)$ where $\rho_{CB}$ describes the composite system of the mediator (system $C$) and one of the oscillators (system $B$) and may or may not be entangled. Since the system is entirely Gaussian, we can directly apply our results to claim that in the presence of thermal noise, if Theorem \ref{pert-theorem} is satisfied, then $\rho(t)$ remains separable, thus implying that $\rho_{A}(t)$ and $\rho_{B}(t)$ remain separable as well. The choice of setting the bipartition between system $A$ and $C$ was arbitrary and the same result would hold if one instead chose $\rho(0) = \rho_{AC}(0)\otimes\rho_{B}(0)$. 
Once a bipartition has been selected, we may treat the composite system (say $CB$) as a single subsystem whose internal dynamics we are not interested in because they do not generate entanglement across the bipartition.

To make this explicit, take the bipartition $A\,|\,CB$ in the regime where $A$ is well separated from $B$, so that the only relevant cross-bipartition coupling is the $A$--$C$ gravitational interaction. With masses $M_A$, $M_C$, separation $d_{AC}$, and mechanical dissipation rates $\gamma_A$, $\gamma_C$, the entanglement-generation condition obtained from Theorem~\ref{pert-theorem} reads
\begin{equation}
\label{ACB criterion}
    {\hbar G M_A M_C}/{d_{AC}^3} \ge \sqrt{M_A M_C}\,\sqrt{\gamma_A\gamma_C}\,k_B T.
\end{equation}
If $C$ is a uniform spherical mediator of density $\rho_C$, $d_{AC}$ is set by the mediator radius and \eqref{ACB criterion} specializes to
\begin{equation}
  ( {4\pi \hbar G\rho_C}/{3})\sqrt{{M_A}/{M_C}} \ge \sqrt{\gamma_A\gamma_C}\,k_B T.
\end{equation}
Crucially, while a massive mediator might amplify the gravitational phase accumulated between $A$ and $C$ — and thereby enhance the magnitude of entanglement once it is generated — the requirement on thermal noise is not mitigated: the threshold remains set by the same competition between gravity and thermal decoherence, while numerically, having a massive mediator $C$ does not bring an obvious advantage unless it also has a much smaller dissipation rate than $A$.

%

Finally, our results apply to non-Gaussian input states, including
the mixed discrete-continuous variable system coupling in \cite{pedernales2022} whereby a harmonic oscillator (denoted system $C$) mediates the entanglement generation between a two-level test mass (denoted system $A$) and an ancilla qubit (denoted system $B$). As in the tripartite harmonic oscillator system, we can view this as a bipartite system by grouping $B$ and $C$ together and starting with the initial state $\rho(0) = \rho_{A}(0)\otimes\rho_{CB}(0)$. The interaction between system $A$ and system $C$ is described by a Hamiltonian of the form $H_{\text{int}} = \hat{\sigma}^{A}_{z}(\hat{a}+\hat{a}^{\dagger})$ where $\hat{\sigma}^{A}_{z}$ is the Pauli Z operator acting on the two-level test mass and $\hat{a},\hat{a}^{\dagger}$ are the canonical bosonic operators acting on the mediator. In the language of Theorem \ref{rank 1 theorem}, this is equivalent to replacing $\hat{\bm{\xi}}^{T}_{A}\textbf{u}_{A}$ by $\hat{\sigma}^{a}_{z}$ and $\hat{\bm{\xi}}^{T}_{B}\textbf{u}_{B}$ by $\hat{a}+\hat{a}^{\dagger}$. Since no part of the theorem requires a specification of the bilinear coupling, this identification is valid and thus the same separability condition in the presence of thermal noise holds.

\textit{Conclusion---} We derived a sufficient separability condition for general multimode systems coupled under a bilinear interaction and subject to white thermal noise, thus extending the results of \cite{kafri2013,kafri2014}. We first treated purely Gaussian systems and derived a bound identical to the two-mode result in \cite{miki2026}, demonstrating that increasing the number of modes does not relax the necessary entanglement criterion. Though it is the least general bound in this paper, its proof is the most explicit in showing how each component of the time-evolved system contributes to the separability condition, thus identifies the specific conditions under which the result could be boosted to a necessary and sufficient one. Additionally, to our knowledge, this is the first application of this specific separability condition to the study of entanglement in gravitating systems, thus making the proof technique of independent interest for future studies of multimode Gaussian systems. Using the techniques of \cite{diosi2017}, we 
constructed an LOCC scheme to reproduce the dynamics of the coupled systems and thus deriveed separability conditions for general multi-mode systems. Comparing with the Gaussian result shows that even allowing for non-Gaussianity does not relax the bound, further underscoring the fundamental competition between thermal noise and entanglement generation. Finally, this LOCC picture affords us full generality in applying our results to various gravity induced entanglement experiments, demonstrating the same competition between gravity and thermal noise exists ubiquitously. 
Because our treatment of the noise term is quite general, the LOCC constructions used throughout the proofs may be useful in serving as building blocks for identifying entanglement-generation thresholds in other open-system dynamics.

\textit{Acknowledgement---}D.M. is supported by the JSPS Overseas Research Fellowships and the Keck Foundation.  Y.C.\ and A.L.\ are supported by the Simons Foundation (Award Number 568762). 

\bibliography{reference}

@article{bose2017,
  title = {Spin Entanglement Witness for Quantum Gravity},
  author = {Bose, Sougato and Mazumdar, Anupam and Morley, Gavin W. and Ulbricht, Hendrik and Toro\ifmmode \check{s}\else \v{s}\fi{}, Marko and Paternostro, Mauro and Geraci, Andrew A. and Barker, Peter F. and Kim, M. S. and Milburn, Gerard},
  journal = {Phys. Rev. Lett.},
  volume = {119},
  issue = {24},
  pages = {240401},
  numpages = {6},
  year = {2017},
  month = {Dec},
  publisher = {American Physical Society},
  doi = {10.1103/PhysRevLett.119.240401},
  url = {https://link.aps.org/doi/10.1103/PhysRevLett.119.240401}
}

@article{marletto2017,
    author = "Marletto, Chiara and Vedral, Vlatko",
    title = "{Gravitationally-induced entanglement between two massive particles is sufficient evidence of quantum effects in gravity}",
    doi = "10.1103/PhysRevLett.119.240402",
    journal = "Phys. Rev. Lett.",
    volume = "119",
    number = "24",
    pages = "240402",
    year = "2017"
}

@article{marletto2025,
  title = {Quantum-information methods for quantum gravity laboratory-based tests},
  author = {Marletto, Chiara and Vedral, Vlatko},
  journal = {Rev. Mod. Phys.},
  volume = {97},
  issue = {1},
  pages = {015006},
  numpages = {29},
  year = {2025},
  month = {Mar},
  publisher = {American Physical Society},
  doi = {10.1103/RevModPhys.97.015006},
  url = {https://link.aps.org/doi/10.1103/RevModPhys.97.015006}
}

@article{miao2020,
  title = {Quantum correlations of light mediated by gravity},
  author = {Miao, Haixing and Martynov, Denis and Yang, Huan and Datta, Animesh},
  journal = {Phys. Rev. A},
  volume = {101},
  issue = {6},
  pages = {063804},
  numpages = {7},
  year = {2020},
  month = {Jun},
  publisher = {American Physical Society},
  doi = {10.1103/PhysRevA.101.063804},
  url = {https://link.aps.org/doi/10.1103/PhysRevA.101.063804}
}

@article{pedernales2022,
  title = {Enhancing Gravitational Interaction between Quantum Systems by a Massive Mediator},
  author = {Pedernales, Julen S. and Streltsov, Kirill and Plenio, Martin B.},
  journal = {Phys. Rev. Lett.},
  volume = {128},
  issue = {11},
  pages = {110401},
  numpages = {6},
  year = {2022},
  month = {Mar},
  publisher = {American Physical Society},
  doi = {10.1103/PhysRevLett.128.110401},
  url = {https://link.aps.org/doi/10.1103/PhysRevLett.128.110401}
}

@article{kaku2023,
  title = {Enhancement of quantum gravity signal in an optomechanical experiment},
  author = {Kaku, Youka and Fujita, Tomohiro and Matsumura, Akira},
  journal = {Phys. Rev. D},
  volume = {108},
  issue = {10},
  pages = {106014},
  numpages = {18},
  year = {2023},
  month = {Nov},
  publisher = {American Physical Society},
  doi = {10.1103/PhysRevD.108.106014},
  url = {https://link.aps.org/doi/10.1103/PhysRevD.108.106014}
}

@article{fujita2025,
  doi = {10.1088/1361-6382/adf0bb},
  url = {https://doi.org/10.1088/1361-6382/adf0bb},
  year = {2025},
  month = {aug},
  publisher = {IOP Publishing},
  volume = {42},
  number = {16},
  pages = {165003},
  author = {Fujita, Tomohiro and Kaku, Youka and Matsumura, Akira and Michimura, Yuta},
  title = {Inverted oscillators for testing gravity-induced quantum entanglement},
  journal = {Classical and Quantum Gravity}
}

@article{hatakeyama2026,
  title = {Theoretical study of the squeezed-light-enhanced sensitivity to gravity-induced entanglement via finite-time analysis},
  author = {Hatakeyama, Kosei and Miki, Daisuke and Yamamoto, Kazuhiro},
  journal = {Phys. Rev. D},
  volume = {113},
  issue = {2},
  pages = {024025},
  numpages = {14},
  year = {2026},
  month = {Jan},
  publisher = {American Physical Society},
  doi = {10.1103/1mfv-y24t},
  url = {https://link.aps.org/doi/10.1103/1mfv-y24t}
}

@article{fukuzumi2026,
  title = {Momentum squeezed state realized via optimal filtering in optomechanics: Implications for gravity-induced entanglement},
  author = {Fukuzumi, Ryotaro and Hatakeyama, Kosei and Miki, Daisuke and Yamamoto, Kazuhiro},
  journal = {Phys. Rev. Res.},
  volume = {8},
  issue = {2},
  pages = {023039},
  numpages = {10},
  year = {2026},
  month = {Apr},
  publisher = {American Physical Society},
  doi = {10.1103/zrs2-sk28},
  url = {https://link.aps.org/doi/10.1103/zrs2-sk28}
}

@misc{shiomatsu2025,
      title={Boosting Gravity-Induced Entanglement through Parametric Resonance}, 
      author={Yuka Shiomatsu and Youka Kaku and Akira Matsumura and Tomohiro Fujita},
      year={2025},
      eprint={2511.09169},
      archivePrefix={arXiv},
      primaryClass={gr-qc},
      url={https://arxiv.org/abs/2511.09169}, 
}

@misc{miki2026,
      title={Amplification and generation bounds of gravity-induced entanglement in pulsed optomechanical systems}, 
      author={Daisuke Miki and Alfred Li and Yanbei Chen},
      year={2026},
      eprint={in preparation},
      archivePrefix={arXiv},
      primaryClass={quant-ph}
}

@article{christopher2025,
  title = {Beyond entanglement: Diagnosing quantum mediator dynamics in gravitationally mediated experiments},
  author = {Christopher, P. George and Shankaranarayanan, S.},
  journal = {Phys. Rev. D},
  volume = {112},
  issue = {8},
  pages = {L081502},
  numpages = {7},
  year = {2025},
  month = {Oct},
  publisher = {American Physical Society},
  doi = {10.1103/njws-rs11},
  url = {https://link.aps.org/doi/10.1103/njws-rs11}
}

@article{peres1996,
  title = {Separability Criterion for Density Matrices},
  author = {Peres, Asher},
  journal = {Phys. Rev. Lett.},
  volume = {77},
  issue = {8},
  pages = {1413--1415},
  numpages = {0},
  year = {1996},
  month = {Aug},
  publisher = {American Physical Society},
  doi = {10.1103/PhysRevLett.77.1413},
  url = {https://link.aps.org/doi/10.1103/PhysRevLett.77.1413}
}

@article{horodecki1996,
  title = {Separability of mixed states: necessary and sufficient conditions},
  journal = {Physics Letters A},
  volume = {223},
  number = {1},
  pages = {1-8},
  year = {1996},
  issn = {0375-9601},
  doi = {https://doi.org/10.1016/S0375-9601(96)00706-2},
  url = {https://www.sciencedirect.com/science/article/pii/S0375960196007062},
  author = {Michał Horodecki and Paweł Horodecki and Ryszard Horodecki}
}

@article{duan2000,
  title = {Inseparability Criterion for Continuous Variable Systems},
  author = {Duan, Lu-Ming and Giedke, G. and Cirac, J. I. and Zoller, P.},
  journal = {Phys. Rev. Lett.},
  volume = {84},
  issue = {12},
  pages = {2722--2725},
  numpages = {0},
  year = {2000},
  month = {Mar},
  publisher = {American Physical Society},
  doi = {10.1103/PhysRevLett.84.2722},
  url = {https://link.aps.org/doi/10.1103/PhysRevLett.84.2722}
}

@article{simon2000,
  title = {Peres-Horodecki Separability Criterion for Continuous Variable Systems},
  author = {Simon, R.},
  journal = {Phys. Rev. Lett.},
  volume = {84},
  issue = {12},
  pages = {2726--2729},
  numpages = {0},
  year = {2000},
  month = {Mar},
  publisher = {American Physical Society},
  doi = {10.1103/PhysRevLett.84.2726},
  url = {https://link.aps.org/doi/10.1103/PhysRevLett.84.2726}
}

@book{serafini2017,
  author = {A. Serafini},
  title = "{Quantum Continuous Variables}",
  publisher = {CRC Press},
  year = {2017}
}

@article{diosi2017,
  author = {Di\'{o}si, Lajos and Tilloy, Antoine},
  title = {On GKLS Dynamics for Local Operations and Classical Communication},
  journal = {Open Systems \& Information Dynamics},
  volume = {24},
  number = {04},
  pages = {1740020},
  year = {2017},
  doi = {10.1142/S1230161217400200},
  URL = {https://doi.org/10.1142/S1230161217400200}
}

@misc{kafri2013,
      title={A noise inequality for classical forces}, 
      author={Dvir Kafri and J.M. Taylor},
      year={2013},
      eprint={1311.4558}
}

@article{kafri2014,
  doi = {10.1088/1367-2630/16/6/065020},
  url = {https://doi.org/10.1088/1367-2630/16/6/065020},
  year = {2014},
  month = {jun},
  publisher = {IOP Publishing},
  volume = {16},
  number = {6},
  pages = {065020},
  author = {Kafri, D and Taylor, J M and Milburn, G J},
  title = {A classical channel model for gravitational decoherence},
  journal = {New Journal of Physics}
}

@article{mari2016,
  title={Experiments testing macroscopic quantum superpositions must be slow},
  author={Mari, A. and De Palma, G. and Giovannetti, V.},
  journal={Scientific Reports},
  volume={6},
  pages={22777},
  year={2016}
}

@article{belenchia2018,
  title = {Quantum superposition of massive objects and the quantization of gravity},
  author = {Belenchia, Alessio and Wald, Robert M. and Giacomini, Flaminia and Castro-Ruiz, Esteban and Brukner, \ifmmode \check{C}\else \v{C}\fi{}aslav and Aspelmeyer, Markus},
  journal = {Phys. Rev. D},
  volume = {98},
  issue = {12},
  pages = {126009},
  numpages = {9},
  year = {2018},
  month = {Dec},
  publisher = {American Physical Society},
  doi = {10.1103/PhysRevD.98.126009},
  url = {https://link.aps.org/doi/10.1103/PhysRevD.98.126009}
}

@article{danielson2022,
  title = {Gravitationally mediated entanglement: Newtonian field versus gravitons},
  author = {Danielson, Daine L. and Satishchandran, Gautam and Wald, Robert M.},
  journal = {Phys. Rev. D},
  volume = {105},
  issue = {8},
  pages = {086001},
  numpages = {11},
  year = {2022},
  month = {Apr},
  publisher = {American Physical Society},
  doi = {10.1103/PhysRevD.105.086001},
  url = {https://link.aps.org/doi/10.1103/PhysRevD.105.086001}
}

@article{sugiyama2023,
  title = {Quantum uncertainty of gravitational field and entanglement in superposed massive particles},
  author = {Sugiyama, Yuuki and Matsumura, Akira and Yamamoto, Kazuhiro},
  journal = {Phys. Rev. D},
  volume = {108},
  issue = {10},
  pages = {105019},
  numpages = {13},
  year = {2023},
  month = {Nov},
  publisher = {American Physical Society},
  doi = {10.1103/PhysRevD.108.105019},
  url = {https://link.aps.org/doi/10.1103/PhysRevD.108.105019}
}

@article{sugiyama2024,
  title = {Quantumness of the gravitational field: A perspective on monogamy relation},
  author = {Sugiyama, Yuuki and Matsumura, Akira and Yamamoto, Kazuhiro},
  journal = {Phys. Rev. D},
  volume = {110},
  issue = {4},
  pages = {045016},
  numpages = {21},
  year = {2024},
  month = {Aug},
  publisher = {American Physical Society},
  doi = {10.1103/PhysRevD.110.045016},
  url = {https://link.aps.org/doi/10.1103/PhysRevD.110.045016}
}

@article{carney2022,
  title = {Newton, entanglement, and the graviton},
  author = {Carney, Daniel},
  journal = {Phys. Rev. D},
  volume = {105},
  issue = {2},
  pages = {024029},
  numpages = {17},
  year = {2022},
  month = {Jan},
  publisher = {American Physical Society},
  doi = {10.1103/PhysRevD.105.024029},
  url = {https://link.aps.org/doi/10.1103/PhysRevD.105.024029}
}

@book{schlosshauer2007decoherence,
  title={Decoherence: And the Quantum-To-Classical Transition},
  author={Schlosshauer, M.},
  isbn={9783540357735},
  lccn={2007930038},
  series={The Frontiers Collection},
  url={https://books.google.com/books?id=1qrJUS5zNbEC},
  year={2007},
  publisher={Springer}
}

@article{H_nggi_2005,
   title={Fundamental aspects of quantum Brownian motion},
   volume={15},
   ISSN={1089-7682},
   url={http://dx.doi.org/10.1063/1.1853631},
   DOI={10.1063/1.1853631},
   number={2},
   journal={Chaos: An Interdisciplinary Journal of Nonlinear Science},
   publisher={AIP Publishing},
   author={Hänggi, Peter and Ingold, Gert-Ludwig},
   year={2005},
   month=jun }

@article{Ghosh_2024,
   title={Independent-oscillator model and the quantum Langevin equation for an oscillator: a review},
   volume={2024},
   ISSN={1742-5468},
   url={http://dx.doi.org/10.1088/1742-5468/ad5711},
   DOI={10.1088/1742-5468/ad5711},
   number={7},
   journal={Journal of Statistical Mechanics: Theory and Experiment},
   publisher={IOP Publishing},
   author={Ghosh, Aritra and Bandyopadhyay, Malay and Dattagupta, Sushanta and Gupta, Shamik},
   year={2024},
   month=jul, pages={074002} }

@book{horn2012matrix,
  author    = {Roger A. Horn and Charles R. Johnson},
  title     = {Matrix Analysis},
  edition   = {2},
  year      = {2012},
  publisher = {Cambridge University Press},
  address   = {Cambridge},
  isbn      = {978-0-521-83940-2},
  doi       = {10.1017/CBO9781139020411}
}

@misc{SM,
  note = {See Supplemental Material for additional details about open system dynamics and full proof of universal bound.}
}

@misc{fabiano2026,
      title={Minimal noise in non-quantized gravity}, 
      author={Giuseppe Fabiano and Tomohiro Fujita and Akira Matsumura and Daniel Carney},
      year={2026},
      eprint={2603.26075}
}

@article{bose2025rmp,
  title = {Massive quantum systems as interfaces of quantum mechanics and gravity},
  author = {Bose, Sougato and Fuentes, Ivette and Geraci, Andrew A. and Khan, Saba Mehsar and Qvarfort, Sofia and Rademacher, Markus and Rashid, Muddassar and Toro{\v{s}}, Marko and Ulbricht, Hendrik and Wanjura, Clara C.},
  journal = {Rev. Mod. Phys.},
  volume = {97},
  issue = {1},
  pages = {015003},
  year = {2025},
  month = {Feb},
  publisher = {American Physical Society},
  doi = {10.1103/RevModPhys.97.015003},
  url = {https://link.aps.org/doi/10.1103/RevModPhys.97.015003}
}

@article{tang2025formfactor,
  title = {Optimal Form Factors for Experimental Proposals on Gravity-Induced Entanglement},
  author = {Tang, Ziqian and Xue, Hanyu and Han, Zizhao and Kan, Zikuan and Li, Zeji and Liu, Yulong},
  journal = {Phys. Rev. D},
  volume = {112},
  pages = {042004},
  year = {2025},
  doi = {10.1103/jznw-q3q8}
}

@article{miki2024feasible,
  title = {Feasible generation of gravity-induced entanglement by using optomechanical systems},
  author = {Miki, Daisuke and Matsumura, Akira and Yamamoto, Kazuhiro},
  journal = {Phys. Rev. D},
  volume = {110},
  pages = {024057},
  year = {2024},
  doi = {10.1103/PhysRevD.110.024057}
}

@article{datta2021signatures,
  title = {Signatures of the quantum nature of gravity in the differential motion of two masses},
  author = {Datta, Animesh and Miao, Haixing},
  journal = {Quantum Science and Technology},
  volume = {6},
  pages = {045014},
  year = {2021},
  doi = {10.1088/2058-9565/ac1adf}
}

@article{plato2023enhanced,
  title = {Enhanced Gravitational Entanglement via Modulated Optomechanics},
  author = {Plato, A. Douglas K. and R{\"a}tzel, Dennis and Wan, Chuanqi},
  journal = {Quantum},
  volume = {7},
  pages = {1177},
  year = {2023},
  doi = {10.22331/q-2023-11-08-1177}
}

@article{schut2022resilience,
  title = {Improving resilience of quantum-gravity-induced entanglement of masses to decoherence using three superpositions},
  author = {Schut, Martine and Tilly, Jules and Marshman, Ryan J. and Bose, Sougato and Mazumdar, Anupam},
  journal = {Phys. Rev. A},
  volume = {105},
  pages = {032411},
  year = {2022},
  doi = {10.1103/PhysRevA.105.032411}
}

@article{tilly2021qudits,
  title = {Qudits for Witnessing Quantum Gravity Induced Entanglement of Masses Under Decoherence},
  author = {Tilly, Jules and Marshman, Ryan J. and Mazumdar, Anupam and Bose, Sougato},
  journal = {Phys. Rev. A},
  volume = {104},
  pages = {052416},
  year = {2021},
  doi = {10.1103/PhysRevA.104.052416}
}

@article{rijavec2021decoherence,
  title = {Decoherence effects in non-classicality tests of gravity},
  author = {Rijavec, Simone and Carlesso, Matteo and Bassi, Angelo and Vedral, Vlatko and Marletto, Chiara},
  journal = {New Journal of Physics},
  volume = {23},
  pages = {043040},
  year = {2021},
  doi = {10.1088/1367-2630/abf3eb}
}

@article{krisnanda2020observable,
  title = {Observable quantum entanglement due to gravity},
  author = {Krisnanda, Tanjung and Tham, Guo Yao and Paternostro, Mauro and Paterek, Tomasz},
  journal = {npj Quantum Information},
  volume = {6},
  pages = {12},
  year = {2020},
  doi = {10.1038/s41534-020-0243-y},
}

@article{qvarfort2020mesoscopic,
  title = {Mesoscopic entanglement through central-potential interactions},
  author = {Qvarfort, Sofia and Bose, Sougato and Serafini, Alessio},
  journal = {Journal of Physics B: Atomic, Molecular and Optical Physics},
  volume = {53},
  number = {23},
  pages = {235501},
  year = {2020},
  doi = {10.1088/1361-6455/abbe8d}
}

@article{schmole2016micromechanical,
  title = {A micromechanical proof-of-principle experiment for measuring the gravitational force of milligram masses},
  author = {Schm{\"o}le, Jonas and Dragosits, Mathias and Hepach, Hans and Aspelmeyer, Markus},
  journal = {Classical and Quantum Gravity},
  volume = {33},
  number = {12},
  pages = {125031},
  year = {2016},
  doi = {10.1088/0264-9381/33/12/125031}
}

@article{matsumoto2019mg,
  title = {Demonstration of Displacement Sensing of a mg-Scale Pendulum for mm- and mg-Scale Gravity Measurements},
  author = {Matsumoto, Nobuyuki and Sugawara, Masakazu and Suzuki, Seiya and Abe, Naofumi and Komori, Kentaro and Michimura, Yuta and Aso, Yoichi and Cata{\~n}o-Lopez, Seth B. and Edamatsu, Keiichi},
  journal = {Phys. Rev. Lett.},
  volume = {122},
  pages = {071101},
  year = {2019},
  doi = {10.1103/PhysRevLett.122.071101}
}

@article{catano2020highq,
  title = {High-Q Milligram-Scale Monolithic Pendulum for Quantum-Limited Gravity Measurements},
  author = {Cata{\~n}o-Lopez, Seth B. and Santiago-Condori, Jordy G. and Edamatsu, Keiichi and Matsumoto, Nobuyuki},
  journal = {Phys. Rev. Lett.},
  volume = {124},
  pages = {221102},
  year = {2020},
  doi = {10.1103/PhysRevLett.124.221102}
}

@article{komori2020atto,
  title = {Attonewton-meter torque sensing with a macroscopic optomechanical torsion pendulum},
  author = {Komori, Kentaro and Enomoto, Yutaro and Ooi, Ching Pin and Miyazaki, Yuki and Matsumoto, Nobuyuki and Sudhir, Vivishek and Michimura, Yuta and Ando, Masaki},
  journal = {Phys. Rev. A},
  volume = {101},
  pages = {011802},
  year = {2020},
  doi = {10.1103/PhysRevA.101.011802}
}

@article{agafonova2026milligram,
  title = {One-milligram torsional pendulum toward experiments at the quantum-gravity interface},
  author = {Agafonova, Sofia and Rossell{\'o}, Pere and Mekonnen, Manuel and Hosten, Onur},
  journal = {Communications Physics},
  volume = {9},
  pages = {80},
  year = {2026},
  doi = {10.1038/s42005-026-02514-w}
}

@article{vinante2020ultralow,
  title = {Ultralow mechanical damping with Meissner-levitated ferromagnetic microparticles},
  author = {Vinante, A. and Falferi, P. and Gasbarri, G. and Setter, A. and Timberlake, C. and Ulbricht, H.},
  journal = {Phys. Rev. Applied},
  volume = {13},
  pages = {064027},
  year = {2020},
  doi = {10.1103/PhysRevApplied.13.064027}
}

@article{hofer2023highq,
  title = {High-Q Magnetic Levitation and Control of Superconducting Microspheres at Millikelvin Temperatures},
  author = {Hofer, Joachim and Gross, Rudolf and Higgins, Gerard and Huebl, Hans and Kieler, Oliver F. and Kleiner, Reinhold and Koelle, Dieter and Schmidt, Philip and Slater, Joshua A. and Trupke, Michael and Uhl, Kevin and Weimann, Thomas and Wieczorek, Witlef and Aspelmeyer, Markus},
  journal = {Phys. Rev. Lett.},
  volume = {131},
  pages = {043603},
  year = {2023},
  doi = {10.1103/PhysRevLett.131.043603}
}

@article{fuchs2024measuring,
  title = {Measuring gravity with milligram levitated masses},
  author = {Fuchs, Tim M. and Uitenbroek, Dennis G. and Plugge, Jaimy and van Halteren, Noud and van Soest, Jean-Paul and Vinante, Andrea and Ulbricht, Hendrik and Oosterkamp, Tjerk H.},
  journal = {Science Advances},
  volume = {10},
  pages = {eadk2949},
  year = {2024},
  doi = {10.1126/sciadv.adk2949}
}

@article{timberlake2021modified,
  title = {Probing modified gravity with magnetically levitated resonators},
  author = {Timberlake, Chris and Vinante, Andrea and Shankar, Francesco and Lapi, Andrea and Ulbricht, Hendrik},
  journal = {Phys. Rev. D},
  volume = {104},
  pages = {L101101},
  year = {2021},
  doi = {10.1103/PhysRevD.104.L101101}
}

@article{delic2020cooling,
  title = {Cooling of a levitated nanoparticle to the motional quantum ground state},
  author = {Deli{\'c}, Uro{\v{s}} and Reisenbauer, Manuel and Dare, Kahan and Grass, David and Vuleti{\'c}, Vladan and Kiesel, Nikolai and Aspelmeyer, Markus},
  journal = {Science},
  volume = {367},
  pages = {892--895},
  year = {2020},
  doi = {10.1126/science.aba3993}
}

@article{tebbenjohanns2021quantum,
  title = {Quantum control of a nanoparticle optically levitated in cryogenic free space},
  author = {Tebbenjohanns, Felix and Mattana, M. Luisa and Rossi, Massimiliano and Frimmer, Martin and Novotny, Lukas},
  journal = {Nature},
  volume = {595},
  pages = {378--382},
  year = {2021},
  doi = {10.1038/s41586-021-03617-w}
}

@article{piotrowski2023simultaneous,
  title = {Simultaneous ground-state cooling of two mechanical modes of a levitated nanoparticle},
  author = {Piotrowski, Johannes and Windey, Dominik and Vijayan, Jayadev and Gonzalez-Ballestero, Carlos and de los R{\'i}os Sommer, Andr{\'e}s and Meyer, Nadine and Quidant, Romain and Romero-Isart, Oriol and Reimann, Ren{\'e} and Novotny, Lukas},
  journal = {Nature Physics},
  volume = {19},
  pages = {1009--1013},
  year = {2023},
  doi = {10.1038/s41567-023-01956-1}
}

@article{mari2025,
  title = {Can gravity mediate the transmission of quantum information?},
  author = {Mari, Andrea and Zippilli, Stefano and Vitali, David},
  journal = {Phys. Rev. D},
  volume = {113},
  issue = {2},
  pages = {L021905},
  numpages = {8},
  year = {2026},
  month = {Jan},
  publisher = {American Physical Society},
  doi = {10.1103/pfvz-fd54},
  url = {https://link.aps.org/doi/10.1103/pfvz-fd54}
}

@article{Lami2024,
  title = {Testing the Quantumness of Gravity without Entanglement},
  author = {Lami, Ludovico and Pedernales, Julen S. and Plenio, Martin B.},
  journal = {Phys. Rev. X},
  volume = {14},
  issue = {2},
  pages = {021022},
  numpages = {47},
  year = {2024},
  month = {May},
  publisher = {American Physical Society},
  doi = {10.1103/PhysRevX.14.021022},
  url = {https://link.aps.org/doi/10.1103/PhysRevX.14.021022}
}

@article{Oshima:2023Du,
  author = "Oshima, Yuka  and  Takano, Satoru  and  Ooi, Ching Pin  and  Choi, Minseo  and  Cao, Mengdi  and  Michimura, Yuta  and  Komori, Kentaro  and  Ando, Masaki",
  title = "{Development of Torsion-Bar Antenna for Low-Frequency Gravitational-Wave Observation}",
  doi = "10.22323/1.444.1584",
  journal = "PoS",
  year = 2023,
  volume = "ICRC2023",
  pages = "1584"
}

\clearpage
\onecolumngrid
\raggedbottom
\setcounter{equation}{0}
\renewcommand{\theequation}{S\arabic{equation}}

\begin{center}
    {\large\bfseries Supplementary Material: Universal Bound for Entanglement Generation}
\end{center}

\vspace{1em}

\appendix
\section{I. Discussion on $\hat{F}_{H}(t)$ vs $\hat{F}_{I}(t)$}
\label{backaction appendix}
The claim is that:
\begin{equation}
    \hat{F}_{H}(t) = \hat{F}_{I}(t) + (\text{second-order})
\end{equation}
where as a reminder, the subscript $H$ represents Heisenberg picture and $I$ represents the interaction picture where we time-evolve operators with system-only and bath-only Hamiltonians. Here we discuss what it physically means to drop the second-order term in our analysis. To see this most clearly, consider the Caldeira-Leggett Model \cite{H_nggi_2005}:
\begin{align}
   \hat{H}_{\text{tot}} &= \hat{H}_{\text{sys}}+\hat{H}_{\text{bath}}+\hat{H}_{\text{sys-bath}},\\
        \hat{H}_{\text{sys}} &= \frac{\hat{p}^{2}}{2M}+V(\hat{q},t),\\
       \hat{H}_{\text{bath}} &= \sum_{j}\left(\frac{\hat{p}_{j}^2}{2m_{j}}+\frac{1}{2}m_{j}\omega_{j}^2\hat{q}_{j}^2\right),\\
    \hat{H}_{\text{sys-bath}} &= \hat{q}\sum_{j}c_{j}\hat{q}_{j} \equiv \hat{q}\hat{F}
\end{align}
where we have ignored the counterterm for now in lieu of the fact that it can always be absorbed into the system $V(\hat{q})$ term to define a new renormalized frequency. Additionally, we abuse notation here with the understanding that no subscripts implies system operator and operators with subscripts are bath operators. For each bath mode, we have from the Heisenberg equation of motion:
\begin{equation}
    \begin{split}
        \dot{\hat{q}}_{j,H} &= \frac{\hat{p}_{j,H}(t)}{m_{j}},\\
        \dot{\hat{p}}_{j,H} &= -m_{j}\omega_{j}^{2}q_{j,H}(t)-\hat{q}_{H}(t)c_{j}.
    \end{split}
\end{equation}
Differentiating the first one with respect to time again and plugging in the second equation gives:
\begin{equation}
    \ddot{\hat{q}}_{j,H}(t)+\omega_{j}^{2}\hat{q}_{j,H}(t)=-\frac{c_{j}}{m_{j}}\hat{q}_{H}(t).
\end{equation}
This is a driven oscillator that can be solved using Green's function:
\begin{equation}
    \hat{q}_{j,H}(t) = \left[\hat{q}_{j}(0)\cos(\omega_{j}t)+\frac{\hat{p}_{j}(0)}{m_{j}\omega_{j}}\sin(\omega_{j}t)\right]-\frac{c_{j}}{m_{j}\omega_{j}}\int_{0}^{t}\sin(\omega_{j}(t-s))\hat{q}_{H}(s)ds.
\end{equation}
Now notice that the terms in bracket correspond exactly to the Heisenberg equation of motion corresponding to evolution under the bath Hamiltonian only. In terms of our force operator, this gives:
\begin{equation}
\begin{split}
    \hat{F}_{H}(t) &= \hat{h}(t)-\int_{0}^{t}\mu(t-s)\hat{q}_{H}(s)ds, \\
    \mu(t) &\equiv \sum_{j}\frac{c_{j}^2}{m_{j}\omega_{j}}\sin(\omega_{j}t),\\
     \hat{h}(t) &\equiv \sum_{j}c_{j} \left[\hat{q}_{j}(0)\cos(\omega_{j}t)+\frac{\hat{p}_{j}(0)}{m_{j}\omega_{j}}\sin(\omega_{j}t)\right].
\end{split}
\end{equation}
Notice that $\hat{h}(t) = \hat{F}_{I}(t)$ and $\mu(t)$ encodes the effects of the system operator on the bath operator which we refer to as the second-order kernel.

Now consider the Heisenberg equations for the system operators:
\begin{equation}
    \begin{split}
        \dot{\hat{q}}_{H}&=\frac{\hat{p}_{H}(t)}{M},\\
        \dot{\hat{p}}_{H} &= -\frac{dV(\hat{q}_{H}(t),t)}{d\hat{q}}-\hat{F}_{H}(t).
    \end{split}
\end{equation}
Plugging in everything gives
\begin{equation}
    M\ddot{\hat{q}}_{H}(t)+\frac{dV(\hat{q}_{H}(t),t)}{d\hat{q}}=-\hat{h}(t)+\int_{0}^{t}ds \ \mu(t-s)\hat{q}_{H}(s).
\end{equation}
Now define the damping kernel:
\begin{equation}
    \gamma(t) \equiv \frac{1}{M}\sum_{j}\frac{c_{j}^2}{m_{j}\omega_{j}^2}\cos(\omega_{j}t)    
\end{equation}
and notice that $\mu(t) = -M\dot{\gamma}(t)$. We then have:
\begin{equation}
    \int_{0}^{t}ds \ \mu(t-s)\hat{q}_{H}(s) = M\int_{0}^{t}ds\frac{d}{ds}\gamma(t-s)\hat{q}_{H}(s)=M\gamma(0)\hat{q}_{H}(t)-M\gamma(t)\hat{q}_{H}(0)-M\int_{0}^{t}ds\gamma(t-s)\dot{\hat{q}}_{H}(s).
\end{equation}
Thus, the full Langevin equation becomes:
\begin{equation}
    M\ddot{\hat{q}}_{H}(t) + \frac{dV(\hat{q}_{H}(t),t)}{d\hat{q}}-M\gamma(0)\hat{q}_{H}(t)+M\int_{0}^{t}ds\gamma(t-s)\dot{\hat{q}}_{H}(s)=-\hat{h}(t)-M\gamma(t)\hat{q}_{H}(0).
\end{equation}
The $M\gamma(0)\hat{q}_{H}(t)$ term would be eliminated exactly by the counterterm $\frac{1}{2}\sum_{j}\frac{c_{j}^{2}}{m_{j}\omega_{j}^2}\hat{q}^2$ that we have dropped. Nevertheless, the key is that we can identify the equation of a forced harmonic oscillator from the Langevin equation and the second-order term from $\hat{F}_{H}(t)$ is exactly associated with damping (the term proportional to $\gamma \dot{\hat{q}}_{H}$). The same argument holds for the most general force operator linear in canonical variables i.e. $\hat{F}_{a} = \sum_{j}c_{aj}\hat{x}_{j}+d_{aj}\hat{p}_{j}$ with $\hat{H}_{\text{sys-bath}} = \sum_{a}\hat{Q}_{a}\hat{F}_{a}$ for $\hat{Q}_{a}$ some system operator linear in canonical variables. Thus, dropping the second-order term is equivalent to working on a timescale where frictional contributions can be neglected.

\section{II. Full Proof of Gaussian Calculation}
\label{pert appendix}
Here we provide the full proof of Theorem \ref{pert-theorem}. Our system Hamiltonian is described by \eqref{rank 1 interaction hamiltonian}, from which we derive the Langevin equation:
\begin{equation}
\begin{split}
    \frac{d}{dt}\hat{\bm{\xi}}_{H}(t) &=\left[ 
    \begin{pmatrix}
    \mathbf{M}_A & \\
    & \mathbf{M}_B
    \end{pmatrix}
        +
    K_g \begin{pmatrix}
  & \bm{\eta}\mathbf{u}_A \mathbf{u}_B^T \\
    \bm{\eta}\mathbf{u}_B \mathbf{u}_A^T &
    \end{pmatrix}\right] \hat{\bm{\xi}}_{H}(t) +     \begin{pmatrix}
    \bm{\eta}\mathbf{u}_A \hat{F}^{\rm th}_{A,H}(t) \\
    \bm{\eta}\mathbf{u}_B \hat{F}^{\rm th}_{B,H}(t)           
    \end{pmatrix}\equiv\bm{M}\hat{\bm{\xi}}_{H}(t)+K_{g}\bm{C}\hat{\bm{\xi}}_{H}(t)+\hat{\bm{n}}_{H}(t).
\end{split}
\end{equation}
Dropping the second-order terms in $\hat{F}_{H}(t)$, the solution of the Langevin equation is 
\begin{equation}
    \hat{\bm{\xi}}_{H}(t) = e^{(\bm{M}+K_{g}\bm{C})t}\hat{\bm{\xi}}_{H}(0)+\int_{0}^{t}ds\ e^{(\bm{M}+K_{g}\bm{C})(t-s)}\hat{\bm{n}}_{I}(s).
\end{equation}
and thus the covariance-matrix as a function of time is:
\begin{equation}
        \bm{V}(t) = e^{(\bm{M}+K_{g}\bm{C})t}\bm{V}(0)e^{(\bm{M}+K_{g}\bm{C})^{T}t} + \int_{0}^{t}du \ e^{(\bm{M}+K_{g}\bm{C})u}\bm{F}e^{(\bm{M}+K_{g}\bm{C})^{T}u}
\end{equation}
where 
\begin{equation}
    \bm{F} = \begin{pmatrix}
        \bm{\eta}\textbf{u}_{A}\textbf{u}_{A}^{T}\bm{\eta}^{T}S^{\text{th}}_{A} & 0\\
        0 & \bm{\eta}\textbf{u}_{B}\textbf{u}_{B}^{T}\bm{\eta}^{T}S^{\text{th}}_{B}
    \end{pmatrix}
\end{equation}
and we have used a white free-bath noise spectrum: $\frac{1}{2}\langle\{\hat{F}_{A,I}(t),\hat{F}_{B,I}(t')\}\rangle = \delta_{AB}\delta(t-t')S^{\text{th}}_{A}$.

A Gaussian state of any number of modes is separable if and only if there exists $\bm{\sigma}_{A}$, $\bm{\sigma}_{B}$ such that~\cite{serafini2017}:
\begin{equation}
    \bm{V} \succeq \bm{\sigma}_{A} \oplus\bm{\sigma}_{B}
\end{equation}
where $\bm{V}$ is the covariance matrix of the state, and $\bm{\sigma}_{A}$ and $\bm{\sigma}_{B}$ satisfy $\bm{V}_{j}+\frac{i}{2}\bm{\Omega} \geq 0$ for $j\in\{A,B\}$. In our case, this means 
\begin{equation}
\label{initial criteria}
    \bm{V}(t) = \bm{\Phi}(t)\bm{V}(0)\bm{\Phi}^{T}(t)+\int_{0}^{t}du \ \bm{\Phi}(u)\bm{F}\bm{\Phi}^{T}(u) = (\bm{\sigma}_{A}\oplus\bm{\sigma}_{B})+\bm{N}
\end{equation}
for some $\bm{N}\succeq 0$ and $\bm{\Phi}(t) \equiv e^{(\bm{M}+K_{g}\bm{C})t}$. Note that for $\bm{S}_{\text{loc}} \equiv \bm{S}_{A}\oplus\bm{S}_{B}$ and $\bm{S}_{A,B}$ symplectic, then
\begin{equation}
    \bm{V}\succeq\bm{\sigma}_{A}\oplus\bm{\sigma}_{B} \Longleftrightarrow \bm{S}_{\text{loc}}\bm{\sigma}\bm{S}^{T}_{\text{loc}} \succeq (\bm{S}_{A}\bm{\sigma}_{A}\bm{S}^{T}_{A})\oplus(\bm{S}_{B}\bm{\sigma}_{B}\bm{S}^{T}_{B})
\end{equation}

Suppose $\bm{V}(0)$ is a block-diagonal pure Gaussian state to first order in small parameters (specifically in $ S_{A,B}$), such that it admits the Williamson decomposition $\bm{V}(0) = \bm{S}^{-1}(\frac{1}{2}\mathbf{1})\bm{S}^{-T}$.
We choose this $\bm{S}$ as our canonical transformation and conjugate both sides of the equation (we assume $\bm{S} = \bm{S}_{A} \oplus \bm{S}_{B})$:
\begin{equation}
    \widetilde{\bm{V}}(t) = \widetilde{\bm{\Phi}}(t)\left(\frac{1}{2}\mathbf{1}\right)\widetilde{\bm{\Phi}}^{T}(t) + \int_{0}^{t}du \ \widetilde{\bm{\Phi}}(u)\widetilde{\bm{F}}\widetilde{\bm{\Phi}}^{T}(u)
\end{equation}
where $\widetilde{\bm{V}}(t) \equiv \bm{S}\bm{V}(t)\bm{S}^{T}$, $\widetilde{\bm{\Phi}}(t) \equiv \bm{S}\bm{\Phi}(t)\bm{S}^{-1}$, and $\widetilde{\bm{F}} \equiv \bm{S}\bm{F}\bm{S}^{T}$. Note that conjugation by symplectic matrices preserves Gaussianity and definiteness.

Expanding the first term gives:
\begin{equation}
    \begin{split}
        \bm{SMS}^{-1} &= \begin{pmatrix}
            \bm{S}_{A}\bm{M}_{A}\bm{S}^{-1}_{A} & \\
            & \bm{S}_{B}\bm{M}_{B}\bm{S}^{-1}_{B}
        \end{pmatrix}
        \equiv \begin{pmatrix}
            \bm{M}'_{A} &\\
            & \bm{M}'_{B}
        \end{pmatrix} \equiv \bm{M}'\\
        \bm{SCS}^{-1} &= \begin{pmatrix}
            & \bm{S}_{A}\bm{\eta}\textbf{u}_{A}\textbf{u}_{B}^{T}\bm{S}_{B}^{-1}\\
            \bm{S}_{B}\bm{\eta}\textbf{u}_{B}\textbf{u}_{A}^{T}\bm{S}_{A}^{-1}
        \end{pmatrix} \equiv \begin{pmatrix} &
            \bm{\eta}\textbf{w}_{A}\textbf{w}_{B}^{T}\\
            \bm{\eta}\textbf{w}_{B}\textbf{w}_{A}^{T}
        \end{pmatrix} \equiv \bm{C}'\\
        \Longrightarrow  \widetilde{\bm{\Phi}}(t) &= \bm{S}\bm{\Phi}(t)\bm{S}^{-1} =e^{\bm{S}(\bm{M}+K_{g}\bm{C})\bm{S}^{-1}t} = e^{(\bm{M}'+K_{g}\bm{C}')t}
    \end{split}
\end{equation}
where we defined $\textbf{w}_{i} \equiv \bm{S}^{-T}_{i}\textbf{u}_{i}$ for $i\in \{A,B\}$ and $e^{\bm{M}'t}$ is still block symplectic. The remaining term to expand is the thermal noise matrix:
\begin{equation}
    \begin{split}
        \widetilde{\bm{F}} \equiv \bm{SFS}^{T} = \begin{pmatrix}
            \bm{S}_{A}\bm{\eta}\textbf{u}_{A}\textbf{u}_{A}^{T}\bm{\eta}^{T}\bm{S}_{A}^{T}S^{\text{th}}_{A} &\\
            & \bm{S}_{B}\bm{\eta}\textbf{u}_{B}\textbf{u}_{B}^{T}\bm{\eta}^{T}\bm{S}_{B}^{T}S^{\text{th}}_{B}
        \end{pmatrix} \equiv \begin{pmatrix}
            \bm{\eta}\textbf{w}_{A}\textbf{w}_{A}^{T}\bm{\eta}^{T}S^{\text{th}}_{A} & \\
            & \bm{\eta}\textbf{w}_{B}\textbf{w}_{B}^{T}\bm{\eta}^{T}S^{\text{th}}_{B}
        \end{pmatrix}
    \end{split}
\end{equation}

Now we expand all terms to first order in $K_{g}t,S_{A}t,S_{B}t$:
\begin{equation}
\begin{aligned}
    \widetilde{\bm{V}}(t) &= \left(e^{\bm{M}'t}+K_{g}\int_{0}^{t}e^{\bm{M}'(t-s)}\bm{C}'e^{\bm{M}'s}ds\right)\left(\frac{1}{2}\mathbf{1} \right)\left(e^{\bm{M}'t}+K_{g}\int_{0}^{t}e^{\bm{M}'(t-s)}\bm{C}'e^{\bm{M}'s}ds\right)^{T}+ \int_{0}^{t}e^{\bm{M}'u}\widetilde{\bm{F}}e^{\bm{M}'^{T}u}du\\
    &=\frac{1}{2}\left[e^{\bm{M}'t}e^{\bm{M'^{T}}t}+K_{g}\int_{0}^{t}e^{\bm{M}'(t-s)}\bm{C}'e^{\bm{M}'s}dse^{\bm{M}'^{T}t} + K_{g}e^{\bm{M}'t}\int_{0}^{t}e^{\bm{M}'^{T}s}\bm{C}'^{T}e^{\bm{M}'^{T}(t-s)}ds\right] + \int_{0}^{t}e^{\bm{M}'u}\widetilde{\bm{F}}e^{\bm{M}'^{T}u}du
\end{aligned}
\end{equation}
Note that $e^{\bm{M}'t} = e^{\bm{M}'_{A}t} \oplus e^{\bm{M}'_{B}t}$ with $e^{\bm{M}'_{A}t}, e^{\bm{M}'_{B}t}$ symplectic so we can conjugate everything by $e^{-\bm{M}'t}(\cdot)e^{-\bm{M}'^{T}t}$:
\begin{equation}
\begin{aligned}
    \widetilde{\bm{V}}(t) &= \frac{1}{2}\left[\mathbf{1}+K_{g}\int_{0}^{t}e^{-\bm{M}'s}\bm{C}'e^{\bm{M}'s}ds + K_{g}\int_{0}^{t}e^{\bm{M}'^{T}s}\bm{C}'^{T}e^{-\bm{M}'^{T}s}ds\right] + \int_{0}^{t}ds \ e^{-\bm{M}'s}\widetilde{\bm{F}}e^{-\bm{M}'^{T}s}
\end{aligned}
\end{equation}

We now introduce a perturbative term $\delta\bm{V}$ to translate the above problem into a physically equivalent, but simpler one to analyze:
\begin{equation}
    \delta\bm{V}(t) = \begin{pmatrix}
        S_{A}^{\text{th}}\frac{1}{2}\int_{0}^{t}ds\left[\textbf{v}_{A}'(s)\textbf{v}_{A}'^{T}(s) - \bm{\eta}\textbf{v}_{A}'(s)\textbf{v}_{A}'^{T}(s)\bm{\eta}^{T}\right] & \\
        &  S_{B}^{\text{th}}\frac{1}{2}\int_{0}^{t}ds\left[\textbf{v}_{B}'(s)\textbf{v}_{B}'^{T}(s) - \bm{\eta}\textbf{v}_{B}'(s)\textbf{v}_{B}'^{T}\bm{\eta}^{T}\right]
    \end{pmatrix}
\end{equation}
where we have defined $\textbf{v}'_{i}(s) \equiv e^{\bm{M}'^{T}_{i}s}\textbf{w}_{i}$ for $i\in\{A,B\}$.

Let us now check that $\frac{1}{2}\mathbf{1} \pm \delta\bm{V}(t)$ is still a valid covariance matrix. We use the fact that all pure state covariance matrices satisfy:
\begin{equation}
    \bm{V}_{\text{pure}}\begin{pmatrix}
        \bm{\eta} & \\
        & \bm{\eta}
    \end{pmatrix}\bm{V}_{\text{pure}} = \frac{1}{4}\begin{pmatrix}
        \bm{\eta} & \\
        & \bm{\eta}
    \end{pmatrix}
\end{equation}
Plugging in gives:
\begin{equation}
\begin{aligned}
    \left(\frac{1}{2}\mathbf{1} \pm \delta\bm{V}(t)\right)\begin{pmatrix}
        \bm{\eta} & \\
        & \bm{\eta}
    \end{pmatrix}\left(\frac{1}{2}\mathbf{1} \pm \delta\bm{V}(t)\right) &= \frac{1}{4}\begin{pmatrix}
        \bm{\eta} & \\
        & \bm{\eta}
    \end{pmatrix}\pm\frac{1}{2}\left(\begin{pmatrix}
        \bm{\eta} & \\
        & \bm{\eta}
    \end{pmatrix}\delta\bm{V}+\delta\bm{V}\begin{pmatrix}
        \bm{\eta} & \\
        & \bm{\eta}
    \end{pmatrix}\right)+\mathcal{O}(\delta\bm{V}^2)
\end{aligned}
\end{equation}
Thus we need to check that $\begin{pmatrix}
        \bm{\eta} & \\
        & \bm{\eta}
    \end{pmatrix}\delta\bm{V}+\delta\bm{V}\begin{pmatrix}
        \bm{\eta} & \\
        & \bm{\eta}
    \end{pmatrix}=0$. This leads to the equation:
\begin{equation}
    \begin{split}
        &S_{i}^{\text{th}}\frac{1}{2}\left\{\int_{0}^{t}ds\ \bm{\eta}\left[\textbf{v}_{i}'(s)\textbf{v}_{i}'^{T}(s) - \bm{\eta}\textbf{v}_{i}'(s)\textbf{v}_{i}'^{T}(s)\bm{\eta}^{T}\right] +         \left[\textbf{v}_{i}'(s)\textbf{v}_{i}'^{T}(s) - \bm{\eta}\textbf{v}_{i}'(s)\textbf{v}_{i}'^{T}(s)\bm{\eta}^{T}\right]\bm{\eta}\right\}\\
        &=\int_{0}^{t}ds\ S^{\text{th}}_{i}\frac{1}{2}[\bm{\eta}\textbf{v}'_{i}(s)\textbf{v}_{i}'^{T}(s)+\textbf{v}'_{i}(s)\textbf{v}_{i}'^{T}(s)\bm{\eta}^{T}+\textbf{v}_{i}'(s)\textbf{v}_{i}'^{T}(s)\bm{\eta}-\bm{\eta}\textbf{v}'_{i}(s)\textbf{v}_{i}'^{T}(s)]\\
        &=0
    \end{split}
\end{equation}
This holds for both $i\in\{A,B\}$ so to linear order, $\frac{1}{2}\mathbf{1}\pm\delta\bm{V}(t)$ is still a valid covariance matrix.

Now we can rewrite our covariance matrix as:
\begin{equation}
\begin{aligned}
    \widetilde{\bm{V}}(t) &= \widetilde{\bm{V}}(t) -\delta\bm{V}(t)+\delta\bm{V}(t)\\
    &=\left(\frac{1}{2}\mathbf{1}-\delta\bm{V}(t)\right) + \frac{1}{2}\left[K_{g}\int_{0}^{t}e^{-\bm{M}'s}\bm{C}'e^{\bm{M}'s}ds + K_{g}\int_{0}^{t}e^{\bm{M}'^{T}s}\bm{C}'^{T}e^{-\bm{M}'^{T}s}ds\right]+ \int_{0}^{t}ds \ e^{-\bm{M}'s}\widetilde{\bm{F}}e^{-\bm{M}'^{T}s}+\delta\bm{V}(t)\\
\end{aligned}
\end{equation}

Thus, to check separability (to linear order) of $\widetilde{\bm{V}}(t)$, it is sufficient to derive a positive semi-definite (PSD) condition for:
\begin{equation}
\begin{split}
\bm{V}_{g}(t)+\bm{V}_{\text{th}}(t) + \delta\bm{V}(t) &\equiv \frac{1}{2}\left[K_{g}\int_{0}^{t}e^{-\bm{M}'s}\bm{C}'e^{\bm{M}'s}ds + K_{g}\int_{0}^{t}e^{\bm{M}'^{T}s}\bm{C}'^{T}e^{-\bm{M}'^{T}s}ds\right] + \int_{0}^{t}ds \ e^{-\bm{M}'s}\widetilde{\bm{F}}e^{-\bm{M}'^{T}s} + \delta\bm{V}(t)\\
&\equiv \int_{0}^{t}ds \ [\bm{I}_{g}(s)+\bm{I}_{\text{th}}(s)+\delta\bm{I}(s)]
\end{split}
\end{equation}
Since an integral of PSD matrices is PSD, it suffices to just study the integrand, i.e.:
\begin{equation}
\begin{split}
    &\bm{I}_{g}(s)+\bm{I}_{\text{th}}(s)+\delta\bm{I}(s) = 
    \begin{pmatrix}
       S_{A}^{\text{th}}\frac{1}{2}\left[\textbf{v}_{A}'(s)\textbf{v}_{A}'^{T}(s) + \bm{\eta}\textbf{v}_{A}'(s)\textbf{v}_{A}'^{T}(s)\bm{\eta}^{T}\right] & \frac{K_{g}}{2}\left[\bm{\eta}\textbf{v}'_{A}(s)\textbf{v}'^{T}_{B}(s)+\textbf{v}'_{A}(s)\textbf{v}'^{T}_{B}(s)\bm{\eta}^{T}\right]\\
        \frac{K_{g}}{2}\left[\bm{\eta}\textbf{v}_{B}'(s)\textbf{v}'^{T}_{A}(s) + \textbf{v}'_{B}(s)\textbf{v}'^{T}_{A}(s)\bm{\eta}^{T}\right]& S_{B}^{\text{th}}\frac{1}{2}\left[\textbf{v}_{B}'(s)\textbf{v}_{B}'^{T}(s) + \bm{\eta}\textbf{v}_{B}'(s)\textbf{v}_{B}'^{T}\bm{\eta}^{T}(s)\right]
    \end{pmatrix}\\
\end{split}
\end{equation}

Instead of proceeding via generalized Schur complement, consider the following:

\begin{lemma}
For $\bm{M} = \bm{U}\bm{S}\bm{U}^{\dagger}$ and $\bm{G} = \bm{U}^{\dagger}\bm{U}$, then $\bm{M} \succeq 0$ if and only if $\bm{G}\bm{S}\bm{G}\succeq 0$.
\end{lemma}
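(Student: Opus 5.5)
The plan is to reduce both conditions to one statement: $\bm{S}$ is positive semidefinite when restricted to a common subspace. I take $\bm{S}$ to be Hermitian, so that $\bm{M}=\bm{U}\bm{S}\bm{U}^{\dagger}$ and $\bm{G}\bm{S}\bm{G}$ are Hermitian and positivity is meaningful. $\bm{U}$ may be rectangular and need not be invertible; this is exactly the case needed for the low-rank blocks built from $\textbf{v}'_{A}(s)$, $\bm{\eta}\textbf{v}'_{A}(s)$, etc.

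First step: write both quadratic forms. For any vector $\bm{x}$,
\begin{equation*}
\bm{x}^{\dagger}\bm{M}\bm{x}=(\bm{U}^{\dagger}\bm{x})^{\dagger}\bm{S}(\bm{U}^{\dagger}\bm{x}).
\end{equation*}
Hence $\bm{M}\succeq 0$ if and only if $\bm{z}^{\dagger}\bm{S}\bm{z}\ge 0$ for all $\bm{z}\in\mathrm{Ran}(\bm{U}^{\dagger})$. Similarly, $\bm{G}$ is Hermitian, so for any $\bm{y}$,
\begin{equation*}
\bm{y}^{\dagger}\bm{G}\bm{S}\bm{G}\bm{y}=(\bm{G}\bm{y})^{\dagger}\bm{S}(\bm{G}\bm{y}).
\end{equation*}
Hence $\bm{G}\bm{S}\bm{G}\succeq 0$ if and only if $\bm{S}$ is nonnegative on $\mathrm{Ran}(\bm{G})=\mathrm{Ran}(\bm{U}^{\dagger}\bm{U})$.

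Second step, which is the only real content: show $\mathrm{Ran}(\bm{U}^{\dagger}\bm{U})=\mathrm{Ran}(\bm{U}^{\dagger})$. The inclusion $\subseteq$ is immediate. For equality, use kernels. If $\bm{U}^{\dagger}\bm{U}\bm{y}=0$, then $\|\bm{U}\bm{y}\|^{2}=\bm{y}^{\dagger}\bm{U}^{\dagger}\bm{U}\bm{y}=0$, so $\ker(\bm{U}^{\dagger}\bm{U})=\ker\bm{U}$. Since $\bm{U}^{\dagger}\bm{U}$ is Hermitian, its range is $(\ker\bm{U}^{\dagger}\bm{U})^{\perp}=(\ker\bm{U})^{\perp}$. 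This equals $\mathrm{Ran}(\bm{U}^{\dagger})$ by the fundamental theorem of linear algebra. Both positivity conditions are therefore the same restricted-positivity condition on $\bm{S}$, which proves the equivalence.

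The main point to be careful about is that the lemma does not claim $\bm{M}\succeq0 \Leftrightarrow \bm{S}\succeq0$. That would fail when $\bm{U}$ has a nontrivial kernel, e.g.\ when $\textbf{v}'_{A}$ and $\bm{\eta}\textbf{v}'_{A}$ span a small subspace of a large phase space. The range identity is what lets us replace the possibly degenerate factor $\bm{U}$ by the Gram matrix $\bm{G}$. In the application this is useful because $\bm{G}$ is a small matrix of inner products among the vectors $\textbf{v}'_{i}(s)$ and $\bm{\eta}\textbf{v}'_{i}(s)$. Positivity of $\bm{I}_{g}+\bm{I}_{\text{th}}+\delta\bm{I}$ then reduces to a small matrix inequality, which should yield $S^{\text{th}}_{A}S^{\text{th}}_{B}\ge K_{g}^{2}$.
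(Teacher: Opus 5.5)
Your argument is correct and is essentially the paper's. Both reduce to $\mathrm{Ran}(\bm{G})=\mathrm{Ran}(\bm{U}^{\dagger})$ via $\ker(\bm{U}^{\dagger}\bm{U})=\ker\bm{U}$; the paper gets the range equality by rank--nullity rather than orthogonal complements, and proves the forward direction directly from the congruence $\bm{G}\bm{S}\bm{G}=\bm{U}^{\dagger}\bm{M}\bm{U}$.

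One aside in your last paragraph is wrong, though. A tall $\bm{U}$ whose columns ($\textbf{v}'_{A}$, $\bm{\eta}\textbf{v}'_{A}$, \dots) span only a small subspace of phase space still has trivial kernel whenever those columns are linearly independent. In that case $\mathrm{Ran}(\bm{U}^{\dagger})$ is the whole space and $\bm{M}\succeq0\Leftrightarrow\bm{S}\succeq0$ does hold. That equivalence can fail only when the columns are linearly dependent. Here that happens only if some $\textbf{v}'_{i}=0$, since $\textbf{v}'_{i}\perp\bm{\eta}\textbf{v}'_{i}$ with equal norms.
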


\begin{proof} Indeed, if $\bm{M} \succeq 0$, then for any $\textbf{z}$, we have 
\begin{equation}
    \textbf{z}^{\dagger}(\bm{G}\bm{S}\bm{G})\textbf{z} = \textbf{z}^{\dagger}\bm{U}^{\dagger}(\bm{U}\bm{S}\bm{U^{\dagger})}\bm{U}\textbf{z}=(\bm{U}\textbf{z})^{\dagger}\bm{M}(\bm{U}\textbf{z}) \geq 0
\end{equation}
so $\bm{G}\bm{S}\bm{G} \succeq 0$. 

Conversely if $\bm{GSG} \succeq 0$, for any $\textbf{y}$, let $\textbf{x} \equiv \bm{U}^{\dagger}\textbf{y}$ so that $\textbf{y}^{\dagger}\bm{M}\textbf{y} = \textbf{x}^{\dagger}\bm{S}\textbf{x}$. Because $\ker(\bm{U}) = \ker(\bm{U}^{\dagger}\bm{U})$, by Rank-Nullity, we have $\text{rank}(\bm{G}) = \text{rank}(\bm{U}) = \text{rank}(\bm{U}^{\dagger})$ and since $\text{Im}(\bm{G}) \subseteq\text{Im}(\bm{U}^{\dagger})$, we have $\text{Im}(\bm{G}) = \text{Im}(\bm{U}^{\dagger})$. Thus, $\textbf{x}\in\text{Im}(\bm{G})$ so there exists $\textbf{z}$ such that $\textbf{x}=\bm{G}\textbf{z}$ and we have:
\begin{equation}
    \textbf{y}^{\dagger}\bm{M}\textbf{y} = \textbf{x}^{\dagger}\bm{M}\textbf{x} = (\bm{G}\textbf{z})^{\dagger}\bm{M}(\bm{G}\textbf{z}) = \textbf{z}^{\dagger}(\bm{G}\bm{S}\bm{G})\textbf{z} \geq 0
\end{equation}
so we conclude that $\bm{M} \succeq 0$. 
\end{proof}

We can thus rewrite:
\begin{equation}
    \begin{split}
        &\bm{I}_{g}(s)+\bm{I}_{\text{th}}(s)+\delta\bm{I}(s) = \bm{USU}^{\dagger} =\begin{pmatrix}
            \textbf{v}_{A} & \bm{\eta}\textbf{v}_{A} & 0 & 0\\
            0 & 0 & \textbf{v}_{B} & \bm{\eta}\textbf{v}_{B}
        \end{pmatrix}
        \begin{pmatrix}
            \frac{S_{A}^{\text{th}}}{2} &0 &0 & \frac{K_{g}}{2}\\
            0 & \frac{S^{\text{th}}_{A}}{2} & \frac{K_{g}}{2} & 0\\
            0 & \frac{K_{g}}{2} & \frac{S^{\text{th}}_{B}}{2} & 0\\
            \frac{K_{g}}{2} & 0 & 0 & \frac{S^{\text{th}}_{B}}{2}
        \end{pmatrix}
        \begin{pmatrix}
            \textbf{v}_{A}^{T} & 0\\
            (\bm{\eta}\textbf{v}_{A})^{T} & 0\\
            0 & \textbf{v}_{B}^{T}\\
            0 & (\bm{\eta}\textbf{v}_{B})^{T}
        \end{pmatrix}
    \end{split}
\end{equation}
where we dropped the time dependence to avoid clutter (it will not be relevant in the following analysis).
We then have:
\begin{equation}
    \begin{split}
        \bm{G} &= \bm{U}^{\dagger}\bm{U} = \text{diag}(||\textbf{a}||^2,||\textbf{a}||^2,||\textbf{b}||^2,||\textbf{b}||^2),\\
        \bm{GSG} &= \begin{pmatrix}
            \frac{S^{\text{th}}_{A}}{2}||\textbf{a}||^{4} & 0 & 0 & \frac{K_{g}}{2}||\textbf{a}||^2||\textbf{b}||^2\\
            0 & \frac{S_{A}^{\text{th}}}{2}||\textbf{a}||^4 & \frac{K_{g}}{2}||\textbf{a}||^2||\textbf{b}||^2 & 0\\
            0 & \frac{K_{g}}{2}||\textbf{a}||^2||\textbf{b}||^2 & \frac{S^{\text{th}}_{B}}{2}||\textbf{b}||^4 & 0 \\
            \frac{K_{g}}{2}||\textbf{a}||^2||\textbf{b}||^2 & 0 & 0 &\frac{S^{\text{th}}_{B}}{2}||\textbf{b}||^4
        \end{pmatrix}
    \end{split}
\end{equation}
where we defined $||\textbf{i}||^2=\textbf{v}_{i}^{T}\textbf{v}_{i} = (\bm{\eta}\textbf{v}_{i})^{T}(\bm{\eta}\textbf{v}_{i})$ for $\textbf{i}\in\{\textbf{a},\textbf{b}\}$. PSD is preserved by conjugation so conjugation by permutation matrix $\bm{P} = \begin{pmatrix}
    1 & 0 & 0 & 0\\
    0 & 0 & 0 & 1\\
    0 & 1 & 0 & 0\\
    0 & 0 & 1 & 0
\end{pmatrix}$ gives:
\begin{equation}
    \bm{P}(\bm{GSG})\bm{P}^{T} = \left(
\begin{array}{cccc}
 \dfrac{\|\mathbf a\|^4 S^{\text{th}}_A}{2}
 & \dfrac{\|\mathbf a\|^2 \|\mathbf b\|^2 K_g}{2}
 & 0 & 0 \\[6pt]
 \dfrac{\|\mathbf a\|^2 \|\mathbf b\|^2 K_g}{2}
 & \dfrac{\|\mathbf b\|^4 S^{\text{th}}_B}{2}
 & 0 & 0 \\[6pt]
 0 & 0
 & \dfrac{\|\mathbf a\|^4 S^{\text{th}}_A}{2}
 & \dfrac{\|\mathbf a\|^2 \|\mathbf b\|^2 K_g}{2} \\[6pt]
 0 & 0
 & \dfrac{\|\mathbf a\|^2 \|\mathbf b\|^2 K_g}{2}
 & \dfrac{\|\mathbf b\|^4 S^{\text{th}}_B}{2}
\end{array}
\right)
\end{equation}
A block diagonal matrix is PSD if and only if the blocks are individually PSD and both blocks are identical so we just focus on PSD of one of the $2\times 2$ blocks. From this we easily derive $S_{A}^{\text{th}}S_{B}^{\text{th}} \geq K_{g}^2$. 

Now if we define $\frac{1}{2}\textbf{I}-\delta\bm{V}(t) \equiv \bm{\sigma}_{\text{Final},A}\oplus \bm{\sigma}_{\text{Final},B}$ where we know both  $\bm{\sigma}_{\text{Final},A}, \bm{\sigma}_{\text{Final},B}$ are valid covariance matrices to linear order in $S^{\text{th}}_{A}t,S^{\text{th}}_{B}t$, and pick:
\begin{equation}
    \bm{\sigma}_{i} = \bm{S}_{i}^{-1}e^{\bm{M}'_{i}t}\bm{\sigma}_{\text{Final},i}e^{\bm{M}_{i}'^{T}t}\bm{S}_{i}^{-T}
\end{equation}
for $i \in\{A,B\}$, \eqref{initial criteria} is satisfied if $\eqref{sufficient cond}$ is satisfied, thus concluding the argument.

\subsection{II.a. Including Correlated Noise}
The thermal noise matrix now becomes:
\begin{equation}
    \bm{F} = \begin{pmatrix}
        \bm{\eta}\textbf{u}_{A}\textbf{u}_{A}^{T}\bm{\eta}^{T}S^{\text{th}}_{A} & \bm{\eta}\textbf{u}_{A}\textbf{u}^{T}_{B}\bm{\eta}^{T} S^{\text{th}}_{AB}\\
        \bm{\eta}\textbf{u}_{B}\textbf{u}^{T}_{A}\bm{\eta}^{T}S^{\text{th}}_{AB} & \bm{\eta}\textbf{u}_{B}\textbf{u}_{B}^{T}\bm{\eta}^{T}S^{\text{th}}_{B}
    \end{pmatrix}
\end{equation}
Since this is a physical covariance matrix, it must be PSD, so the correlated noise terms must satisfy:
\begin{equation}
    (S^{\text{th}}_{AB})^2 \leq S^{\text{th}}_{A}S^{\text{th}}_{B}
\end{equation}
this is the assumption we will be making going forward whenever correlated noise is discussed.

As before, define 
\begin{equation}
    \delta\bm{V}(t) = \begin{pmatrix}
        \delta\bm{V}_{A}(t) & \bm{0}\\
        \bm{0} & \delta\bm{V}_{B}(t)
    \end{pmatrix}
\end{equation}
with 
\begin{equation}
    \delta\bm{V}_{i}(t) = \int_{0}^{t}ds \bm{R}_{i}(s)\bm{D}_{i}\bm{R}_{i}^{T}(s), \hspace{0.5cm}\bm{R}_{i}(s) = (\textbf{v}'_{i}(s), \bm{\eta}\textbf{v}'_{i}(s)), \hspace{0.5cm} \bm{D}_{i} = \bm{L}_{i}-\begin{pmatrix}
        0 & 0\\
        0 & S^{\text{th}}_{i}
    \end{pmatrix}
\end{equation}
where $\bm{L}_{i}$ is any matrix that satisfies $\bm{L}_{i} = \bm{L}_{i}^{T}$, $\bm{L}_{i} \succeq 0$, $\text{Tr}(\bm{L}_{i}) = S^{\text{th}}_{i}$ such that $\bm{D}_{i} = \bm{D}_{i}^{T}$ and $\text{Tr}(\bm{D}_{i}) = 0$. Notice now that:
\begin{equation}
    \bm{\eta}\bm{R}_{i}(s)\bm{D}_{i}\bm{R}_{i}^{T}(s) + \bm{R}_{i}(s)\bm{D}_{i}\bm{R}_{i}^{T}(s)\bm{\eta} = \bm{R}_{i}(s)(\bm{J}\bm{D}_{i}+\bm{D}_{i}\bm{J})\bm{R}^{T}_{i}(s) = \bm{R}_{i}(s)\text{Tr}(\bm{D}_{i})\bm{J}\bm{R}_{i}^{T}(s) = 0
\end{equation}
where we defined $\bm{J} = \begin{pmatrix}
    0 & -1 \\
    1 & 0
\end{pmatrix}$. This then implies $\begin{pmatrix}
        \bm{\eta} & \\
        & \bm{\eta} 
    \end{pmatrix}\delta\bm{V}+\delta\bm{V}\begin{pmatrix}
        \bm{\eta} & \\
        & \bm{\eta}
    \end{pmatrix}=0$ to linear order and our proof from before still follows. Our original proof had $\bm{L}_{i} = \frac{S^{\text{th}}_{i}}{2}\bm{1}_{2}$. Put together, we now have:
    \begin{equation}
        \bm{I}_{g}(s) + \bm{I}_{\text{th}}(s)+\delta\bm{I}(s) = \begin{pmatrix}
            \bm{R}_{A}(s) & \bm{0}\\
            \bm{0} & \bm{R}_{B}(s)
        \end{pmatrix}\begin{pmatrix}
            \bm{L}_{A} & \bm{X}\\
            \bm{X}^{T} & \bm{L}_{B}
        \end{pmatrix}
        \begin{pmatrix}
            \bm{R}^{T}_{A}(s) & \bm{0}\\
            \bm{0} & \bm{R}^{T}_{B}(s)
        \end{pmatrix}
    \end{equation}
    where $\bm{X} = \begin{pmatrix}
        0 & K_{g}/2\\
        K_{g}/2 & S^{\text{th}}_{AB}
    \end{pmatrix}$.
    
Conjugation preserves PSD, so it suffices to choose $\bm{L}_{A},\bm{L}_{B}$ such that $\begin{pmatrix}
            \bm{L}_{A} & \bm{X}\\
            \bm{X}^{T} & \bm{L}_{B}
        \end{pmatrix} \succeq 0$. Since $\bm{X} = \bm{X}^{T}$, we can diagonalize it via $\bm{X} = \bm{O}\bm{\Lambda}\bm{O}^{T}$ with:
\begin{equation}
    \bm{\Lambda} = \begin{pmatrix}
        \lambda_{+} & 0\\
        0 & \lambda_{-}
    \end{pmatrix}, \hspace{0.5cm} \lambda_{\pm} = \frac{S^{\text{th}}_{AB}\pm\sqrt{(S^{\text{th}}_{AB})^2+K_{g}^2}}{2}
\end{equation}
Defining $|\bm{X}| = \sqrt{\bm{X}^{T}\bm{X}} = \bm{O}|\bm{\Lambda}|\bm{O}^{T}$ with $|\bm{\Lambda}| = \text{diag}(|\lambda_{+}|,|\lambda_{-}|)$, we choose:
\begin{equation}
    \bm{L}_{A} = \frac{S^{\text{th}}_{A}}{\sqrt{K_{g}^{2}+(S^{\text{th}}_{AB})^2}}|\bm{X}|, \hspace{0.5cm} \bm{L}_{B} = \frac{S^{\text{th}}_{B}}{\sqrt{K_{g}^{2}+(S^{\text{th}}_{AB})^2}}|\bm{X}|
\end{equation}
These are valid choices of $\bm{L}_{i}$ as they satisfy symmetry, PSD, and $\text{Tr}(\bm{L}_{i}) = S^{\text{th}}_{i}$. We can now perform orthogonal conjugation to get:
\begin{equation}
    \begin{pmatrix}
        \bm{L}_{A} & \bm{X}\\
        \bm{X}^{T} & \bm{L}_{B}
    \end{pmatrix}\succeq 0 \Longleftrightarrow \begin{pmatrix}
        \bm{O}^{T} & \bm{0}\\
        \bm{0} & \bm{O}^{T}
    \end{pmatrix}
\begin{pmatrix}
        \bm{L}_{A} & \bm{X}\\
        \bm{X}^{T} & \bm{L}_{B}
    \end{pmatrix}
    \begin{pmatrix}
        \bm{O} & \bm{0}\\
        \bm{0} & \bm{O}
    \end{pmatrix} = \begin{pmatrix}
        \frac{S^{\text{th}}_{A}}{\sqrt{K_{g}^{2}+(S^{\text{th}}_{AB})^2}}|\bm{\Lambda}| & \bm{\Lambda}\\
        \bm{\Lambda} &  \frac{S^{\text{th}}_{B}}{\sqrt{K_{g}^{2}+(S^{\text{th}}_{AB})^2}}|\bm{\Lambda}|
    \end{pmatrix} \succeq 0
\end{equation}
Permuting the second and third rows and columns gives:
\begin{equation}
    \begin{pmatrix}
        \frac{S^{\text{th}}_{A}}{\sqrt{K_{g}^{2}+(S^{\text{th}}_{AB})^2}}|\bm{\Lambda}| & \bm{\Lambda}\\
        \bm{\Lambda} &  \frac{S^{\text{th}}_{B}}{\sqrt{K_{g}^{2}+(S^{\text{th}}_{AB})^2}}|\bm{\Lambda}|
    \end{pmatrix} \Longrightarrow |\lambda_{+}|\begin{pmatrix}
        \frac{S^{\text{th}}_{A}}{\sqrt{K_{g}^{2}+(S^{\text{th}}_{AB})^2}} & 1\\
       1 &  \frac{S^{\text{th}}_{B}}{\sqrt{K_{g}^{2}+(S^{\text{th}}_{AB})^2}}
    \end{pmatrix} \oplus |\lambda_{-}|\begin{pmatrix}
        \frac{S^{\text{th}}_{A}}{\sqrt{K_{g}^{2}+(S^{\text{th}}_{AB})^2}} & -1 \\
      -1  &  \frac{S^{\text{th}}_{B}}{\sqrt{K_{g}^{2}+(S^{\text{th}}_{AB})^2}}
    \end{pmatrix}
\end{equation}
Checking PSD reduces to checking PSD of these blocks, equivalently, that:
\begin{equation}
     \frac{S^{\text{th}}_{A}}{\sqrt{K_{g}^{2}+(S^{\text{th}}_{AB})^2}} \frac{S^{\text{th}}_{B}}{\sqrt{K_{g}^{2}+(S^{\text{th}}_{AB})^2}} -1\geq 0
\end{equation}
Simplifying then gives:
\begin{equation}
    S^{\text{th}}_{A}S^{\text{th}}_{B} \geq K_{g}^{2} + (S^{\text{th}}_{AB})^2
\end{equation}

\subsection{II.b. Stringent Necessary and Sufficient With Correlated Noise}
By restricting the dynamics of our time evolution, it is possible to derive a necessary and sufficient separability condition in the Gaussian setting even in the presence of correlated noise. We start back at the equation:
\begin{equation}
    \widetilde{\bm{V}}(t) = \left(\frac{1}{2}\mathbf{1}-\delta\bm{V}(t)\right) + [\bm{V}_{g}(t)+\bm{V}_{\text{th}}(t)+\delta\bm{V}(t)]
\end{equation}
Now define the transformation:
\begin{equation}
    \bm{L}(t) \equiv \mathbf{I}+\delta\bm{V}(t)
\end{equation}
To linear order, $L(t)$ is symplectic:
\begin{equation}
    \bm{L}\begin{pmatrix}
        \bm{\eta} & \\
        & \bm{\eta}
    \end{pmatrix}\bm{L}^{T} = \left(\mathbf{I}+\delta\bm{V}(t)\right)\begin{pmatrix}
        \bm{\eta} & \\
        & \bm{\eta}
    \end{pmatrix}\left(\mathbf{I}+\delta\bm{V}(t)\right) = \begin{pmatrix}
        \bm{\eta} & \\
        & \bm{\eta}
    \end{pmatrix} + \mathcal{O}(\delta\bm{V}^2(t))
\end{equation}
where we used the fact that $\begin{pmatrix}
        \bm{\eta} & \\
        & \bm{\eta}
    \end{pmatrix}\delta\bm{V}+\delta\bm{V}\begin{pmatrix}
        \bm{\eta} & \\
        & \bm{\eta}
    \end{pmatrix}=0$ and that $\delta\bm{V}(t)$ is symmetric.
Then, acting upon $\widetilde{\bm{V}}(t)$ and expanding to linear order again, we have:
\begin{equation}
    \overline{\bm{V}}(t)\equiv \bm{L}(t)\widetilde{\bm{V}}(t)\bm{L}^{T}(t) = \frac{1}{2}\mathbf{1}+[\bm{V}_{g}(t)+\bm{V}_{\text{th}}(t)+\delta\bm{V}(t)]
\end{equation}
Note that $\bm{L}(t)$ is a block-diagonal matrix, so conjugation by it would not change separability or entanglement between subsystems. That is, $\overline{\bm{V}}(t)$ is separable if and only if $\widetilde{\bm{V}}(t)$ is separable. The main reason for this step is to map the locally deformed vacuum contribution
$\frac12\mathbf 1-\delta \bm V(t)$ back to $\frac12\mathbf 1$, so that all
first-order corrections are collected in the additive perturbation
$\bm V_g(t)+\bm V_{\rm th}(t)+\delta\bm V(t)$.

Now we impose the first extra condition, using the same notation as the previous section:
\begin{equation}
    \textbf{v}'_{A}(s)  = f_{A}(s)\textbf{w}_{A}, \hspace{0.5cm} \textbf{v}'_{B}(s) = f_{B}(s)\textbf{w}_{B}
\end{equation}
 for some shape functions capturing the dynamics $f_{A}(s),f_{B}(s)$ and where recall $\textbf{w}_{i} \equiv \textbf{S}^{-T}_{i}\textbf{u}_{i}$. That is, the system dynamics remains parallel to only one canonical mode sector (defined by $\textbf{w}_{i}$). The reason for this restriction is that we can now study the full integral equation rather than just the integrand. Defining:
 \begin{equation}
     I_{A} \equiv \int_{0}^{t}f_{A}(s)^2ds, \hspace{0.3cm} I_{B} \equiv \int_{0}^{t}f_{B}(s)^2ds, \hspace{0.3cm} I_{AB} \equiv \int_{0}^{t}f_{A}(s)f_{B}(s)ds
 \end{equation}
we can equivalently write:
 \(\bm R_i(s)=f_i(s)\bm R_i^{(0)}\), where
\begin{equation}
    \bm R_i^{(0)}\equiv
    \begin{pmatrix}
        \mathbf w_i & \bm\eta\mathbf w_i
    \end{pmatrix}.
\end{equation}
We thus obtain
\begin{equation}
\begin{aligned}
    \bm V_g(t)+\bm V_{\rm th}(t)+\delta\bm V(t)
    &=
    \begin{pmatrix}
        \bm R_A^{(0)}&0\\
        0&\bm R_B^{(0)}
    \end{pmatrix}
    \begin{pmatrix}
        I_A\bm L_A&I_{AB}\bm X\\
        I_{AB}\bm X^T&I_B\bm L_B
    \end{pmatrix}
    \begin{pmatrix}
        \bm R_A^{(0)T}&0\\
        0&\bm R_B^{(0)T}
    \end{pmatrix},
\end{aligned}
\end{equation}
where
\begin{equation}
    \bm X=
    \begin{pmatrix}
        0&K_g/2\\
        K_g/2&S_{AB}^{\rm th}
    \end{pmatrix}.
\end{equation}
The choice of $
    \bm{L}_{A}^{T} = \frac{S^{\text{th}}_{A}}{\sqrt{K_{g}^{2}+(S^{\text{th}}_{AB})^2}}|\bm{X}|, \hspace{0.5cm} \bm{L}_{B}^{T} = \frac{S^{\text{th}}_{B}}{\sqrt{K_{g}^{2}+(S^{\text{th}}_{AB})^2}}|\bm{X}|$ is still identical to before.
    
Defining $\alpha \equiv ||\textbf{w}_{A}||, \beta \equiv ||\textbf{w}_{B}||$, define the local basis vectors on subsystems $A,B$:
\begin{equation}
    \textbf{e}_{1,A} \equiv \frac{\textbf{w}_{A}}{\alpha}, \hspace{0.2cm} \textbf{e}_{2,A} \equiv -\frac{\bm{\eta}\textbf{w}_{A}}{\alpha}, \hspace{0.5cm} \textbf{e}_{1,B} \equiv \frac{\textbf{w}_{B}}{\beta}, \hspace{0.2cm} \textbf{e}_{2,B} \equiv -\frac{\bm{\eta}\textbf{w}_{B}}{\beta}
\end{equation}
Within their respective subspaces, the satisfy the definition of an orthonomal basis vector and additionally form a canonical symplectic pair in the sense that $\textbf{e}_{i}^{T}\bm{\eta}\textbf{e}_{j} = 1\delta_{i,j}$. We can now extend this via a symplectic Gram-Schmidt procedure to a full orthonormal symplectic basis within each subsystem. 

Now define local orthogonal symplectic matrices $\bm{O}_{A},\bm{O}_{B}$ whose rows are the orthogonal symplectic basis set, and ordered such that the first two rows are $\textbf{e}^{T}_{1,A},\textbf{e}^{T}_{2,A}$ and $\textbf{e}^{T}_{1,B},\textbf{e}^{T}_{2,B}$ respectively. Defining $\bm{O} \equiv \bm{O}_{A}\oplus\bm{O}_{B}$, we are interested in the matrix $\bm{O}\overline{\bm{V}}\bm{O}^{T}$. Trivially, $\bm{O}\mathbf{1}\bm{O}^{T} = \mathbf{1}$. The remaining terms can be studied block by block:
\begin{equation}
    \begin{aligned}  \bm{O}_{A}\left[\bm{R}^{(0)}_{A}I_{A}\bm{L}_{A}\bm{R}^{(0)T}_{A}\right]\bm{O}_{A}^{T} &= I_{A}\alpha^2
        \begin{pmatrix}
            \bm{Q}\bm{L}_{A}\bm{Q} & \bm{0}\\
            \bm{0} & \bm{0}
        \end{pmatrix}\\
            \bm{O}_{B}\left[\bm{R}^{(0)}_{B}I_{B}\bm{L}_{B}\bm{R}^{(0)T}_{B}\right]\bm{O}_{B}^{T} &= I_{B}\beta^2
        \begin{pmatrix}
            \bm{Q}\bm{L}_{B}\bm{Q} & \bm{0}\\
            \bm{0} & \bm{0}
        \end{pmatrix}\\
        \bm{O}_{A}\left[\bm{R}^{(0)}_{A}I_{AB}\bm{X}\bm{R}^{(0)T}_{B}\right]\bm{O}_{B}^{T} &= I_{AB}\alpha\beta
        \begin{pmatrix}
            \bm{Q}\bm{X}\bm{Q} & \bm{0}\\
            \bm{0} & \bm{0}
        \end{pmatrix}
    \end{aligned}
\end{equation}
where $\bm{Q} = \begin{pmatrix}
    1 & 0\\
    0 & -1
\end{pmatrix}$.
Note that the only nonzero elements of this transformed matrix is only in the subspace spanned by $\{\textbf{e}_{1,A}, \textbf{e}_{2,A},\textbf{e}_{1,B},\textbf{e}_{2,B}\}$ so we can rearrange the basis elements such that:
\begin{equation}
    \bm{O}\overline{\bm{V}}(t)\bm{O}^{T} = \left\{\frac{1}{2}\bm{1}_{4} + 
    \begin{pmatrix}
        I_{A}\alpha^2\bm{Q}\bm{L}_{A}\bm{Q} & I_{AB}\alpha\beta\bm{QXQ}\\
        I_{AB}\alpha\beta\bm{Q}\bm{X}^{T}\bm{Q} & I_{B}\beta^2\bm{Q}\bm{L}_{B}\bm{Q}
    \end{pmatrix}
    \right\} \oplus \frac{1}{2}\bm{1}_{2(n_{A}+n_{B})-4} \equiv \bm{V}_{\text{nontrivial}} \oplus \frac{1}{2}\bm{1}_{2(n_{A}+n_{B})-4}
\end{equation}
Again, the entanglement structure of the original problem has not changed because we are conjugating by a block diagonal symplectic matrix. 

Notice how the structure of entanglement is very clear now, as the $\frac{1}{2}\bm{1}_{2(n_{A}+n_{B})-4}$ term is just a product state of $A,B$ vacuum states, so the overall state is separable across the $A,B$ partition if and only if the other term in the direct sum is separable. However, we have now reduced the problem to a $1\times 1$ system, where the PPT condition is sufficient and necessary. For a general two-mode covariance matrix written in block form:
\begin{equation}
    \bm{V} = \begin{pmatrix}
        \bm{A} & \bm{C}\\
        \bm{C}^{T} & \bm{B}
    \end{pmatrix}
\end{equation}
the symplectic eigenvalue after partial transpose can be calculated via:
\begin{equation}
    \tilde{\nu}_{\pm} \equiv \sqrt{\frac{\tilde{\Delta}\pm\sqrt{\tilde{\Delta}^2-4\det V}}{2}}, \hspace{0.5cm} \tilde{\Delta} \equiv \det A+ \det B-2 \det C
\end{equation}
and the PPT condition is:
\begin{equation}
    \tilde{\nu}_{-} \geq \frac{1}{2}
\end{equation}
This can equivalently be rewritten as:
\begin{equation}
    \det V-\frac{\tilde{\Delta}}{4}+\frac{1}{16}\geq 0
\end{equation}
Applying this to $V = \bm{V}_{\text{nontrivial}}$
In our case, we calculate:
\begin{equation}
\label{PPT polynomial}
\frac{\alpha^2\beta^2}{4}
\left[
\frac{S_A^{\rm th}S_B^{\rm th}I_A I_B}{\tau^2}
-
I_{AB}^2
\right]
\left[
K_{g}^2+(S^{\text{th}}_{AB})^2
+
\frac{K_g^2}{2}
\left(
\alpha^2 I_A S_A^{\rm th}
+
\beta^2 I_B S_B^{\rm th}
\right)
+
\frac{K_g^4\alpha^2\beta^2}{4}
\left(
\frac{S_A^{\rm th}S_B^{\rm th}I_A I_B}{\tau^2}
-
I_{AB}^2
\right)
\right]
\geq 0 .
\end{equation}
The second factor in brackets is $\geq 0$, and we justify this by noting that since $\bm{V}_{\text{nontrivial}}$ is a valid covariance matrix, it is PSD, so we can perform a series of conjugations (first conjugating by $\bm{Q}\oplus\bm{Q}$ and then diagonalizing $\bm{X}$) to simplify it to:
\begin{equation}
\bm V_{\text{nontrivial}}
=
\begin{pmatrix}
\dfrac{1}{2}
+
\dfrac{\alpha^2 I_A S_A^{\rm th}}{2\tau}
\left(\tau+S_{AB}^{\rm th}\right)
&
0
&
\dfrac{\alpha\beta I_{AB}}{2}
\left(S_{AB}^{\rm th}+\tau\right)
&
0
\\[8pt]
0
&
\dfrac{1}{2}
+
\dfrac{\alpha^2 I_A S_A^{\rm th}}{2\tau}
\left(\tau-S_{AB}^{\rm th}\right)
&
0
&
\dfrac{\alpha\beta I_{AB}}{2}
\left(S_{AB}^{\rm th}-\tau\right)
\\[8pt]
\dfrac{\alpha\beta I_{AB}}{2}
\left(S_{AB}^{\rm th}+\tau\right)
&
0
&
\dfrac{1}{2}
+
\dfrac{\beta^2 I_B S_B^{\rm th}}{2\tau}
\left(\tau+S_{AB}^{\rm th}\right)
&
0
\\[8pt]
0
&
\dfrac{\alpha\beta I_{AB}}{2}
\left(S_{AB}^{\rm th}-\tau\right)
&
0
&
\dfrac{1}{2}
+
\dfrac{\beta^2 I_B S_B^{\rm th}}{2\tau}
\left(\tau-S_{AB}^{\rm th}\right)
\end{pmatrix},
\end{equation}
where $\tau \equiv \sqrt{K_g^2+\left(S_{AB}^{\rm th}\right)^2}$. By permuting rows and columns two and three, we can rewrite this as:
\begin{equation}
\begin{aligned}
    \bm{V}_{\text{nontrivial}}
    &=
    \begin{pmatrix}
    \dfrac{1}{2}
    +
    \dfrac{\alpha^2 I_A S_A^{\rm th}}{2\tau}
    \left(\tau+S_{AB}^{\rm th}\right)
    &
    \dfrac{\alpha\beta I_{AB}}{2}
    \left(S_{AB}^{\rm th}+\tau\right)
    \\[8pt]
    \dfrac{\alpha\beta I_{AB}}{2}
    \left(S_{AB}^{\rm th}+\tau\right)
    &
    \dfrac{1}{2}
    +
    \dfrac{\beta^2 I_B S_B^{\rm th}}{2\tau}
    \left(\tau+S_{AB}^{\rm th}\right)
    \end{pmatrix}
    \oplus
    \begin{pmatrix}
    \dfrac{1}{2}
    +
    \dfrac{\alpha^2 I_A S_A^{\rm th}}{2\tau}
    \left(\tau-S_{AB}^{\rm th}\right)
    &
    \dfrac{\alpha\beta I_{AB}}{2}
    \left(S_{AB}^{\rm th}-\tau\right)
    \\[8pt]
    \dfrac{\alpha\beta I_{AB}}{2}
    \left(S_{AB}^{\rm th}-\tau\right)
    &
    \dfrac{1}{2}
    +
    \dfrac{\beta^2 I_B S_B^{\rm th}}{2\tau}
    \left(\tau-S_{AB}^{\rm th}\right)
    \end{pmatrix}.
\end{aligned}
\end{equation}
PSD of $\bm{V}_{\text{nontrivial}}$ implies PSD of the individual blocks, i.e. that
\begin{equation}
\begin{aligned}
&
\left[
\frac{1}{2}
+
\frac{\alpha^2 I_A S_A^{\rm th}}{2\tau}
\left(\tau+S_{AB}^{\rm th}\right)
\right]
\left[
\frac{1}{2}
+
\frac{\beta^2 I_B S_B^{\rm th}}{2\tau}
\left(\tau+S_{AB}^{\rm th}\right)
\right]
-
\frac{\alpha^2\beta^2I_{AB}^2}{4}
\left(\tau+S_{AB}^{\rm th}\right)^2
\geq0,\\
&
\left[
\frac{1}{2}
+
\frac{\alpha^2 I_A S_A^{\rm th}}{2\tau}
\left(\tau-S_{AB}^{\rm th}\right)
\right]
\left[
\frac{1}{2}
+
\frac{\beta^2 I_B S_B^{\rm th}}{2\tau}
\left(\tau-S_{AB}^{\rm th}\right)
\right]
-
\frac{\alpha^2\beta^2I_{AB}^2}{4}
\left(\tau-S_{AB}^{\rm th}\right)^2
\geq0.
\end{aligned}
\end{equation}
Dividing the first inequality by $[(\tau+S^{\text{th}}_{AB})/2]^2$ gives:
\begin{equation}
\left[
\frac{\alpha^2 I_A S_A^{\rm th}}{\tau}
+
\frac{1}{\tau+S_{AB}^{\rm th}}
\right]
\left[
\frac{\beta^2 I_B S_B^{\rm th}}{\tau}
+
\frac{1}{\tau+S_{AB}^{\rm th}}
\right]
-
\alpha^2\beta^2I_{AB}^2
\geq0.
\end{equation}
Similarly, dividing the second inequality by $[(\tau-S^{\text{th}}_{AB})/2]^2$ gives:
\begin{equation}
\left[
\frac{\alpha^2 I_A S_A^{\rm th}}{\tau}
+
\frac{1}{\tau-S_{AB}^{\rm th}}
\right]
\left[
\frac{\beta^2 I_B S_B^{\rm th}}{\tau}
+
\frac{1}{\tau-S_{AB}^{\rm th}}
\right]
-
\alpha^2\beta^2I_{AB}^2
\geq0.
\end{equation}
Since:
\begin{equation}
    \frac{2\tau}{K_{g}^2} = \frac{1}{\tau+S^{\text{th}}_{AB}}+\frac{1}{\tau-S^{\text{th}}_{AB}}
\end{equation}
and the function:
\begin{equation}
    f(x) = \left[\frac{\alpha^2I_{A}S^{\text{th}}_{A}}{\tau}+x\right]\left[\frac{\beta^2I_{B}S^{\text{th}}_{B}}{\tau}+x\right]-\alpha^2\beta^2I^{2}_{AB} 
\end{equation}
is monotonic increasing for $x \geq 0$, we can write:
\begin{equation}
\left[
\frac{\alpha^2 I_A S_A^{\rm th}}{\tau}
+
\frac{2\tau}{K_g^2}
\right]
\left[
\frac{\beta^2 I_B S_B^{\rm th}}{\tau}
+
\frac{2\tau}{K_g^2}
\right]
-
\alpha^2\beta^2I_{AB}^2
\geq0.
\end{equation}
Multiplying by $K_{g}^4/4$ and expanding recovers the second factor in brackets in \eqref{PPT polynomial}. Thus, the sign of the expression is entirely decided by the first term in bracketes, i.e.:
\begin{equation}
    S^{\text{th}}_{A}S^{\text{th}}_{B}I_{A}I_{B} \geq I_{AB}^{2}\left[K_{g}^{2}+(S^{\text{th}}_{AB})^2\right]
\end{equation}
By Cauchy-Schwarz, we have that:
\begin{equation}
    I_{AB}^{2} \equiv \left(\int_{0}^{t}f_{A}(s)f_{B}(s)ds\right)^2 \leq \left(\int_{0}^{t}f_{A}(s)^2ds\right)\left(\int_{0}^{t}f_{B}(s)^2ds\right) \equiv I_{A}I_{B}
\end{equation}
so the PPT condition is equivalent to:
\begin{equation}
    S^{\text{th}}_{A}S^{\text{th}}_{B} \geq \rho^2\left[K_{g}^{2}+(S^{\text{th}}_{AB})^2\right], \hspace{0.5cm} \rho^2 \equiv \frac{I_{AB}^{2}}{I_{A}I_{B}}, \hspace{0.5cm} 0 \leq \rho^2 \leq 1
\end{equation}
If we further impose the stronger condition that $f_{A}(s) \propto f_{B}(s)$, then Cauchy Schwarz saturates and we are left with:
\begin{equation}
    S^{\text{th}}_{A}S^{\text{th}}_{B} \geq K_{g}^{2}+(S^{\text{th}}_{AB})^2
\end{equation}
The same condition as before, except now it is necessary and sufficient for separability under the restricted dynamics.

\section{III. Derivation of Rank-1 Interaction Born Markov Equation}
\label{rank 1 bm appendix}

We start with the Hamiltonian in \eqref{rank 1 interaction hamiltonian} and proceed as in \cite{schlosshauer2007decoherence}. Define  $\hat{\bm{S}}_{\alpha} = \hat{\bm{\xi}}^{T}_{\alpha}\textbf{u}_{\alpha}$, so that the interaction picture operator can be expressed as:
\begin{equation}
    \hat{\bm{S}}_{\alpha,I}(t) = e^{i\hat{H}_{\text{sys}}t}\hat{\bm{S}}_{\alpha}e^{-i\hat{H}_{\text{sys}}t} = \textbf{u}_{\alpha}^{T}e^{i\hat{H}_{\text{sys}}t}(\hat{\bm{\xi}}_{\alpha})e^{-i\hat{H}_{\text{sys}}t} = \textbf{u}^{T}_{\alpha}\mathcal{P}_{\alpha}e^{(\bm{M}+K_{g}\bm{C})t}\hat{\bm{\xi}}(0)
\end{equation}
where $\mathcal{P}_{\alpha}$ projects onto the $\alpha$ subspace. We next evaluate the environment self-correlation function:
\begin{equation}
    \mathcal{C}_{\alpha\beta}(\tau) = \langle \hat{F}^{\text{th}}_{\alpha,I}(\tau)\hat{F}^{\text{th}}_{\beta}\rangle=\frac{1}{2}\langle\{\hat{F}_{\alpha,I}(\tau),\hat{F}_{\beta}(0)\}\rangle+\frac{1}{2}\langle[\hat{F}_{\alpha,I}(\tau),\hat{F}_{\beta}(0)]\rangle\equiv\nu_{\alpha\beta}(\tau) +i\mu_{\alpha\beta}(\tau) 
\end{equation}

We then define:
\begin{equation}
    \begin{split}
        \hat{\bm{B}}_{\alpha}&\equiv\int_{0}^{\infty}d\tau\sum_{\beta}\mathcal{C}_{\alpha\beta}(\tau)\hat{\bm{S}}_{\beta,I}(-\tau)\\
        \hat{\bm{C}}_{\alpha} &\equiv \int_{0}^{\infty}d\tau\sum_{\beta}\mathcal{C}_{\beta\alpha}(-\tau)\hat{\bm{S}}_{\beta,I}(-\tau)
    \end{split}
\end{equation}
Plugging in gives:
\begin{equation}
    \begin{split}
        \hat{\bm{B}}_{\alpha} &= \int_{0}^{\infty}d\tau\sum_{\beta}[(\nu_{\alpha\beta}(\tau)+i\mu_{\alpha\beta}(\tau))\hat{\bm{S}}_{\beta,I}(-\tau)]\\
        \hat{\bm{C}}_{\alpha} &= \int_{0}^{\infty}d\tau\sum_{\beta}[(\nu_{\beta\alpha}(-\tau)+i\mu_{\beta\alpha}(-\tau))\hat{\bm{S}}_{\beta,I}(-\tau)]
    \end{split}
\end{equation}
We assume stationarity of the bath state (i.e. $[\hat{H}_{\text{bath}},\rho_{\text{bath}}]=0$), we have that:
\begin{equation}
    \langle \hat{F}_{\alpha,I}(t)\hat{F}_{\beta}(0)\rangle = \langle \hat{F}_{\alpha,I}(t+s)\hat{F}_{\beta,I}(s)\rangle \Longrightarrow \langle \hat{F}_{\alpha,I}(t)\hat{F}_{\beta}(0) \rangle = \langle \hat{F}_{\alpha}(0)\hat{F}_{\beta,I}(-t)\rangle 
\end{equation}
Now notice that 
\begin{equation}
    \nu_{\beta\alpha}(-\tau) \equiv \frac{1}{2}\langle\{\hat{F}_{\beta,I}(-\tau),\hat{F}_{\alpha}(0)\}\rangle = \frac{1}{2}\langle\{\hat{F}_{\alpha}(0),\hat{F}_{\beta,I}(-\tau)\}\rangle = \frac{1}{2}\langle\{\hat{F}_{\alpha,I}(\tau),\hat{F}_{\beta}(0)\}\rangle = \nu_{\alpha\beta}(\tau)
\end{equation}
where the second equality follows by symmetry of anticommutator and the third equality follows by stationarity. Assuming a white noise spectrum gives:
\begin{equation}
    \nu_{\alpha\beta}(\tau) = \delta(\tau)\delta_{\alpha\beta}S^{\text{th}}_{\alpha}
\end{equation}
A similar derivation yields:
\begin{equation}
    \mu_{\beta\alpha}(-\tau) = -\mu_{\alpha\beta}(\tau)
\end{equation}
Now define:
\begin{equation}
    \begin{split}
        \hat{\bm{N}}_{\alpha} &\equiv \frac{\hat{\bm{B}}_{\alpha}+\hat{\bm{C}}_{\alpha}}{2}=\int_{0}^{\infty}d\tau\sum_{\beta}\nu_{\alpha\beta}(\tau)\hat{\bm{S}}_{\beta,I}(-\tau)\\
        \hat{\bm{D}}_{\alpha} &\equiv \frac{\hat{\bm{B}}_{\alpha}-\hat{\bm{C}}_{\alpha}}{2i} = \int_{0}^{\infty}d\tau\sum_{\beta}\mu_{\alpha\beta}(\tau)\hat{\bm{S}}_{\beta,I}(-\tau)
    \end{split}
\end{equation}
so that $\hat{\bm{B}}_{\alpha} = \hat{\bm{N}}_{\alpha}+i\hat{\bm{D}}_{\alpha}, \hat{\bm{C}}_{\alpha} = \hat{\bm{N}}_{\alpha}-i\hat{\bm{D}}_{\alpha}$. We thus have in the Schrodinger picture:
\begin{equation}
\label{full rank 1 BM}
    \begin{split}
        \frac{d}{dt}\hat{\bm{\rho}}_{S}(t) &= -i[\hat{H}_{\text{sys}},\hat{\bm{\rho}}_{S}(t)]-\sum_{\alpha}\left\{[\hat{\bm{S}}_{\alpha},\hat{\bm{B}}_{\alpha}\hat{\bm\rho}_{S}]+[\hat{\bm{\rho}}_{S}\hat{\bm{C}}_{\alpha},\hat{\bm{S}}_{\alpha}]\right\}\\
        &=-i[\hat{H}_{\text{sys}},\hat{\bm{\rho}}_{S}(t)]-\sum_{\alpha}\left([\hat{\bm{S}}_{\alpha},\hat{\bm{N}}_{\alpha}\hat{\bm{\rho}}_{S}]+[\hat{\bm{\rho}}_{S}\hat{\bm{N}}_{\alpha},\hat{\bm{S}}_{\alpha}]\right)-i\sum_{\alpha}\left([\hat{\bm{S}}_{\alpha},\hat{\bm{D}}_{\alpha}\hat{\bm{\rho}}_{S}]-[\hat{\bm{\rho}}_{S}\hat{\bm{D}}_{\alpha},\hat{\bm{S}}_{\alpha}]\right)\\
        &=-i[\hat{H}_{\text{sys}},\hat{\bm{\rho}}_{S}(t)]-\sum_{\alpha}[\hat{\bm{S}}_{\alpha},[\hat{\bm{N}}_{\alpha},\hat{\bm{\rho}}_{S}]]-i\sum_{\alpha}[\hat{\bm{S}}_{\alpha},\{\hat{\bm{D}}_{\alpha},\hat{\bm{\rho}}_{S}\}]\\
        &=-i[\hat{H}_{HO},\hat{\bm{\rho}}_{S}(t)]-iK_{g}[\hat{\bm{\xi}}^{T}_{A}\textbf{u}_{A}\textbf{u}_{B}^{T}\hat{\bm{\xi}}_{B},\hat{\bm{\rho}}_{S}(t)]+\sum_{\alpha=A,B}S^{\text{th}}_{\alpha}\mathcal{D}[\hat{\bm{\xi}}^{T}_{\alpha}\textbf{u}_{\alpha}]\hat{\bm{\rho}}_{S}(t) -i\sum_{\alpha}[\hat{\bm{S}}_{\alpha},\{\hat{\bm{D}}_{\alpha},\hat{\bm{\rho}}_{S}(t)\}]
    \end{split}
\end{equation}
where $\hat{H}_{HO}=\frac{1}{2}\hat{\bm{\xi}}^{T}\bm{\Omega}^{-1}\bm{M}\hat{\bm{\xi}}$ and we have defined the dissipator:
\begin{equation}
    \mathcal{D}[\hat{\bm{\xi}}^{T}_{\alpha}\textbf{u}_{\alpha}]\hat{\bm{\rho}}_{S}(t) = \hat{\bm{\xi}}^{T}_{\alpha}\textbf{u}_{\alpha}\hat{\bm{\rho}}_{S}(t)\hat{\bm{\xi}}^{T}_{\alpha}\textbf{u}_{\alpha}-\frac{1}{2}\{(\hat{\bm{\xi}}^{T}_{\alpha}\textbf{u}_{\alpha})^2,\hat{\bm{\rho}}_{S}(t)\}
\end{equation}
The mixed commutator term is referred to as the damping term and will be dropped in all of our analysis. Further details are given in Appendix V.

\section{IV. Full GKSL Proof for Rank 1 Interaction}
\label{rank 1 GKSL proof appendix}

Here we provide the full proof of Theorem \ref{rank 1 theorem}. The goal is to construct an LOCC protocol whose Lindbladian reproduces the Born-Markov Master equation \eqref{rank 1 BM}. To that end, consider a symmetric measurement-and-feedback protocol where Alice and Bob individually implements local Gaussian weak measurements of their respective local operators $\hat{X}_{A}$ and $\hat{X}_{B}$ with measurement strengths $\gamma_{A},\gamma_{B} > 0$ and obtain classical measurement outcomes $y_{A},y_{B}\in\mathbb{R}$.
Define the corresponding Kraus operators by $\hat{K}_{A}(y_{A}), \hat{K}_{B}(y_{B})$:
\begin{equation}
    \begin{split}
        \hat{K}_{A}(y_{A}) &\equiv \left(\frac{2\gamma_{A}dt}{\pi}\right)^{1/4}\exp\left[-\gamma_{A}dt(y_{A}-\hat{X}_{A})^2\right] \equiv N_{A}\exp\left[-\gamma_{A}dt(y_{A}-\hat{X}_{A})^2\right]\\
        \hat{K}_{B}(y_{B}) &\equiv \left(\frac{2\gamma_{B}dt}{\pi}\right)^{1/4}\exp\left[-\gamma_{B}dt(y_{B}-\hat{X}_{B})^2\right]\equiv N_{B}\exp\left[-\gamma_{B}dt(y_{B}-\hat{X}_{B})^2\right]
    \end{split}
\end{equation}
Given the classical outcomes, we apply local unitaries $\hat{U}_{A}(y_{B}), \hat{U}_{B}(y_{A})$ conditioned on the other party's outcome with feedback strengths $\lambda_{A},\lambda_{B}\in\mathbb{R}$:
\begin{equation}
    \begin{split}
            \hat{U}_{A}(y_{B}) &\equiv \exp\left(-i\lambda_{A}y_{B}\hat{X}_{A}dt\right)\\
        \hat{U}_{B}(y_{A}) &\equiv \exp\left(-i\lambda_{B}y_{A}\hat{X}_{B}dt\right)\\
    \end{split}
\end{equation}
This step requires only local communication of classical results. The overall Kraus operator conditioned on $(y_{A},y_{B})$ is then defined via:
\begin{equation}
       \hat{K}(y_{A},y_{B}) \equiv [\hat{U}_{A}(y_{B})\hat{K}_{A}(y_{A})]\otimes[\hat{U}_{B}(y_{A})\hat{K}_{B}(y_{B})]
\end{equation}

Now consider the following unconditional map corresponding to averaging over the classical outcomes of the measurement-feedback scheme:
\begin{equation}
    \mathcal{E}(\hat{\rho}) = \int dy_{A}dy_{B}\hat{K}(y_{A},y_{B})\hat{\rho} \hat{K}^{\dagger}(y_{A},y_{B})
\end{equation}
The necessary Gaussian integrals we need are:
\begin{equation}
    \int_{-\infty}^{\infty}N_{A}^2e^{-2\gamma_{A}dt y^2}dy =N_{A}^2\sqrt{\frac{\pi}{2\gamma_{A}dt}}=1, \hspace{0.5cm} \int_{-\infty}^{\infty}N_{A}^2y^2e^{-2\gamma_{A}dt y^2}dy = N_{A}^2\frac{1}{2}\sqrt{\frac{\pi}{(2\gamma_{A} dt)^3}}=\frac{1}{4\gamma_{A}dt}
\end{equation}
Notably, this means a term proportional to $y^2dt^2$ can still be $\mathcal{O}(dt)$ after integration so we must keep those terms in our series expansions.

Now perform the following decomposition and expansions:
\begin{equation}
\begin{split}
    \hat{K}_{A}(y_{A}) &= N_{A}e^{-\gamma_{A}dty_{A}^2}e^{2\gamma_{A}dty_{A}\hat{X}_{A}}e^{-\gamma_{A}dt\hat{X}_{A}^2}\\
    &\approx N_{A}e^{-\gamma_{A}dty_{A}^2}\left(1+2\gamma_{A}dty_{A}\hat{X}_{A}+\frac{(2\gamma_{A}dty_{A}\hat{X}_{A})^2}{2}\right)\left(1-\gamma_{A}dt\hat{X}_{A}^2\right)\\
    &=N_{A}e^{-\gamma_{A}dt y_{A}^2}\left(1-\gamma_{A}dt\hat{X}_{A}^2+2\gamma_{A}dt y_{A}\hat{X}_{A}+2\gamma_{A}^{2}dt^2y_{A}^{2}\hat{X}_{A}^2\right)\\
    \hat{U}_{A}(y_{B}) &\approx 1-i\lambda_{A}y_{B}\hat{X}_{A}dt-\frac{1}{2}\lambda_{A}^{2}y_{B}^2\hat{X}_{A}^2dt^2
\end{split}
\end{equation}
Taking the product gives:
\begin{equation}
    \begin{split}
       &[\hat{U}_{A}(y_{B})\hat{K}_{A}(y_{A})] \\
       &\approx N_{A}e^{-\gamma_{A}dt y_{A}^{2}}\left[1+(2\gamma_{A}dty_{A}\hat{X}_{A}-i\lambda_{A}y_{B}\hat{X}_{A}dt)+\left(-\gamma_{A}dt\hat{X}_{A}^{2}+2\gamma_{A}^{2}dt^{2}y_{A}^{2}\hat{X}_{A}^{2}-\frac{1}{2}\lambda_{A}^2y_{B}^{2}\hat{X}_{A}^{2}dt^2\right)\right]\\
        &\equiv N_{A}e^{-\gamma_{A}dt y_{A}^{2}}\left[1+\hat{A}^{(1)}+\hat{A}^{(2)}\right]
    \end{split}
\end{equation}
Similarly, we have:
\begin{equation}
    \begin{split}
       &[\hat{U}_{B}(y_{A})\hat{K}_{B}(y_{B})] \\
       &\approx N_{B}e^{-\gamma_{B}dt y_{B}^{2}}\left[1+(2\gamma_{B}dty_{B}\hat{X}_{B}-i\lambda_{B}y_{A}\hat{X}_{B}dt)+\left(-\gamma_{B}dt\hat{X}_{B}^{2}+2\gamma_{B}^{2}dt^{2}y_{B}^{2}\hat{X}_{B}^{2}-\frac{1}{2}\lambda_{B}^2y_{A}^{2}\hat{X}_{B}^{2}dt^2\right)\right]\\
        &\equiv N_{B}e^{-\gamma_{B}dt y_{B}^{2}}\left[1+\hat{B}^{(1)}+\hat{B}^{(2)}\right]
    \end{split}
\end{equation}
Note that the cross terms in the products either integrate to $0$ or are of a higher order in $dt$.

Thus the full Kraus operator is:
\begin{equation}
    \hat{K}(y_{A},y_{B}) = N_{A}N_{B}e^{-\gamma_{A}dty_{A}^{2}-\gamma_{B}dty_{B}^{2}}\left(1+\hat{A}^{(1)}+\hat{A}^{(2)}\right)\otimes\left(1+\hat{B}^{(1)}+\hat{B}^{(2)}\right)
\end{equation}
so that the integrand in the channel definition can be rewritten as:
\begin{equation}
\begin{split}
    &\hat{K}(y_{A},y_{B})\hat{\rho} \hat{K}^{\dagger}(y_{A},y_{B}) \\
    &= N_{A}^{2}N_{B}^{2}e^{-2\gamma_{A}dty_{A}^{2}-2\gamma_{B}dty_{B}^{2}}\left(1+\hat{A}^{(1)}+\hat{A}^{(2)}\right)\otimes\left(1+\hat{B}^{(1)}+\hat{B}^{(2)}\right)\hat{\rho}\left(1+\hat{A}^{(1)}+\hat{A}^{(2)}\right)^{\dagger}\otimes\left(1+\hat{B}^{(1)}+\hat{B}^{(2)}\right)^{\dagger}
\end{split}
\end{equation}

From normalization, the product of identities immediately gives:
\begin{equation}
    \int dy_{A}dy_{B} N_{A}^{2}N_{B}^{2}e^{-2\gamma_{A}dty_{A}^{2}-2\gamma_{B}dty_{B}^{2}}\hat{\rho} = \hat{\rho}
\end{equation}
We then have the contribution from the first term of $\hat{A}^{(1)}\hat{\rho} \hat{A}^{(1)\dagger}$:
\begin{equation}
     \int dy_{A}dy_{B} N_{A}^{2}N_{B}^{2}e^{-2\gamma_{A}dty_{A}^{2}-2\gamma_{B}dty_{B}^{2}}4\gamma_{A}^{2}dt^{2}y_{A}^{2}\hat{X}_{A}\hat{\rho}\hat{X}_{A} = \gamma_{A}dt\hat{X}_{A}\hat{\rho}\hat{X}_{A}
\end{equation}
Meanwhile, from the $-\gamma_{A}dt\hat{X}_{A}^{2}+2\gamma_{A}^{2}dt^2y_{A}^2\hat{X}_{A}^{2}$ terms in $\hat{A}^{(2)}$, we get: 
\begin{equation}
    \hat{A}^{(2)}\hat{\rho}+\hat{\rho} \hat{A}^{(2)\dagger} = -\gamma_{A}dt(\hat{X}_{A}^{2}\hat{\rho}+\hat{\rho}\hat{X}_{A}^{2})+\frac{1}{2}\gamma_{A}dt\left(\hat{X}_{A}^{2}\hat{\rho}+\hat{\rho}\hat{X}_{A}^{2}\right) = -\frac{\gamma_{A}}{2}dt\{\hat{X}_{A}^{2},\hat{\rho}\}
\end{equation}
Combining all the above contributions gives $\gamma_{A}dt\mathcal{D}[\hat{X}_{A}]\hat{\rho}$. There are analogous $B$ subsystem terms which contribute $\gamma_{B}dt\mathcal{D}[\hat{X}_{B}]\hat{\rho}$.

Now pay attention to the contribution of $-i\lambda_{A}y_{B}\hat{X}_{A}dt$ from the $\hat{A}^{(1)}\hat{\rho} \hat{A}^{(1)\dagger}$ term:
\begin{equation}
    \int dy_{A}dy_{B}N_{A}^2N_{B}^{2}e^{-2\gamma_{A}dty_{A}^{2}-2\gamma_{B}dty_{B}^{2}}(-i\lambda_{A}y_{B}dt)(i\lambda_{A}y_{B}dt)\hat{X}_{A}\hat{\rho}\hat{X}_{A} = \frac{\lambda_{A}^{2}}{4\gamma_{B}}dt\hat{X}_{A}\hat{\rho}\hat{X}_{A}
\end{equation}
Meanwhile, from the $\hat{A}^{(2)}\hat{\rho}+\hat{\rho} \hat{A}^{(2)\dagger}$ piece, we have
\begin{equation}
    -\frac{1}{8}\frac{\lambda_{A}^{2}}{\gamma_{B}}dt(\hat{X}_{A}^{2}\hat{\rho}+\hat{\rho}\hat{X}_{A}^{2})
\end{equation}
Combining gives $(\lambda_{A}^{2}/(4\gamma_{B}))dt\mathcal{D}[\hat{X}_{A}]\hat{\rho}$ and similarly from the $B$ pieces $(\lambda_{B}^{2}/(4\gamma_{A}))dt\mathcal{D}[\hat{X}_{B}]\hat{\rho}$.

Finally, we have the cross terms. From $(\hat{A}^{(1)}\otimes \hat{B}^{(1)})\hat{\rho}$ we get:
\begin{equation}
    (2\gamma_{A}dty_{A}\hat{X}_{A})(-i\lambda_{B}y_{A}\hat{X}_{B}dt)+(-i\lambda_{A}y_{B}\hat{X}_{A}dt)(2\gamma_{B}dty_{B}\hat{X}_{B})\overset{\text{Integrate}}{\Longrightarrow}-i\frac{\lambda_{B}}{2}dt\hat{X}_{A}\hat{X}_{B}\hat{\rho}-i\frac{\lambda_{A}}{2}dt\hat{X}_{A}\hat{X}_{B}\hat{\rho}
\end{equation}
Adding the Hermitian conjugate yields $-i\frac{\lambda_{A}+\lambda_{B}}{2}[\hat{X}_{A}\hat{X}_{B},\hat{\rho}]$. Finally, from the $\hat{A}^{(1)}\hat{\rho} \hat{B}^{(1)\dagger}$ terms, we have:
\begin{equation}
    (2\gamma_{A}dty_{A}\hat{X}_{A})\hat{\rho}(i\lambda_{B}y_{A}\hat{X}_{B}dt) + (-i\lambda_{A}y_{B}\hat{X}_{A}dt)\hat{\rho}(2\gamma_{B}dty_{B}\hat{X}_{B})\overset{\text{Integrate}}{\Longrightarrow} \frac{i}{2}dt\lambda_{B}\hat{X}_{A}\hat{\rho}\hat{X}_{B}-\frac{i}{2}dt\lambda_{A}\hat{X}_{A}\hat{\rho}\hat{X}_{B}
\end{equation}
Adding the other symmetric contribution with $A\leftrightarrow B$ gives $\frac{i}{2}(\lambda_{B}-\lambda_{A})(\hat{X}_{A}\hat{\rho}\hat{X}_{B}-\hat{X}_{B}\hat{\rho}\hat{X}_{A})$.

Putting everything together gives:
\begin{equation}
    \mathcal{E}(\hat{\rho}) = \hat{\rho} + dt\left\{-\frac{i}{2}(\lambda_{A}+\lambda_{B})[\hat{X}_{A}\hat{X}_{B},\hat{\rho}]+\frac{i}{2}(\lambda_{B}-\lambda_{A})(\hat{X}_{A}\hat{\rho}\hat{X}_{B}-\hat{X}_{B}\hat{\rho}\hat{X}_{A})+\Gamma'_{A}\mathcal{D}[\hat{X}_{A}]\hat{\rho}+\Gamma'_{B}\mathcal{D}[\hat{X}_{B}]\hat{\rho}\right\}
\end{equation}
where
\begin{equation}
    \begin{split}
        \Gamma'_{A} &\equiv \gamma_{A}+\frac{\lambda_{A}^2}{4\gamma_{B}}\\
        \Gamma'_{B} &\equiv \gamma_{B}+\frac{\lambda_{B}^2}{4\gamma_{A}}
    \end{split}
\end{equation}

Now pick $\lambda_{A} = \lambda_{B} \equiv \lambda$, this gives the master equation:
\begin{equation}
\label{rank one LOCC res}
    \dot{\hat{\rho}} = -i\lambda[\hat{X}_{A}\hat{X}_{B},\hat{\rho}]+\left(\gamma_{A}+\frac{\lambda^2}{4\gamma_{B}}\right)\mathcal{D}[\hat{X}_{A}]\hat{\rho} + \left(\gamma_{B}+\frac{\lambda^2}{4\gamma_{A}}\right)\mathcal{D}[\hat{X}_{B}]\hat{\rho}
\end{equation}
Matching to \eqref{rank 1 BM}, we require that $\lambda = K_{g}$ and solve the system:
\begin{equation}
    \gamma_{A}+\frac{K_{g}^2}{4\gamma_{B}} = S^{\text{th}}_{A}, \hspace{0.4cm} \gamma_{B}+\frac{K_{g}^2}{4\gamma_{A}} = S^{\text{th}}_{{B}}
\end{equation}
These admit the solutions:
\begin{equation}
    \begin{split}
        \gamma_{A} &= \frac{S^{\text{th}}_{A}}{2}\left(1\pm\sqrt{1-\frac{K_{g}^2}{S^{\text{th}}_{A}S^{\text{th}}_{B}}}\right)\\
        \gamma_{B} &= \frac{S^{\text{th}}_{B}}{2}\left(1\pm\sqrt{1-\frac{K_{g}^2}{S^{\text{th}}_{A}S^{\text{th}}_{B}}}\right)
    \end{split}
\end{equation}
We see that these admit $\gamma_{A},\gamma_{B} > 0$ as long as $S^{\text{th}}_{A}S^{\text{th}}_{B} \geq K_{g}^{2}$ which is exactly our separbility condition. The same result holds if we replace $\hat{X}_{A} = (\hat{\bm{\xi}}^{T}_{A}\textbf{u}_{A})$ and $\hat{X}_{B} = (\hat{\bm{\xi}}^{T}_{B}\textbf{u}_{B})$. The remaining terms of \eqref{rank 1 BM} are local in the $A,B$ subsystems, so we can implement them via local unitaries independent of this LOCC scheme, thus completing the construction.

\subsection{IV.a. Including Correlated Noise}
Assuming a white noise spectrum of correlated noise gives:
\begin{equation}
\langle \hat{F}^{\text{th}}_{\alpha,I}(\tau)\hat{F}^{\text{th}}_{\beta}\rangle\approx\frac{1}{2}\langle\{\hat{F}_{\alpha,I}(\tau),\hat{F}_{\beta}(0)\}\rangle = \delta(\tau)\Sigma^{\text{th}}_{\alpha,\beta}
\end{equation}
where
\begin{equation}
    \Sigma^{\text{th}} = \begin{pmatrix}
        S^{\text{th}}_{A} & S^{\text{th}}_{AB}\\
        S^{\text{th}}_{AB} & S^{\text{th}}_{B}
    \end{pmatrix}
\end{equation}
and we dropped the damping term as per our general assumptions. Following the same notation and calculation as Appendix III,
the only change now is that:
\begin{equation}
    \hat{\bm{N}}_{A} = \frac{1}{2}(S^{\text{th}}_{A}\hat{\bm{S}}_{A}+S^{\text{th}}_{AB}\hat{\bm{S}}_{B}), \hat{\bm{N}}_{B} = \frac{1}{2}(S^{\text{th}}_{AB}\hat{\bm{S}}_{A}+S^{\text{th}}_{B}\hat{\bm{S}}_{B})
\end{equation}
The Born Markov Master equation now becomes:
\begin{equation}
\label{correlated rank 1 BM}
    \frac{d}{dt}\hat{\bm{\rho}}_{S}(t) = -i[\hat{H}_{HO},\hat{\bm{\rho}}_{S}(t)]-iK_{g}[\hat{\bm{\xi}}^{T}_{A}\textbf{u}_{A}\textbf{u}_{B}^{T}\hat{\bm{\xi}}_{B},\hat{\bm{\rho}}_{S}(t)]+\sum_{\alpha=A,B}S^{\text{th}}_{\alpha}\mathcal{D}[\hat{\bm{\xi}}^{T}_{\alpha}\textbf{u}_{\alpha}]\hat{\bm{\rho}}_{S}(t) - S^{\text{th}}_{AB}[\hat{\bm{\xi}}^{T}_{A}\textbf{u}_{A},[\hat{\bm{\xi}}^{T}_{B}\textbf{u}_{B},\hat{\bm{\rho}}_{S}(t)]]
\end{equation}
where we have dropped the damping term to have a cleaner/more general discussion about the effects of correlated noise. Compared to the uncorrelated case, we see the appearance of a new $S^{\text{th}}_{AB}$ term that will need to be accounted for with a modified LOCC scheme.

It turns out the only modification to our LOCC scheme required is to allow both parties to also feedback on their own measurements, i.e.:
\begin{equation}
    \begin{split}
            \hat{U}_{A}(y_{A},y_{B}) &\equiv \exp\left(-i(\kappa_{A}y_{A} +\lambda y_{B})\hat{X}_{A}dt\right)\\
        \hat{U}_{B}(y_{A},y_{B}) &\equiv \exp\left(-i(\kappa_{B}y_{B}+\lambda y_{A})\hat{X}_{B}dt\right)\\
    \end{split}
\end{equation}
where $\kappa_{A},\kappa_{B}$ are the self-feedback strengths, and we have apriori set $\lambda_{A} = \lambda_{B} = \lambda$ since this was already a requirement in the uncorrelated problem. The local weak measurement Kraus operators $\hat{K}_{A}(y_{A}), \hat{K}_{A}(y_{B})$ stay the same as before. Going through the same calculation procedure as before, we ultimately find that:
\begin{equation}
    \mathcal{E}(\hat{\rho}) = \hat{\rho}+dt\left\{-i\lambda[\hat{X}_{A}\hat{X}_{B},\hat{\rho}]-i\frac{\kappa_{A}}{2}[\hat{X}^{2}_{A},\hat{\rho}]-i\frac{\kappa_{B}}{2}[\hat{X}^{2}_{B},\hat{\rho}]+\Gamma_{A}\mathcal{D}[\hat{X}_{A}]\hat{\rho}+\Gamma_{B}\mathcal{D}[\hat{X_{B}}]\hat{\rho}-\Gamma_{AB}[\hat{X}_{A},[\hat{X}_{B},\hat{\rho}]]\right\}
\end{equation}
with
\begin{equation}
    \Gamma_{A} = \gamma_{A}+\frac{\lambda^2}{4\gamma_{B}}+\frac{\kappa_{A}^2}{4\gamma_{A}}, \hspace{0.5cm} \Gamma_{B} = \gamma_{B}+\frac{\lambda^2}{4\gamma_{A}}+\frac{\kappa_{B}^2}{4\gamma_{B}}, \hspace{0.5cm} \Gamma_{AB} = \frac{\lambda\kappa_{A}}{4\gamma_{A}}+\frac{\lambda\kappa_{B}}{4\gamma_{B}}
\end{equation}
Matching to \eqref{correlated rank 1 BM} now requires $\lambda = K_{g}$ and:
\begin{equation}
\begin{aligned}
    S^{\text{th}}_{A} &=  \gamma_{A}+\frac{K_{g}^2}{4\gamma_{B}}+\frac{\kappa_{A}^2}{4\gamma_{A}}\\
    S^{\text{th}}_{B} &= \gamma_{B}+\frac{K_{g}^2}{4\gamma_{A}}+\frac{\kappa_{B}^2}{4\gamma_{B}}\\
    S^{\text{th}}_{AB} &= \frac{K_{g}\kappa_{A}}{4\gamma_{A}}+\frac{K_{g}\kappa_{B}}{4\gamma_{B}}
\end{aligned}
\end{equation}

Note that there are additional $[\hat{X}^2,\hat{\rho}]$ terms that result from this LOCC scheme. These can be cancelled separately via the local unitaries $\hat{U}_{i} = \exp\left(i\frac{\kappa_{i}}{2}\hat{X}_{i}^{2}dt\right)$ and can thus be disregarded in our analysis.

Now pick
\begin{equation}
    \kappa_{A} = 2\gamma_{A}\frac{S^{\text{th}}_{AB}}{K_{g}}, \hspace{0.5cm} \kappa_{B} = 2\gamma_{B}\frac{S^{\text{th}}_{AB}}{K_{g}}
\end{equation}
The $S^{\text{th}}_{AB}$ equation is trivially satisfied and the remaining two equations become:
\begin{equation}
    \begin{aligned}
        S^{\text{th}}_{A} &= \gamma_{A}\left(1+\frac{(S^{\text{th}}_{AB})^2}{K_{g}^{2}}\right)+\frac{K_{g}^{2}}{4\gamma_{B}} = \tilde{\gamma}_{A}+\frac{K_{g}^{2}+(S^{\text{th}}_{AB})^2}{4\tilde{\gamma}_{B}}\\
        S^{\text{th}}_{B} &= \gamma_{B}\left(1+\frac{(S^{\text{th}}_{AB})^2}{K_{g}^{2}}\right)+\frac{K_{g}^{2}}{4\gamma_{A}} = \tilde{\gamma}_{B}+\frac{K_{g}^{2}+(S^{\text{th}}_{AB})^2}{4\tilde{\gamma}_{A}}
    \end{aligned}
\end{equation}
where we defined $\tilde{\gamma}_{A} = \gamma_{A}\left(1+\frac{(S^{\text{th}}_{AB})^2}{K_{g}^{2}}\right), \tilde{\gamma}_{B} = \gamma_{B}\left(1+\frac{(S^{\text{th}}_{AB})^2}{K_{g}^{2}}\right)$. We can solve these simultaneous equations:
\begin{equation}
    \begin{aligned}
                \tilde{\gamma}_{A} &= \frac{S^{\text{th}}_{A}}{2}\left(1\pm\sqrt{1-\frac{K_{g}^2+(S^{\text{th}}_{AB})^2}{S^{\text{th}}_{A}S^{\text{th}}_{B}}}\right)\\
        \tilde{\gamma}_{B} &= \frac{S^{\text{th}}_{B}}{2}\left(1\pm\sqrt{1-\frac{K_{g}^2+(S^{\text{th}}_{AB})^2}{S^{\text{th}}_{A}S^{\text{th}}_{B}}}\right)
    \end{aligned}
\end{equation}
Since $\gamma_{i}$ and $\tilde{\gamma}_{i}$ differ by a positive constant of proportionality, $\gamma_{A},\gamma_{B}$ admit nonnegative real solutions as long as:
\begin{equation}
    S^{\text{th}}_{A}S^{\text{th}}_{B} \geq K_{g}^{2}+(S^{\text{th}}_{AB})^2
\end{equation}
This exactly matches our result from the Gaussian correlated case.

\section{V. Discussion on Including Damping/Relaxation Term in GKSL Arguments}
\label{damping GKSL appendix}

The question now is if it is possible to construct another LOCC scheme that only reproduces the mixed commutator (friction) term in \eqref{full rank 1 BM} (i.e. we assume uncorrelated noise to keep the discussion of damping clean). If this is possible, we would then be able to string together a sequence of different local operations to reproduce the dynamics of the full master equation. It turns out that this is impossible without introducing extra dissipator terms that cannot be combined with the dissipator terms of the symmetric LOCC scheme unless a very strict criteria is met. As such, it is not possible, using our framework, to derive a separability condition in the presence of friction in full generality.

This time, instead of a symmetric LOCC scheme, we use a one-way scheme which consists of a local Gaussian weak measurement of $\hat{D}$ with measurement strength $\gamma> 0$ and obtain measurement outcomes $y\in\mathbb{R}$.
Define the corresponding Kraus operator by
\begin{equation}
       K(y) = \left(\frac{2\gamma dt}{\pi}\right)^{1/4}\exp\left[-\gamma dt(y-\hat{D})^2\right]
\end{equation}
We apply a local unitary $\hat{U}(y)$ with feed-forward gain $\lambda$ conditioned on the classical outcome via:
\begin{equation}
    \hat{U}(y) = \exp(-i\lambda y\hat{S} dt)
\end{equation}
where $\hat{S}$ generates the feed-forward unitary and will be identified specifically later.
The overall map describing this LOCC + unconditioned projection scheme is then:
\begin{equation}
    \mathcal{E}(\hat{\rho}) = \int dy[\hat{U}(y)\hat{K}(y)]\hat{\rho}[\hat{U}(y)\hat{K}(y)]^{\dagger}
\end{equation}
We can expand the product as:
\begin{equation}
    \begin{split}
        [\hat{U}(y)\hat{K}(y)]&\approx Ne^{-\gamma dt y^{2}}\Bigg[1+(2\gamma dt y\hat{D}-i\lambda y\hat{S}dt)+\left(-\gamma dt \hat{D}^2+2\gamma^2 dt^2y ^{2}\hat{D}^{2}-\frac{1}{2}\lambda^{2}y^2\hat{S}^{2}dt^2-2i\gamma\lambda y^{2}dt^2\hat{S}\hat{D}\right)\Bigg]\\
        & \equiv N_{B}e^{-\gamma_{B}dt y_{B}^{2}}[1+\hat{A}+\hat{B}]
    \end{split}
\end{equation}
Notice the appearance of a new term proportional to $\gamma\lambda$ in $B$ here. This was not present in the symmetric measurement-feedback process and is unique to the one-sided process. Thus, the relevant terms that will integrate to linear order in $dt$ are:
\begin{equation}
[\hat{U}(y)\hat{K}(y)]\hat{\rho}[\hat{U}(y)\hat{K}(y)]^{\dagger}
= N^{2}e^{-2\gamma dty^{2}}\left(\hat{\rho}+\hat{A}\hat{\rho}+\hat{\rho}\hat{A}^{\dagger}+\hat{A}\hat{\rho}\hat{A}^{\dagger}+\hat{B}\hat{\rho}+\hat{\rho}\hat{B}^{\dagger}\right)
\end{equation}
The odd in $y$ terms vanish after the Gaussian integral, so the only nontrivial pieces come from $\hat{A}\hat{\rho} \hat{A}^{\dagger}$ and $\hat{B}\hat{\rho} + \hat{\rho} \hat{B}^{\dagger}$:
\begin{equation}
    \begin{split}
      \hat{A}\hat{\rho} \hat{A}^{\dagger}&: \gamma dt\hat{D}\hat{\rho}\hat{D} + \frac{\lambda^{2}}{4\gamma}dt \hat{S}\hat{\rho} \hat{S}+\frac{i\lambda}{2}dt(\hat{D}\hat{\rho} \hat{S}- \hat{S}\hat{\rho} \hat{D})\\
       \hat{B}\hat{\rho} + \hat{\rho} \hat{B}^{\dagger}&: -\frac{i\lambda}{2}dt \hat{S}\hat{D}\hat{\rho}+\frac{i\lambda }{2}dt \hat{\rho} \hat{D}\hat{S} +(-\gamma dt+\frac{\gamma}{2}dt)(\hat{D}^2\hat{\rho}+\hat{\rho} \hat{D}^2)-\frac{\lambda^2}{8\gamma}dt(\hat{S}^2\hat{\rho}+\hat{\rho} \hat{S}^2)
    \end{split}
\end{equation}
Now add everything together gives:
\begin{equation}
    \mathcal{E}(\hat{\rho}) =\hat{\rho}+dt\left[ -i\frac{\lambda}{2}[\hat{S},\{\hat{D},\hat{\rho}\}]+\gamma \mathcal{D}[\hat{D}]\hat{\rho} +  \frac{\lambda^2}{4\gamma}\mathcal{D}[\hat{S}]\hat{\rho}\right]
\end{equation}
where $\mathcal{D}[\hat{O}]\hat{\rho} \equiv \hat{O}\hat{\rho}\hat{O}-\frac{1}{2}\{\hat{O}^2,\hat{\rho}\}$. Thus, using the one sided LOCC process produces dynamics that contain anti-commutators.

However, we still cannot directly match this to the friction term in \eqref{full rank 1 BM} yet because $\hat{\bm{D}}_{\alpha}$ is a non-local operator that spans across subsystems $A$ and $B$. More explicitly, write:
\begin{equation}
    e^{-(\bm{M}+K_{g}\bm{C})\tau} \equiv \begin{pmatrix}
        \bm{\Phi}_{AA}(\tau) & \bm{\Phi}_{AB}(\tau)\\
        \bm{\Phi}_{BA}(\tau) & \bm{\Phi}_{BB}(\tau)
    \end{pmatrix}
\end{equation}
Then we have:
\begin{equation}
    \begin{split}
        \hat{\bm{S}}_{A}(-\tau) &= \textbf{u}^{T}_{A}\bm{\Phi}_{AA}(\tau)\hat{\bm{\xi}}_{A} + \textbf{u}^{T}_{A}\bm{\Phi}_{AB}(\tau)\hat{\bm{\xi}}_{B}\\
        \hat{\bm{S}}_{B}(-\tau) &= \textbf{u}^{T}_{B}\bm{\Phi}_{BA}(\tau)\hat{\bm{\xi}}_{A}+\textbf{u}^{T}_{B}\bm{\Phi}_{BB}(\tau)\hat{\bm{\xi}}_{B}
    \end{split}
\end{equation}
Plugging back in gives:
\begin{equation}
    \begin{split}
        \hat{\bm{D}}_{A} &= \int_{0}^{\infty}d\tau\mu_{AA}(\tau)(\textbf{u}^{T}_{A}\bm{\Phi}_{AA}(\tau)\hat{\bm{\xi}}_{A} + \textbf{u}^{T}_{A}\bm{\Phi}_{AB}(\tau)\hat{\bm{\xi}}_{B})+\int_{0}^{\infty}d\tau \mu_{AB}(\tau)(\textbf{u}^{T}_{B}\bm{\Phi}_{BA}(\tau)\hat{\bm{\xi}}_{A}+\textbf{u}^{T}_{B}\bm{\Phi}_{BB}(\tau)\hat{\bm{\xi}}_{B})
    \end{split}
\end{equation}
Thus, we can write $\hat{\bm{D}}_{A} = \hat{\bm{D}}_{AA}+\hat{\bm{D}}_{AB}$ where:
\begin{equation}
    \begin{split}
        \hat{\bm{D}}_{AA} &= \left[\int_{0}^{\infty}d\tau(\mu_{AA}(\tau)\textbf{u}^{T}_{A}\bm{\Phi}_{AA}(\tau)+\mu_{AB}(\tau)\textbf{u}^{T}_{B}\bm{\Phi}_{BA}(\tau))\right]\hat{\bm{\xi}}_{A}\\
        \hat{\bm{D}}_{AB} &=\left[\int_{0}^{\infty}d\tau(\mu_{AA}(\tau)\textbf{u}^{T}_{A}\bm{\Phi}_{AB}(\tau)+\mu_{AB}(\tau)\textbf{u}^{T}_{B}\bm{\Phi}_{BB}(\tau))\right]\hat{\bm{\xi}}_{B}
    \end{split}
\end{equation}
Similarly, $\hat{\bm{D}}_{B} = \hat{\bm{D}}_{BA}+\hat{\bm{D}}_{BB}$ where:
\begin{equation}
    \begin{split}
        \hat{\bm{D}}_{BA} &=\left[\int_{0}^{\infty}d\tau(\mu_{BA}(\tau)\textbf{u}^{T}_{A}\bm{\Phi}_{AA}(\tau)+\mu_{BB}(\tau)\textbf{u}^{T}_{B}\bm{\Phi}_{BA}(\tau))\right]\hat{\bm{\xi}}_{A}\\
        \hat{\bm{D}}_{BB} &=\left[\int_{0}^{\infty}d\tau(\mu_{BA}(\tau)\textbf{u}^{T}_{A}\bm{\Phi}_{AB}(\tau)+\mu_{BB}(\tau)\textbf{u}^{T}_{B}\bm{\Phi}_{BB}(\tau))\right]\hat{\bm{\xi}}_{B}
    \end{split}
\end{equation}
Thus, we have written each non-local term as a sum of local terms so the dissipative term in the Born-Markov (BM) equation can now be written as:
\begin{equation}
    -i\sum_{\alpha,\beta}[\hat{\bm{S}}_{\alpha},\{\hat{\bm{D}}_{\alpha\beta},\hat{\bm{\rho}}_{S}\}]
\end{equation}
which we can identify as being produced by a composition of one-sided LOCC processes with different associated parameters $\lambda_{\alpha\beta},\gamma_{\alpha\beta}$ for $\alpha,\beta\in\{A,B\}$.

Thus, if we put together the local unitary associated with $H_{HO}$, the symmetric feedback scheme producing the entangling pairwise interaction, and then this composition of one-sided schemes producing the mixed commutator, we arrive at the full master equation:
\begin{equation}
\begin{split}
    \dot{\hat{\rho}} &= -i[H_{HO},\hat{\rho}]-iK_{g}[\hat{\bm{S}}_{A}\hat{\bm{S}}_{B},\hat{\rho}]-i\sum_{\alpha}[\hat{\bm{S}}_{\alpha},\{\hat{\bm{D}}_{\alpha},\hat{\rho}\}]\\
    &+\left(\gamma_{A}+\frac{K_{g}^2}{4\gamma_{B}}\right)\mathcal{D}[\hat{\bm{S}}_{A}]\hat{\rho} + \left(\gamma_{B}+\frac{K_{g}^2}{4\gamma_{A}}\right)\mathcal{D}[\hat{\bm{S}}_{B}]\hat{\rho}+\sum_{\alpha \beta}\gamma_{\alpha\beta}\mathcal{D}[\hat{\bm{D}}_{\alpha\beta}]\hat{\rho} + \sum_{\alpha\beta}\frac{1}{\gamma_{\alpha\beta}}\mathcal{D}[\hat{\bm{S}}_{\alpha}]\hat{\rho}
\end{split}
\end{equation}
where we have already set $\lambda_{\alpha\beta} = 2$ to match the mixed commutator coefficient (we have abused notation here, parameters with single subscripts $\gamma_{\alpha}$ are associated with the symmetric LOCC scheme while parameters with two subscripts $\gamma_{\alpha\beta}$ correspond to the one-way scheme.

The only way to match this to \eqref{full rank 1 BM} is if we can somehow absorb the $\mathcal{D}[\hat{\bm{D}}_{\alpha\beta}]\hat{\rho}$ into the $\mathcal{D}[\hat{\bm{S}}_{\beta}]\hat{\rho}$ terms. Under the Markovian high-temperature limit of the Caldeira Leggett model with Ohmic damping, one has the following \cite{Ghosh_2024}:
\begin{equation}
    \nu_{\alpha\beta} \equiv \frac{1}{2}\langle\{\hat{F}_{\alpha,I}(\tau),\hat{F}_{\beta}(0)\}\rangle = C_{1}\delta_{\alpha\beta}\delta(\tau), \hspace{0.3cm} \mu_{\alpha\beta} \equiv \frac{1}{2i}\langle[\hat{F}_{\alpha,I}(\tau),\hat{F}_{\beta}(0)]\rangle = C_{2}\delta_{\alpha\beta}\frac{d}{d\tau}\delta(\tau)
\end{equation}
for $C_{1},C_{2}$ prefactors associated with the choice of Ohmic damping. We can then evaluate:
\begin{equation}
    \begin{split}
                \hat{\bm{D}}_{AA} \propto -\textbf{u}^{T}_{A}\bm{\dot{\Phi}}_{AA}(0)\hat{\bm{\xi}}_{A}, \hspace{0.2cm} \hat{\bm{D}}_{AB} \propto -\textbf{u}^{T}_{A}\bm{\dot{\Phi}}_{AB}(0)\hat{\bm{\xi}}_{B}, \hspace{0.2cm} \hat{\bm{D}}_{BA}\propto -\textbf{u}^{T}_{B}\bm{\dot{\Phi}}_{BA}(0)\hat{\bm{\xi}}_{A}, \hspace{0.2cm} \hat{\bm{D}}_{BB} \propto -\textbf{u}^{T}_{B}\bm{\dot{\Phi}}_{BB}(0)\hat{\bm{\xi}}_{B}
    \end{split}
\end{equation}
By expanding in matrix notation, we then require:
\begin{equation}
    \mathcal{D}[\textbf{v}^{T}_{A}\hat{\bm{\xi}}_{A}]\hat{\rho} +    \mathcal{D}[\textbf{u}^{T}_{A}\hat{\bm{\xi}}_{A}]\hat{\rho} = \mathcal{D}[\textbf{r}^{T}_{A}\hat{\bm{\xi}}_{A}]\hat{\rho}, \hspace{0.2cm} \textbf{r}_{A}\textbf{r}_{A}^{T} = \textbf{v}_{A}\textbf{v}_{A}^{T} + \textbf{u}_{A}\textbf{u}^{T}_{A}
\end{equation}
where $\textbf{v}_{i},\textbf{v}_{j}$ are the coefficient vectors in $\hat{\bm{D}}_{ij}$ for $i,j\in\{A,B\}$. 

Suppose that this parallel condition holds, so that $\hat{\bm{D}}_{\alpha\beta} = d_{\alpha\beta}\hat{\bm{S}}_{\beta}$ for some constants $d_{\alpha\beta}$. Matching to \eqref{full rank 1 BM} then requires the following equations hold:
\begin{equation}
    \begin{split}
        S^{\text{th}}_{A} &= \gamma_{A}+\frac{K_{g}^{2}}{4\gamma_{B}}+\frac{1}{\gamma_{AA}}+\frac{1}{\gamma_{AB}} + \gamma_{AA}d^{2}_{AA}+\gamma_{BA}d^{2}_{BA}\\
          S^{\text{th}}_{B} &= \gamma_{B}+\frac{K_{g}^{2}}{4\gamma_{A}}+\frac{1}{\gamma_{BB}}+\frac{1}{\gamma_{BA}} + \gamma_{BB}d^{2}_{BB}+\gamma_{AB}d^{2}_{AB}
    \end{split}
\end{equation}
Now note that:
\begin{equation}
    \gamma_{AA}d_{AA}^2+\frac{1}{\gamma_{AA}}\geq 2|d_{AA}|, \hspace{0.5cm}\gamma_{BB}d^{2}_{BB}+\frac{1}{\gamma_{BB}} \geq 2|d_{BB}|
\end{equation}
Defining $\tilde{S}^{\text{th}}_{A} \equiv S^{\text{th}}_{A}-2|d_{AA}|$ and $\tilde{S}^{\text{th}}_{B} \equiv S^{\text{th}}_{B}-2|d_{BB}|$. We can thus rewrite:
\begin{equation}
\tilde{S}^{\text{th}}_{A} \geq (\sqrt{\gamma_{A}})^2+\left(\frac{K_{g}}{2\sqrt{\gamma_{B}}}\right)^2+\left(\frac{1}{\sqrt{\gamma_{AB}}}\right)^2+\left(|d_{BA}|\sqrt{\gamma_{BA}}\right)^2, \hspace{0.5cm} \tilde{S}^{\text{th}}_{B} \geq (\sqrt{\gamma_{B}})^2+\left(\frac{K_{g}}{2\sqrt{\gamma_{A}}}\right)^2+\left(\frac{1}{\sqrt{\gamma_{BA}}}\right)^2+\left(|d_{AB}|\sqrt{\gamma_{AB}}\right)^2
\end{equation}
Under Cauchy-Schwarz, we then have:
\begin{equation}
    \tilde{S}_{A}\tilde{S}_{B} \geq (K_{g}+|d_{AB}|+|d_{BA}|)^2
\end{equation}
Plugging back in gives the final bound:
\begin{equation}
\label{damping LOCC condition}
    (S^{\text{th}}_{A}-2|d_{AA}|)(S^{\text{th}}_{B}-2|d_{BB}|) \geq (K_{g}+|d_{AB}|+|d_{BA}|)^2
\end{equation}
Under the case of no damping, $d_{\alpha\beta} = 0$, which reduces to our usual bound.

We can now use this result to do an exact parameter matching like in the no-damping case. Suppose we choose:
\begin{equation}
    \gamma_{AA} = \frac{1}{|d_{AA}|}, \hspace{0.3cm}\gamma_{BB} = \frac{1}{|d_{BB}|}
\end{equation}
so that
\begin{equation}
    \gamma_{AA}d_{AA}^2 + \frac{1}{\gamma_{AA}} = 2|d_{AA}|,\hspace{0.3cm} \gamma_{BB}d_{BB}^2+\frac{1}{\gamma_{BB}} = 2|d_{BB}|
\end{equation} Using the same notation as before (defining $\tilde{S}^{\text{th}}_{i} = S^{\text{th}}_{i} - 2|d_{ii}|$, this would then give:
\begin{equation}
    \tilde{S}^{\text{th}}_{A} = \gamma_{A}+\frac{K_{g}^{2}}{4\gamma_{B}}+\frac{1}{\gamma_{AB}}+\gamma_{BA}d_{BA}^{2}, \hspace{0.5cm}     \tilde{S}^{\text{th}}_{B} = \gamma_{B}+\frac{K_{g}^{2}}{4\gamma_{A}}+\frac{1}{\gamma_{BA}}+\gamma_{AB}d_{AB}^{2}
\end{equation}
Now define $t = \sqrt{\tilde{S}^{\text{th}}_{A}/\tilde{S}^{\text{th}}_{B}}$ and make the following choices:
\begin{equation}
    \gamma_{A} = \frac{K_{g}}{2}t, \quad \gamma_{B} = \frac{K_{g}}{2t}, \quad \gamma_{AB} = \frac{1}{|d_{AB}|t}, \quad \gamma_{BA} = \frac{t}{|d_{BA}|}
\end{equation}
we have:
\begin{equation}
    \tilde{S}^{\text{th}}_{A} = (K_{g}+|d_{AB}|+|d_{BA}|)t, \hspace{0.5cm} \tilde{S}^{\text{th}}_{B} = (K_{g}+|d_{BA}|+|d_{AB}|)\frac{1}{t}
\end{equation}
These choice of parameters are valid as long as:
\begin{equation}
    \tilde{S}^{\text{th}}_{A}\tilde{S}^{\text{th}}_{B} = (K_{g}+|d_{AB}|+|d_{BA}|)^2
\end{equation}
but from the Cauchy-Schwarz argument, we know that this is only the instance where the inequality is saturated. More generally, this implies that:
\begin{equation}
    \tilde{S}^{\text{th}}_{A} = \gamma_{A}+\frac{K_{g}^{2}}{4\gamma_{B}}+\frac{1}{\gamma_{AB}}+\gamma_{BA}d_{BA}^{2}+L_{A}, \hspace{0.5cm}     \tilde{S}^{\text{th}}_{B} = \gamma_{B}+\frac{K_{g}^{2}}{4\gamma_{A}}+\frac{1}{\gamma_{BA}}+\gamma_{AB}d_{AB}^{2}+L_{B}
\end{equation}
for some local noise terms $L_{A},L_{B}$. It remains to check that these noise terms are indeed positive and can thus be reproduced using an LOCC protocol. Using the same choice of parameters as before gives:
\begin{equation}
    L_{A} = t\left(\sqrt{\tilde{S}^{\text{th}}_{A}\tilde{S}^{\text{th}}_{B}}-(K_{g}+|d_{AB}|+|d_{BA}|)\right), \quad L_{B} = \frac{1}{t}\left(\sqrt{\tilde{S}^{\text{th}}_{A}\tilde{S}^{\text{th}}_{B}}-(K_{g}+|d_{AB}|+|d_{BA}|)\right)
\end{equation}
Since we already assumed $\tilde{S}^{\text{th}}_{A}\tilde{S}^{\text{th}}_{B} \geq (K_{g}+|d_{AB}|+|d_{BA}|)^2$, both $L_{A},L_{B} \geq 0$ and are thus purely local noise terms that can be produced using just local weak measurements. Thus, we have shown that whenever $\eqref{damping LOCC condition}$ is satisfied, we are able to construct a genuine LOCC scheme that preserves separability. Of course, this bound only holds in the case where the stringent parallel condition $\hat{\bm{D}}_{\alpha\beta} = d_{\alpha\beta}\hat{\bm{S}}_{\beta}$ holds. This is not generally true and in that case, our specific LOCC scheme does not hold.

\section{VI. Derivation of General Interaction Born Markov Equation}
\label{general BM appendix}

The general interaction Hamiltonian is now:
\begin{equation}
\begin{split}
        \hat{H}_{\text{sys}} &= \frac{1}{2}\hat{\bm{\xi}}^{T}(\bm{M}+\bm{G})\hat{\bm{\xi}}\\
       \hat{H}_{\text{bath}} &= \sum_{j}\left(\frac{\hat{p}_{j}^2}{2m_{j}}+\frac{1}{2}m_{j}\omega_{j}^2\hat{q}_{j}^2\right)\\
    \hat{H}_{\text{sys-bath}} &= \hat{\bm{\xi}}_{A}^{T}\hat{\bm{F}}_{A}+\hat{\bm{\xi}}^{T}_{B}\hat{\bm{F}}_{B} = \sum_{\alpha=\{A,B\}}\sum_{i=1}^{2n_{\alpha}}(\hat{\bm{\xi}}^{T}_{\alpha})_{i}\otimes\hat{\bm{F}}_{\alpha,i} = \sum_{\mu=1}^{2(n_{A}+n_{B})}\hat{S}_{\mu}\otimes \hat{E}_{\mu}
\end{split}
\end{equation}
where $\hat{\bm{\xi}}^{T} = (\hat{\bm{\xi}}_{A}\otimes\mathbf{1}_{B}$, $\mathbf{1}_{A}\otimes \hat{\bm{\xi}}_{B})$, $\bm{M}^{T}=\bm{M}$ is the self interactions within each subsystem, $\bm{G}^{T} = \bm{G}$ is the coupling matrix between system $A$ and $B$ such that $\hat{\bm{\xi}}^{T}\bm{G}\hat{\bm{\xi}} = \hat{\bm{\xi}}^{T}_{A}\bm{Q}_{G}\hat{\bm{\xi}}_{B}+\hat{\bm{\xi}}^{T}_{B}\bm{Q}^{T}_{G}\hat{\bm{\xi}}_{A}$, $n_{A},n_{B}$ are the number of modes in the $A,B$ subsystems respectively, and:
\begin{equation}
    \begin{split}
        \hat{S}_{\mu} &\equiv \begin{cases}
        (\hat{\bm{\xi}}_{A})_{i}, \hspace{0.2cm}\mu = i \leq 2n_{A}\\
        (\hat{\bm{\xi}}_{B})_{i}, \hspace{0.2cm} \mu = 2n_{A}+i, 1\leq i\leq2n_{B}
        \end{cases}\\
        \hat{E}_{\mu} &\equiv \begin{cases}
            \hat{\bm{F}}_{A,i}, \hspace{0.2cm}\mu = i \leq 2n_{A}\\
            \hat{\bm{F}}_{B,i},\hspace{0.2cm}\mu=2n_{A}+i
        \end{cases}
    \end{split}
\end{equation}
We also define the indexing map:
\begin{equation}
    \alpha(\mu) = \begin{cases}
        A, \mu= 1,...,2n_{A}\\
        B,\mu = 2n_{A}+1,...,2n_{A}+2n_{B}
    \end{cases}, \hspace{0.2cm}i(\mu) = \begin{cases}
        \mu,\alpha(\mu) = A,\\
        \mu-2n_{A}, \alpha(\mu) = B
    \end{cases}
\end{equation}
Adopting the same notation from before, we now have:
\begin{equation}
    \mathcal{C}_{\mu\nu}(\tau) = \langle \hat{E}_{\mu}(\tau)\hat{E}_{\nu}\rangle \approx \frac{1}{2}\langle \{\hat{\bm{F}}_{\alpha(\mu),i(\mu)}(\tau)\hat{\bm{F}}_{\alpha(\nu),i(\nu)}\}\rangle = \delta(\tau)\delta_{\alpha(\mu),\alpha(\nu)}\bm{Q}^{\text{th}}_{\alpha(\mu),i(\mu)i(\nu)}
\end{equation}
where we assumed white noise spectrum $\frac{1}{2}\langle\{\hat{F}_{\alpha,i}(\tau),\hat{F}_{\beta,j}\}\rangle = \delta(\tau) \delta_{\alpha,\beta}\bm{Q}^{\text{th}}_{\alpha,ij}$ and have dropped the commutator (friction) term.

We then have:
\begin{equation}
    \begin{split}
        \hat{\bm{B}}_{\mu}&\equiv\int_{0}^{\infty}d\tau\sum_{\nu}\mathcal{C}_{\mu\nu}(\tau)\hat{\bm{S}}_{\nu}(-\tau) = \frac{1}{2}\sum_{\nu}\delta_{\alpha(\mu),\alpha(\nu)}\bm{Q}^{\text{th}}_{\alpha(\mu),i(\mu)i(\nu)}(\hat{\bm{\xi}}_{\alpha(\nu)})_{i(\nu)}\\
        \hat{\bm{C}}_{\mu} &\equiv \int_{0}^{\infty}d\tau\sum_{\nu}\mathcal{C}_{\nu\mu}(-\tau)\hat{\bm{S}}_{\nu}(-\tau)=\frac{1}{2}\sum_{\nu}\delta_{\alpha(\mu),\alpha(\nu)}\bm{Q}^{\text{th}}_{\alpha(\nu),i(\nu)i(\mu)}(\hat{\bm{\xi}}_{\alpha(\nu)})_{i(\nu)}
    \end{split}
\end{equation}
Now plug into the Born-Markov Master equation:
\begin{equation}
\begin{split}
    \dot{\hat{\bm{\rho}}}_{S}(t) &= -i[\hat{H}_{\text{sys}},\hat{\bm{\rho}}_{S}(t)]-\sum_{\mu}[(\hat{\bm{\xi}}_{\alpha(\mu)})_{i(\mu)},\frac{1}{2}\sum_{\nu}\delta_{\alpha(\mu),\alpha(\nu)}\bm{Q}^{\text{th}}_{\alpha(\mu),i(\mu)i(\nu)}(\hat{\bm{\xi}}_{\alpha(\nu)})_{i(\nu)}\hat{\bm{\rho}}_{S}(t)]\\
    &-\sum_{\mu}[\hat{\bm{\rho}}_{S}(t)\frac{1}{2}\sum_{\nu}\delta_{\alpha(\mu),\alpha(\nu)}\bm{Q}^{\text{th}}_{\alpha(\nu),i(\nu)i(\mu)}(\hat{\bm{\xi}}_{\alpha(\nu)})_{i(\nu)},(\hat{\bm{\xi}}_{\alpha(\mu)})_{i(\mu)}]\\
    &=-i[\hat{H}_{\text{sys}},\hat{\bm{\rho}}_{S}(t)]-\sum_{\alpha=\{A,B\},ij}[\hat{\bm{\xi}}_{\alpha,i},\frac{1}{2}\bm{Q}^{\text{th}}_{\alpha,ij}\hat{\bm{\xi}}_{\alpha,j}\hat{\bm{\rho}}_{S}(t)]-\sum_{\alpha=\{A,B\},ij}[\hat{\bm{\rho}}_{S}(t)\frac{1}{2}\bm{Q}^{\text{th}}_{\alpha,ij}\hat{\bm{\xi}}_{\alpha,i},\hat{\bm{\xi}}_{\alpha,j}]\\
    &=-i[\hat{H}_{HO},\hat{\bm{\rho}}_{S}(t)]-\frac{i}{2}[\hat{\bm{\xi}}^{T}_{A}\bm{Q}_{G}\hat{\bm{\xi}}_{B}+\hat{\bm{\xi}}^{T}_{B}\bm{Q}^{T}_{G}\hat{\bm{\xi}}_{A},\hat{\bm{\rho}}_{S}(t)]+\sum_{\alpha,ij}\bm{Q}^{\text{th}}_{\alpha,ij}\left[\hat{\bm{\xi}}_{\alpha,i}\hat{\bm{\rho}}_{S}(t)\hat{\bm{\xi}}_{\alpha,j}-\frac{1}{2}\{\hat{\bm{\xi}}_{\alpha,j}\hat{\bm{\xi}}_{\alpha,i},\hat{\bm{\rho}}_{S}(t)\}\right]
\end{split}
\end{equation}
where in the final line we used $\bm{Q}^{\text{th}} = \bm{Q}^{T,\text{th}}$ and in the final line we use:
\begin{equation}
    \sum_{\mu,\nu} = \sum_{\mu\in A,\nu\in A}+\sum_{\mu\in A,\nu\in B}+\sum_{\mu\in B,\nu \in A}+\sum_{\mu\in B,\nu\in B} \overset{\delta_{\alpha(\mu),\alpha(\nu)}}{\Longrightarrow} \sum_{\mu\in A,\nu\in A}+\sum_{\mu\in B,\nu\in B} = \sum_{\alpha=\{A,B\}}\sum_{ij=1}^{2n_{\alpha}}
\end{equation}

\section{VII. Full GKSL Proof for General Interaction}
\label{general GKSL appendix}

The target Born-Markov Master equation to reproduce is:
\begin{equation}
    \dot{\hat{\bm{\rho}}}_{S}(t)=-i[\hat{H}_{HO},\hat{\bm{\rho}}_{S}(t)]-\frac{i}{2}[\hat{\bm{\xi}}^{T}_{A}\bm{Q}_{G}\hat{\bm{\xi}}_{B}+\hat{\bm{\xi}}^{T}_{B}\bm{Q}^{T}_{G}\hat{\bm{\xi}}_{A},\hat{\bm{\rho}}_{S}(t)]+\sum_{\alpha,ij}\bm{Q}^{\text{th}}_{\alpha,ij}\left[\hat{\bm{\xi}}_{\alpha,i}\hat{\bm{\rho}}_{S}(t)\hat{\bm{\xi}}_{\alpha,j}-\frac{1}{2}\{\hat{\bm{\xi}}_{\alpha,j}\hat{\bm{\xi}}_{\alpha,i},\hat{\bm{\rho}}_{S}(t)\}\right] 
\end{equation}
First assume $\bm{Q}^{\text{th}}_{\alpha} \succ0$ is real symmetric (recall correlation matrices are necessarily PSD), so that there exist a (possibly rectangular) real matrix $\bm{R}_{A}$ such that $\bm{Q}^{\text{th}}_{A} = \bm{R}^{T}_{A}\bm{R}_{A}$. We then define local Hermitian operators:
\begin{equation}
    \hat{Z}_{A,k} \equiv (\bm{R}_{A}\hat{\bm{\xi}}_{A})_{k} = \sum_{\mu}(\bm{R}_{A})_{k\mu}\hat{\bm{\xi}}_{A\mu}
\end{equation}
which is Hermitian because it is a linear combination of Hermitian operators.

Now note the following:
\begin{equation}
    \begin{split}
        \sum_{k}(\hat{Z}_{A,k}\hat{\bm{\rho}}_{S}(t)\hat{Z}_{A,k}-\frac{1}{2}\{\hat{Z}^{2}_{A,k},\hat{\bm{\rho}}_{S}(t)\}) &= \sum_{\mu\nu}\left(\sum_{k}(\bm{R}_{A})_{k\mu}(\bm{R}_{A})_{k\nu}\right)\left(\hat{\bm{\xi}}_{A\mu}\hat{\bm{\rho}}_{S}(t)\hat{\bm{\xi}}_{A\nu}-\frac{1}{2}\{\hat{\bm{\xi}}_{A\mu}\hat{\bm{\xi}}_{A\nu},\hat{\bm{\rho}}_{S}(t)\}\right)\\
        &=\sum_{\mu\nu}\left(\bm{R}^{T}_{A}\bm{R}_{A}\right)_{\mu\nu}\left(\hat{\bm{\xi}}_{A\mu}\hat{\bm{\rho}}_{S}(t)\hat{\bm{\xi}}_{A\nu}-\frac{1}{2}\{\hat{\bm{\xi}}_{A\mu}\hat{\bm{\xi}}_{A\nu},\hat{\bm{\rho}}_{S}(t)\}\right)\\
        &=\sum_{\mu\nu}(\bm{Q}^{\text{th}}_{A})_{\mu\nu}\left(\hat{\bm{\xi}}_{A\mu}\hat{\bm{\rho}}_{S}(t)\hat{\bm{\xi}}_{A\nu}-\frac{1}{2}\{\hat{\bm{\xi}}_{A\mu}\hat{\bm{\xi}}_{A\nu},\hat{\bm{\rho}}_{S}(t)\}\right)
    \end{split}
\end{equation}
The same holds for the $B$ equation with $\bm{Q}^{\text{th}}_{B} = \bm{R}_{B}^{T}\bm{R}_{B}$. 

Now define the whitened coupling matrix:
\begin{equation}
    \widetilde{\bm{Q}}_{G} \equiv (\bm{Q}^{\text{th}}_{A})^{-1/2}\bm{Q}_{G}(\bm{Q}^{\text{th}}_{B})^{-1/2}
\end{equation}
Since $\widetilde{\bm{Q}}_{G}$ is real, use the real SVD:
\begin{equation}
    \widetilde{\bm{Q}}_{{G}} = \bm{U\Sigma V}^{T}, \hspace{0.2cm}\bm{U}\in O(2n_{A}), \hspace{0.2cm} \bm{V}\in O(2n_{B}), \hspace{0.2cm} \bm{\Sigma} = \text{diag}(s_{1},...,s_{r}), s_{k}\geq0, r\leq\min(2n_{A},2n_{B})
\end{equation}

We then define the (still Hermitian) operator vectors:
\begin{equation}
    \hat{\bm{Y}}_{A} \equiv (\bm{Q}^{\text{th}}_{A})^{1/2}\hat{\bm{\xi}}_{A}, \hspace{0.2cm} \hat{\bm{Y}}_{B} \equiv (\bm{Q}^{\text{th}}_{B})^{1/2}\hat{\bm{\xi}}_{B}, \hspace{0.2cm}\hat{\bm{Z}}_{A} \equiv \bm{U}^{T}\hat{\bm{Y}}_{A}, \hspace{0.2cm} \hat{\bm{Z}}_{B} \equiv \bm{V}^{T}\hat{\bm{Y}}_{B} 
\end{equation}
Now the interaction terms become:
\begin{equation}
    \begin{split}
\hat{\bm{\xi}}^{T}_{A}\bm{Q}_{G}\hat{\bm{\xi}}_{B}+\hat{\bm{\xi}}^{T}_{B}\bm{Q}_{G}^{T}\hat{\bm{\xi}}_{A} &= \hat{\bm{Y}}^{T}_{A}\widetilde{\bm{Q}}_{G}\hat{\bm{Y}}_{B}+\hat{\bm{Y}}^{T}_{B}\widetilde{\bm{Q}}^{T}_{G}\hat{\bm{Y}}_{A} = \hat{\bm{Z}}^{T}_{A}\bm{\Sigma}\hat{\bm{Z}}_{B}+\hat{\bm{Z}}^{T}_{B}\bm{\Sigma}\hat{\bm{Z}}_{A} = \sum_{k=1}^{r}2s_{k}\hat{Z}_{A,k}\hat{Z}_{B,k}
    \end{split}
\end{equation}
where we use the fact that since $\bm{Q}^{\text{th}}_{A,B}$ are positive definite, there exist symmetric square root such that $\bm{Q}^{\text{th}}_{A,B} = (\bm{Q}^{\text{th}}_{A,B})^{1/2}(\bm{Q}^{\text{th}}_{A,B})^{1/2} = (\bm{Q}^{\text{th}}_{A,B})^{1/2,T}(\bm{Q}^{\text{th}}_{A,B})^{1/2}$.
Now note that:
\begin{equation}
    \begin{split}
        &\sum_{\mu\nu}(\bm{Q}^{\text{th}}_{A})_{\mu\nu}\left(\hat{\bm{\xi}}_{A\mu}\hat{\bm{\rho}}_{S}(t)\hat{\bm{\xi}}_{A\nu}-\frac{1}{2}\{\hat{\bm{\xi}}_{A\mu}\hat{\bm{\xi}}_{A\nu},\hat{\bm{\rho}}_{S}(t)\}\right) \\
        &= \sum_{\mu\nu}(\bm{Q}^{\text{th}}_{A})_{\mu\nu}\sum_{ij}(\bm{Q}^{\text{th}}_{A})^{-1/2}_{\mu i}(\bm{Q}^{\text{th}}_{A})^{-1/2}_{\nu j}\left(\hat{\bm{Y}}_{Ai}\hat{\bm{\rho}}_{S}(t)\hat{\bm{Y}}_{A j}-\frac{1}{2}\{\hat{\bm{Y}}_{A i}\hat{\bm{Y}}_{A j},\hat{\bm{\rho}}_{S}(t)\}\right)\\
        &=\sum_{ij}((\bm{Q}^{\text{th}}_{A})^{-1/2}\bm{Q}^{\text{th}}_{A}(\bm{Q}^{\text{th}}_{A})^{-1/2})_{ij}\left(\hat{\bm{Y}}_{Ai}\hat{\bm{\rho}}_{S}(t)\hat{\bm{Y}}_{A j}-\frac{1}{2}\{\hat{\bm{Y}}_{A i}\hat{\bm{Y}}_{A j},\hat{\bm{\rho}}_{S}(t)\}\right)\\
        &=\sum_{ij}\delta_{ij}\left(\hat{\bm{Y}}_{Ai}\hat{\bm{\rho}}_{S}(t)\hat{\bm{Y}}_{A j}-\frac{1}{2}\{\hat{\bm{Y}}_{A i}\hat{\bm{Y}}_{A j},\hat{\bm{\rho}}_{S}(t)\}\right)\\
        &=\sum_{i}\sum_{kl}(\bm{U}_{A})_{ik}(\bm{U}_{A})_{il}\left(\hat{\bm{Z}}_{Ak}\hat{\bm{\rho}}_{S}(t)\hat{\bm{Z}}_{A l}-\frac{1}{2}\{\hat{\bm{Z}}_{A k}\hat{\bm{Z}}_{A l},\hat{\bm{\rho}}_{S}(t)\}\right)\\
        &=\sum_{k}\left(\hat{\bm{Z}}_{Ak}\hat{\bm{\rho}}_{S}(t)\hat{\bm{Z}}_{A k}-\frac{1}{2}\{\hat{\bm{Z}}_{A k}^{2},\hat{\bm{\rho}}_{S}(t)\}\right)
    \end{split}
\end{equation}
where in the second to last line, we used $\sum_{i}(\bm{U}_{A})_{ik}(\bm{U}_{A})_{il} = \sum_{i}(\bm{U}^{T}_{A})_{ki}(\bm{U}_{A})_{il} = (\bm{U}^{T}_{A}\bm{U}_{})_{kl} = \delta_{kl}$.
An identical calculation holds for the $B$ diffusion term. Thus, the Born-Markov Master equation can be represented as a decoupled sum:
\begin{equation}
\label{general int to match BM}
    \mathcal{L}^{\text{tot}}(\rho) = -i[\hat{H}_{HO},\hat{\bm{\rho}}_{S}(t)]-i\sum_{k=1}^{r}s_{k}[\hat{Z}_{A,k}\hat{Z}_{B,k},\hat{\bm{\rho}}_{S}(t)]+\sum_{\alpha=\{A,B\}}\sum_{k=1}^{2n_{\alpha}}\left(\hat{\bm{Z}}_{\alpha k}\hat{\bm{\rho}}_{S}(t)\hat{\bm{Z}}_{\alpha k}-\frac{1}{2}\{\hat{\bm{Z}}_{\alpha k}^{2},\hat{\bm{\rho}}_{S}(t)\}\right)
\end{equation}

Now consider local Gaussian weak measurements of $\hat{Z}_{A,k}$ and $\hat{Z}_{B,k}$ with strengths $\gamma_{A,k}>0$ and $\gamma_{B,k}>0$ with outcomes $y_{A,k},y_{B,k}\in\mathbb{R}$ via the Kraus operators
\begin{equation}
    \begin{split}
           K_{A,k}(y_{A,k}) &\equiv \left(\frac{2\gamma_{A,k}dt}{\pi}\right)^{1/4}\exp\left[-\gamma_{A,k}dt(y_{A,k}-\hat{Z}_{A,k})^2\right] \\
        K_{B,k}(y_{B,k}) &\equiv \left(\frac{2\gamma_{B,k}dt}{\pi}\right)^{1/4}\exp\left[-\gamma_{B,k}dt(y_{B,k}-\hat{Z}_{B,k})^2\right]
    \end{split}
\end{equation}
From the scalar equations we know this satisfies $\int dy_{A,k}K^{\dagger}_{A,k}K_{A,k}=1$ and same for B. Given the outcomes, we then apply symmetric feedback local unitaries:
\begin{equation}
    \begin{split}
          U_{A,k}(y_{B,k}) &\equiv \exp\left(-i\lambda_{A,k}y_{B,k}\hat{Z}_{A,k}dt\right)\\
        U_{B,k}(y_{A,k}) &\equiv \exp\left(-i\lambda_{B,k}y_{A,k}\hat{Z}_{B,k}dt\right)
    \end{split}
\end{equation}
with real gains $\lambda_{A,k},\lambda_{B,k}\in\mathbb{R}$. The overall Kraus operator conditioned on $(y_{A,k},y_{B,k})$ is:
\begin{equation}
            K_{k}(y_{A,k},y_{B,k}) \equiv [U_{A,k}(y_{B,k})K_{A,k}(y_{A,k})]\otimes[U_{B,k}(y_{A,k})K_{B,k}(y_{B,k})]
\end{equation}
which is manifestly a product Kraus operator. Thus, the unconditional maps is:
\begin{equation}
    \mathcal{E}^{(k)}(\rho) = \int dy_{A,k}dy_{B,k}K_{k}(y_{A,k},y_{B,k})\rho K_{k}^{\dagger}(y_{A,k},y_{B,k})
\end{equation}
is thus LOCC and the dynamics are separable. Note that this is the same setup as the Rank-1 procedure. Porting over the results from \eqref{rank one LOCC res}, we get the master equation:
\begin{equation}
    \mathcal{L}^{(k)}(\rho) = -i\lambda_{k}[\hat{Z}_{A,k}\hat{Z}_{B,k},\rho]+\left(\gamma_{A,k}+\frac{\lambda_{k}^2}{4\gamma_{B,k}}\right)\mathcal{D}[\hat{Z}_{A,k}]\rho + \left(\gamma_{B,k}+\frac{\lambda_{k}^2}{4\gamma_{A,k}}\right)\mathcal{D}[\hat{Z}_{B,k}]\rho
\end{equation}
Now match to the total BM equation \eqref{general int to match BM}, we require:
\begin{equation}
    \lambda_{k} = s_{k}, \hspace{0.2cm} \gamma_{A,k}+\frac{s_{k}^{2}}{4\gamma_{B,k}} = 1, \hspace{0.2cm} \gamma_{B,k}+\frac{s_{k}^{2}}{4\gamma_{A,k}}=1
\end{equation}
Solving gives:
\begin{equation}
    \gamma_{B,k}=\gamma_{A,k} =\frac{1}{2}\left(1\pm\sqrt{1-s_{k}^{2}}\right)
\end{equation}

Now consider the matrix:
\begin{equation}\bm{M}=
    \begin{pmatrix}
        \bm{Q}^{\text{th}}_{A} & \bm{Q}_{G}\\
        \bm{Q}_{G}^{T} & \bm{Q}^{\text{th}}_{B}
    \end{pmatrix}
\end{equation}
Suppose $M \succ 0$. Then we know the Schur complement satisfies:
\begin{equation}
    \bm{Q}^{\text{th}}_{B}-\bm{Q}^{T}_{G}(\bm{Q}^{\text{th}}_{A})^{-1}\bm{Q}_{G}\succ 0
\end{equation}
Now multiply on both sides by $(\bm{Q}^{\text{th}}_{B})^{-1/2}$:
\begin{equation}
    1-(\bm{Q}^{\text{th}}_{B})^{-1/2}\bm{Q}^{T}_{G}(\bm{Q}^{\text{th}}_{A})^{-1}\bm{Q}_{G}(\bm{Q}^{\text{th}}_{B})^{-1/2}\succ0
\end{equation}
This can be rewritten as
\begin{equation}
    1-[(\bm{Q}^{\text{th}}_{A})^{-1/2}\bm{Q}_{G}(\bm{Q}^{\text{th}}_{B})^{-1/2}]^{T}[(\bm{Q}^{\text{th}}_{A})^{-1/2}\bm{Q}_{G}(\bm{Q}^{\text{th}}_{B})^{-1/2}]=1-\widetilde{\bm{Q}}_{G}^{T}\widetilde{\bm{Q}}_{G}=1-\bm{V}(\bm{\Sigma})^2\bm{V}^{T}\succ0
\end{equation}
Thus we know $1-s_{k}^2 > 0$ so $s_{k} < 1 \forall k$. If this is satisfied then the $\gamma_{A,k},\gamma_{B,k}$ are real valued. We now just need to sum over all $k$ to get:
\begin{equation}
    \mathcal{L}^{\text{tot}}(\rho) = -i[\hat{H}_{HO},\hat{\bm{\rho}}_{S}(t)]+\sum_{k=1}^{r}\mathcal{L}^{(k)}(\rho)+\sum_{\alpha=\{A,B\}}\sum_{k=r+1}^{2n_{\alpha}}\left(\hat{\bm{Z}}_{\alpha k}\hat{\bm{\rho}}_{S}(t)\hat{\bm{Z}}_{\alpha k}-\frac{1}{2}\{\hat{\bm{Z}}_{\alpha k}^{2},\hat{\bm{\rho}}_{S}(t)\}\right)
\end{equation}
Comparing with \eqref{general int to match BM}, we see that the only terms left are only local (the local unitary with $\hat{H}_{HO}$ and extra $\mathcal{D}[\hat{Z}_{\alpha k}]\hat{\bm{\rho}}$ terms) so these can be implemented individually and independently via LOCC schemes. More explicitly, the extra $\mathcal{D}[\hat{Z}_{\alpha k}]\hat{\bm{\rho}}$ terms can be implemented via pure weak measurement (picking feedback strenght $\lambda = 0$ in the measurement-feedback schemes). Finally, note that because $[\hat{Z}_{A,k},\hat{Z}_{A,j}] \neq 0$ for $k\neq j$, a Lie-Trotter decomposition would be needed to implement $e^{t\sum_{k}\mathcal{L}^{(k)}}$. Nevertheless, each infinitesimal Trotter factor is LOCC, so the dynamics at every finite Trotter step is LOCC, and so will the continuous-time limit of the full procedure.

The above argument can be generalized to $\bm{Q}^{\text{th}}_{A,B}$ PSD. Since $\bm{Q}^{\text{th}}_{A,B}$ are PSD, there exists square roots $(\bm{Q}^{\text{th}}_{A,B})^{1/2}$ such that$\bm{Q}^{\text{th}}_{A,B} = [(\bm{Q}^{\text{th}}_{A,B})^{1/2}]^{T}(\bm{Q}^{\text{th}}_{A,B})^{1/2}=(\bm{Q}^{\text{th}}_{A,B})^{1/2}(\bm{Q}^{\text{th}}_{A,B})^{1/2}$. Now the key fact is the following \cite{horn2012matrix}:
\begin{equation}
    \begin{pmatrix}
        \bm{Q}^{\text{th}}_{A} & \bm{Q}_{G}\\
        \bm{Q}_{G}^{T} & \bm{Q}^{\text{th}}_{B}
    \end{pmatrix} \succeq 0 \Longleftrightarrow \exists \bm{X}\in \mathbb{R}^{2n_{A}\times 2n_{B}}\Big|\bm{Q}_{G} = (\bm{Q}^{\text{th}}_{A})^{1/2}\bm{X}(\bm{Q}^{\text{th}}_{B})^{1/2} \ \ \text{and} \ \ ||\bm{X}||_{\text{op}} \leq 1
\end{equation}
Since $\bm{X}$ is real, we can perform SVD on it.  $\bm{X}$ then plays the same role as the whitened coupling matrix $\widetilde{\bm{Q}}_{G}$ and the same argument follows.

Unlike in the rank-one scalar case, this same technique cannot be adapted for the correlated noise case where $\bm{Q}^{\text{th}}_{i}$ is no longer block diagonal. Indeed in that case, one finds that the total Lindbladian is no longer fully decoupled even after performing the proper matrix transformations. As such, it is likely that a more advanced LOCC scheme is required to reproduce the most general dynamics.

\end{document}